\title{Quasi-parametric rates for Sparse Multivariate Functional Principal Components Analysis}
\author{Ryad Belhakem}
\date{\today}
\begin{document}
\maketitle
\begin{abstract}
This work aims to give non-asymptotic results for estimating the first principal component of a multivariate random process.
We first define the covariance function and the covariance operator in the multivariate case. We then define a projection operator. This operator can be seen as a reconstruction step from the raw data in the functional data analysis context. Next, we show that the eigenelements can be expressed as the solution to an optimization problem, and we introduce the LASSO variant of this optimization problem and the associated plugin estimator. Finally, we assess the estimator's accuracy.
We establish a minimax lower bound on the mean square reconstruction error of the eigenelement, which proves that the procedure has an optimal variance in the minimax sense.

\end{abstract}

\section{Introduction}
    We consider the $D-$multivariate setting and our object of interest is functional data observed on a fixed discretization grid defined as $\{t_h=\frac{h}{p-1};h=0,\dots, p-1\}$. For $D\in\mathbb{N}^*$, we assume the observations to be corrupted by random noise. Hence we consider the following statistical model: We have for all $i\in\{1,\dots,n\}$ and all $h\in\{0,\dots,p-1\}$:
    
\begin{align}\label{model:multi}
     \textbf{Y}_i(t_h)&=(Y_{i,1}(t_h),\dots,Y_{i,D}(t_h))\nonumber \\&= (Z_{i,1}(t_h)+\epsilon_{i,1,h},\dots,Z_{i,D}(t_h)+\epsilon_{i,D,h})\nonumber\\&=\textbf{Z}_i(t_h)+\bf{\epsilon}_{i,h},
    \end{align}
    meaning for each component $d\in\{1,\dots,D\}$ and for all $i\in\{1,\hdots,n\}$ we have:
      \begin{equation*}
    	Y_{i,d}(t_h)=Z_{i,d}(t_h)+\varepsilon_{i,d,h}, 
    \end{equation*}
     with the $\bf{\varepsilon}_{i,h}$'s being independent centered Gaussian vectors of errors with covariance $\sigma^2I_D$. The errors are assumed to be independent of the $\textbf{Z}_i$'s and the $\textbf{Z}_i$'s are independent and identically distributed with the same distribution as $\textbf{Z}$. Here, we assume that the grid is fixed and regular with $p$ points, note that our approach could be generalized to the case where $|t_h-h/p|\leq 1/p$. With the modern ability to record multiple data streams at a very fine timescale, the need for new tools to perform analysis on all these streams at once has arisen. From a functional data analysis (FDA) perspective, these data can be considered as realizations of a multivariate random process on a dense grid. In this context, Multivariate functional principal components analysis (MfPCA) provides an interesting analysis framework.
As MfPCA is an extension of functional principal component analysis (fPCA) to the multivariate case, it inherits most of its properties
regarding representation. Thus as fPCA, it is used as a preprocessing tool. For example, we can cite the modeling and forecasting of multi-population mortality (see \cite{kakin}) and identification of biomarkers for accurate diagnosis of Alzheimer's disease in an early stage (see \cite{happ}). The notable advantage of such an approach is its capacity to handle all the data presented at once, i.e., accounting for variability through time (intra-functional correlations) and space (correlations between functions).
    
    \subsection{Motivation and model}

To the best of our knowledge, only a few articles investigate convergence rates. In \cite{happ} a truncation approach was adopted; this can be formulated as a two-step procedure. The authors first fix $({\phi}_\lambda)_{\lambda\in\Lambda}$ an orthonormal basis of the space considered, then they choose $M$ components, i.e. $\{\phi_{\lambda_1},\dots,\phi_{\lambda_M}\}$, and they project the curves on those $M$ components. Finally they compute the eigenfunction after the projection step. The obtained rates using this approach are at best of the order of $O(\frac{M^3}{n})$.
In (\cite{Hsing} and \cite{Chiou14}) a nonparametric approach was adopted, based on kernel estimation, assuming that the covariance is twice differentiable. This results in rates depending on the size of the bandwidth $h$ of the order of $O_p(h^4 +\frac{\log(n)}{nh})$.
In \cite{BPRR21} we showed that in the univariate case, assuming $Z$ to be a $\alpha-$Hölder continous function, the non-penalised estimator converges at minimax rates ($O(n^{-1})+O(p^{-2\alpha})$), however in the $D$-dimensional context, the same approach gives rates of order ($D^2 n^{-1}+D^2p^{-2\alpha}$). So, in a high dimensional context where $D\gg n$, those rates are non-relevant.\\
The motivation of this work is to define an estimation process for the principal component that produces relevant results despite the high dimensional setting, and to investigate the effect of the noise (which is assumed to be Gaussian and i.i.d), $n$ the number of replicates, $p$ the size of the grid, the sparsity, and the regularization on the estimator of the first principal component in the multivariate setting.\\
In this article, we establish a minimax lower bound on the reconstruction error of the first eigenelement. Next, we extend the estimation process seen in \cite{BPRR21} to the multivariate case. Once the estimator of the covariance operator is defined, we focus our research on establishing a new optimization problem that computes the first principal component. Finally, we base our methodology on a Lasso-penalized M-estimator. We show that any stationary point achieves the minimax optimal variance. The primary conditions we impose are: regularity of $\textbf{Z}$ to be at least a $\alpha-$Hölder continuous function, sparsity in the first eigenfunction that cannot be larger than $\frac{\sqrt{n}}{\log(pD)}$, and the dimension $D$ cannot be larger than $p^\alpha$.

\subsection{Organization of the article}
We first define $\mathbb{H}$ and the covariance operator associated with $\textbf{Z}$ in Section \ref{sec:definition}. We then define the regularity class of the curve $\textbf{Z}$ in Section \ref{sec:reg:multi}. In Section \ref{sec:lowerbound:multi} we show a lower bound in the minimax sense on this class. We define the optimization problem and its LASSO variant and the resulting estimator in Section \ref{sec:estim:multi}, we then establish an upper bound in Section \ref{sec:upperspecific:multi} for histograms. Finally proofs are presented in Section \ref{sec:multi:proof}.

\section{Definitions}\label{sec:definition}
   In what follows, we define the space $\mathbb{H}$, the scalar product of $\mathbb{H}$ and the norm associated with it and we explicit the form of the covariance function and the covariance operator.
   \begin{definition}\label{def:space:H}
 Let $D\in\mathbb{N}^*$, we consider $\textbf{f}=(f_1,\dots,f_D)^T, \text{with all $f_d$'s} \in L^2([0,1])$ \\and $\textbf{g}=(g_1,\dots,g_D)^T, \text{with all $g_d$'s}\in L^2([0,1])$, two vectors of functions. We define $\langle\cdot,\cdot \rangle_\mathbb{H}$ such that: 
 \begin{align*}
     \langle\textbf{f},\textbf{g}\rangle_\mathbb{H}&=\sum_{d=1}^D\int_0^1 f_d(t)g_d(t)dt,\\
     \|\textbf{f}\|_\mathbb{H}^2&=\sum_{d=1}^D\|f_d\|_{\L^2}^2= \langle\textbf{f},\textbf{f}\rangle_\mathbb{H},
 \end{align*}
 and $\mathbb{H}$ such that:
 $$\mathbb{H}:=\big\{\textbf{f}:[0,1]\rightarrow \mathbb{R}^D;\quad \|\textbf{f}\|_\mathbb{H}<\infty \big\}. $$

   \end{definition}
In the sequel we assume that $\textbf{Z}\in\mathbb{H}$ and we assume that $\textbf{Z}$ is a centered random continuous function such that $\mathbb{E}[\|\textbf{Z}\|_\mathbb{H}^2]\leq \infty$. Note that as defined $(\mathbb{H},\langle\cdot,\cdot\rangle_\mathbb{H},\|\cdot\|_\mathbb{H})$ is a separable Hilbert space. 
\paragraph{Notations:}We denote $\|\cdot\|$ the $\L^2:=L^2([0,1])$-norm and $\|\cdot\|_{\ell_2}$ the $\ell_2$-norm for a vector. For $P$ a probability measure, we denote
$\mathbb{E}$ the associated expectation. We denote $P_\textbf{Z}$ the distribution of the process $\textbf{Z}$
and $\mathbb{E}_\textbf{Z}$ the associated expectation. Also we define the norms on $\mathbb{H}$ such that $ \|\textbf{h}\|_1 = \sum_{d=1}^D\|h_d\|_1$, $\|\textbf{h}\|_\infty = \max_{d\in\{1,\dots,D\}}\|h_d\|_\infty$ and $\|\textbf{h}\|_0=\sum_{d=1}^D \1_{h_d\neq 0}$ for any $\textbf{h}\in\mathbb{H}$. We define $sign(u)=1_{\{u\geq 0\}}-1_{\{u<0\}}$ for any $u\in\mathbb{R}$.
   
   \begin{definition}\label{def:operator}
   We define $K$ the covariance function of $\textbf{Z}$ such that for all $(s,t) \in [0,1]^2$ and for all $(d,d')\in\{1,\dots,D\}^2$:
\begin{eqnarray*}
 K_{d,d'}(s,t): = (K(s,t))_{d,d'} = \mathbb{E}(Z_d(s)Z_{d'}(t)),
\end{eqnarray*}
and 
\[ K(s,t) = \mathbb{E}(\textbf{Z}(s)\textbf{Z}(t)^T)\]
with the associated integral operator $\Gamma$, such that for all $\textbf{f} \in \mathbb{H}$,
\begin{eqnarray*}
\quad \Gamma(\textbf{f})(\bullet) &=& \int_0^1 K(s,\bullet)\textbf{f}(s)ds,
\end{eqnarray*}
 We also define the Hilbert-Schmidt norm and the associated scalar product:
let $\Gamma,\Gamma' \in L(\mathbb{H})$ the space of linear operators on $\mathbb{H}$, when it makes sense, $$\langle \Gamma,\Gamma'\rangle_{HS}=\sum_{i\in \mathbb{N}^*}\langle \Gamma(\textbf{e}_i),\Gamma'(\textbf{e}_i)\rangle_{\mathbb{H}},$$
denoting by $(\textbf{e}_i)_{i\in \mathbb{N}^*}$ an orthonormal basis of $\mathbb{H}$. Note that the scalar product $\langle\cdot,\cdot\rangle_{HS}$ is independent of the choice of the basis,
and $\|\cdot\|_{HS}=\sqrt{\langle\cdot,\cdot\rangle}_{HS}$ is the associated norm.  
   \end{definition}
   The operator $\Gamma$ is well defined since $\mathbb{E}[\|\textbf{Z}\|_\mathbb{H}^2]<\infty$. Also note that, as defined, $K$ is symmetric, in the sense that we have:\[\forall (s,t)\in[0,1]^2\quad K(s,t)=K(s,t)^T,\]
   which implies that the operator $\Gamma$ is a linear self-adjoint operator.
   Note that in this setting, Mercer's theorem is still valid (see \cite{Chiou14}), i.e., there exists a sequence of orthonormal functions $(\textbf{f}_\ell)_{\ell\in\mathbb{N}^*}$ and positive numbers $(\mu_\ell)_{\ell\in\mathbb{N}^*}$ (eigenfunctions and the associated eigenvalues) such that
\begin{align*}
    \Gamma &=\sum_{\ell\in\mathbb{N}^*}\mu_\ell \textbf{f}_\ell\otimes\textbf{f}_\ell\\&=\sum_{\ell\in\mathbb{N}^*} \textbf{g}_\ell\otimes\textbf{g}_\ell,
\end{align*}
where $\textbf{g}_\ell:=\sqrt{\mu_\ell}\textbf{f}_\ell$ for all $\ell\in\mathbb{N}^*$ and $\otimes$ is the tensorial product defined such that for any $\textbf{h},\textbf{f},\textbf{g}\in\mathbb{H}$:
\begin{equation}\label{eq:def:tensorial}
    \quad\textbf{h}\otimes \textbf{f} (\textbf{g})=\langle \textbf{h},\textbf{g}\rangle_\mathbb{H} \textbf{f}.
\end{equation}
 The eigenfunctions $(\textbf{f}_\ell)_{\ell\in\mathbb N}$ are ordered such that the associated eigenvalues sequence is non-increasing. We also suppose that all the eigenvalues are distinct, i.e., for all $\ell\in\mathbb{N}^* \quad\mu_{\ell}>\mu_{\ell+1}$.
\section{Definition of the smoothness class for the functional curve $\textbf{Z}$}\label{sec:reg:multi}
       The convergence rates depend on the underlying smoothness of the process of interest.
       \begin{definition}Let $\alpha\in(0,1)$ and $L>0$. We set
\begin{eqnarray}
\mathcal R_\alpha^{(D)}(L)=&&\left\{\mathbb{P} \text{ probability measure on }\mathbb{H} \text{ (with $\mathbb{P}_d$ its marginal on the d-th direction)} \right.\nonumber\\
&&\text{such that}\left.\forall s,t\in[0,1],\int_{\mathbb{H}}\max_{d\in\{1,\dots,D\}}(z_d(t)-z_d(s))^2d\mathbb{P}_d(z)\leq L|t-s|^{2\alpha}\right\}.\nonumber\\\label{hyp:regZ}
\end{eqnarray}

       \end{definition}
The use of these regularity sets is natural. Indeed, we can, for instance, remark that $P_\textbf{Z}$, the distribution of $\textbf{Z}$, satisfies.
\[
\mathbb{P}_\textbf{Z}\in  \mathcal R_\alpha^{(D)}(L)\Rightarrow \max_{d\in\{1,\dots,D\}}\mathbb E_\textbf{Z}[(Z_d(t)-Z_d(s))^2]\leq L|t-s|^{2\alpha}\quad\forall s,t\in[0,1].
\]
This regularity condition implies that the kernel $K$ is bounded and it is a $\alpha-$Hölder continous function, for any $(s,s',t,t')\in[0,1]^4$:

\begin{equation}\label{alpha:holder:inequality}
    \mathbb{P}_\textbf{Z}\in  \mathcal R_\alpha^{(D)}(L)\Rightarrow \max_{d,d'\in\{1,\dots,D\}} |K_{d,d'}(s,t)-K_{d,d'}(s',t')|\leq (L\|K\|_\infty)^{1/2}(|t'-t|^\alpha+|s'-s|)^{\alpha},\end{equation}
where $\|K\|_\infty=\max_{d,d'\in\{1,\dots,D\}}\|K_{d,d'}\|_\infty$ and $\|K_{d,d'}\|_\infty=\sup_{(s,t)\in[0,1]^2}|K_{d,d'}(s,t)|$.
\begin{remark}

The class $R_\alpha^{(D)}(L)$ can be seen as extension of $R_\alpha(L)$ defined in \cite{BPRR21} to the multivariate case. Thus it inherits all of the properties of $R_\alpha(L)$, i.e., Gaussian processes such as Brownian motion and fractional Brownian motion belong to $R_\alpha^{(D)}(L)$ for $\alpha$ and $L$ well chosen.
\end{remark}
   
\section{Lower bound}\label{sec:lowerbound:multi}
The lower bound of the risk for estimating eigenfunctions can be viewed as a benchmark to achieve. We focus on the first eigenfunction, but a similar result, though
more technical, could be obtained for the other eigenfunctions.

\begin{theorem}\label{thm:multi:borne_inf_fp}
Let $s\in\{1,\dots,D\}$ such that $s\leq n$ and $\|\textbf{f}_1\|_0=s$ ,$\alpha\in(0,1)$ and $p\geq\max(3,s^{\frac{1}{2\alpha}})$ and $L>0$ . There exists $n_0$ only depending on $L$ and $\alpha$ such that, for all $n\geq n_0$,
\[
\inf_{\widehat{\textbf{f}}_1}\sup_{P_Z\in\mathcal R_\alpha^{(D)}(L)}\mathbb E[\|\widehat{\textbf{f}}_1-\textbf{f}_1\|_\mathbb{H}^2]\geq c(\sigma) s (p^{-2\alpha}+n^{-1}),
\] 
   where $c(\sigma)$ is a positive constant depending on $\sigma$ and the infimum is taken over all estimators i.e. all measurable function of the observations $\{\textbf{Y}_i(t_h), h=0,\hdots,p-1, i=1,\hdots,n\}$.
\end{theorem}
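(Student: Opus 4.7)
}
The strategy is a reduction to multiple hypothesis testing. I would split the rate into two additive pieces, $s\,p^{-2\alpha}$ (approximation) and $s/n$ (noise), prove each separately via Assouad's lemma indexed on a Boolean cube of dimension $s$, and then conclude by the standard inequality $x+y\le 2\max(x,y)$. The factor $s$ forces a cube construction (rather than a two-point Le Cam argument): one perturbs the first eigenfunction on exactly $s$ coordinates simultaneously, so that the Hamming distance between hypotheses translates into the $\mathbb{H}$-distance between the first eigenfunctions.

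\paragraph{Baseline model and cube.}
First I fix a centred Gaussian process $\textbf{Z}^{(0)}$ with covariance operator $\Gamma_0=\mu_1 \textbf{f}_1^{(0)}\otimes \textbf{f}_1^{(0)}+\Gamma_\perp$, where $\textbf{f}_1^{(0)}$ is unit-norm, supported on coordinates $\{1,\dots,s\}$ and smooth enough to lie (strictly) inside $\mathcal{R}_\alpha^{(D)}(L/2)$, and $\Gamma_\perp$ has spectrum $(\mu_\ell)_{\ell\ge 2}$ with $\mu_1>\mu_2$ a fixed gap. For $\omega\in\{-1,+1\}^s$ I set
\[
\textbf{f}_1^{(\omega)}=\frac{\textbf{f}_1^{(0)}+\delta\sum_{d=1}^s \omega_d \psi_d \textbf{e}_d}{\|\textbf{f}_1^{(0)}+\delta\sum_{d=1}^s \omega_d \psi_d \textbf{e}_d\|_\mathbb{H}},
\]
for functions $\psi_d\in L^2([0,1])$ with disjoint supports (whence $\|\textbf{f}_1^{(\omega)}-\textbf{f}_1^{(\omega')}\|_\mathbb{H}^2\gtrsim \delta^2 \rho_H(\omega,\omega')$), and define $\Gamma_\omega=\mu_1 \textbf{f}_1^{(\omega)}\otimes \textbf{f}_1^{(\omega)}+\Gamma_\perp$, corresponding to the Gaussian law $P_\omega$. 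For $\delta$ small enough in both branches below, the eigenvalue gap is preserved (so $\textbf{f}_1^{(\omega)}$ really is the first eigenfunction) and $P_\omega\in\mathcal{R}_\alpha^{(D)}(L)$ uniformly in $\omega$.

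\paragraph{Noise branch.} Here I pick the $\psi_d$'s as fixed unit-norm smooth bumps (independent of $n,p$) and choose $\delta=c_1/\sqrt{n}$. The Kullback--Leibler divergence between the product observation laws on the grid reduces to a KL between Gaussians with close covariances and, exploiting the independence of the noise across coordinates and replicates, satisfies
\[
\mathrm{KL}\bigl(P_\omega^{\otimes n},P_{\omega'}^{\otimes n}\bigr)\le C\,n\,\delta^2\,\rho_H(\omega,\omega'),
\]
after controlling $\mathrm{Tr}\bigl((\Sigma_{\omega'}^{-1}(\Sigma_\omega-\Sigma_{\omega'}))^2\bigr)$ for the $np\times np$ observation covariance. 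Choosing $c_1$ small so this is $\le(\log 2)\rho_H/2$, Assouad's lemma yields a lower bound of order $s\delta^2\asymp s/n$.

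\paragraph{Approximation branch.}
Now I take $\psi_d$ to be a single $\alpha$-Hölder bump of amplitude $p^{-\alpha}$ shaped so that $\psi_d(t_h)=0$ at every grid point $h\in\{0,\dots,p-1\}$; the condition $p\ge s^{1/(2\alpha)}$ gives enough disjoint sub-intervals of length $\asymp 1/p$ on which to place such bumps across the $s$ coordinates while keeping $\|\psi_d\|_\mathbb{H}\asymp p^{-\alpha-1/2}$. Since the observations see only the values on the grid and the noise is identically distributed, the full observation law is \emph{the same} under $P_\omega$ and $P_{\omega'}$, so $\mathrm{KL}=0$. The $\mathbb{H}$-separation is $\gtrsim p^{-2\alpha-1}\rho_H$ per bump, and summing gives the claimed $s\,p^{-2\alpha}$ piece after rescaling amplitudes to saturate the $\alpha$-Hölder budget (this is where the hypothesis $p\ge s^{1/(2\alpha)}$ is essential). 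Assouad's lemma then produces a lower bound of order $s\,p^{-2\alpha}$.

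\paragraph{Main obstacles.}
The delicate points are: (i) verifying that the perturbed operator $\Gamma_\omega$ still has $\textbf{f}_1^{(\omega)}$ as its \emph{first} eigenfunction, uniformly over the cube, which is handled by a Weyl-type perturbation argument using the fixed gap $\mu_1-\mu_2$; (ii) checking that the family $\{P_\omega\}$ stays inside $\mathcal{R}_\alpha^{(D)}(L)$, in particular that the Hölder modulus of the high-frequency bumps in the approximation branch is controlled by $L$; and (iii) the covariance-KL computation, where one must identify the dominant Frobenius-type term $\mathrm{Tr}((\Sigma^{-1}\Delta\Sigma)^2)$ and show it scales like $n\delta^2$ per flipped coordinate rather than worse, using that $\Delta\Sigma$ is rank-one and block-diagonal in the coordinate index. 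The condition $s\le n$ and the choice of $n_0(L,\alpha)$ enter to ensure $\delta$ can be chosen small enough that all of the above perturbative arguments are valid.
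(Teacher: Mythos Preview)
Your overall two--branch plan (separate lower bounds for $s\,n^{-1}$ and $s\,p^{-2\alpha}$) matches the paper, but both branches are set up differently from the paper's proof and in each case your construction, as written, does not deliver the claimed rate.

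\textbf{Approximation branch: the cube is indexed by the wrong variable.}
With a single bump $\psi_d$ per coordinate of width $\asymp 1/p$ and amplitude $p^{-\alpha}$ (the maximal amplitude allowed by the H\"older budget), you get $\|\psi_d\|^2\asymp p^{-2\alpha-1}$, so neighbouring hypotheses on your $s$--cube are at $\mathbb{H}$--distance $\asymp p^{-2\alpha-1}$, and Assouad yields only $s\,p^{-2\alpha-1}$, a full factor $p$ short. The ``rescaling to saturate the H\"older budget'' cannot recover this factor: amplitude $p^{-\alpha}$ already saturates it. The paper instead indexes the cube by the $p$ grid intervals, $\bomega\in\{0,1\}^p$, places one bump in \emph{each} interval, and replicates the \emph{same} perturbed scalar function on the $s$ active coordinates. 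Neighbours then differ by $s\cdot p^{-2\alpha-1}$ in $\|\cdot\|_\mathbb{H}^2$, all bumps vanish at the grid so the KL is zero, and Assouad supplies the extra factor $p$ via $\inf_{\widehat\bomega}\max_\bomega\mathbb{E}[\rho(\widehat\bomega,\bomega)]\gtrsim p$, giving $s\,p^{-2\alpha}$. (An equivalent fix of your scheme would be to let each $\psi_d$ be the sum of $p$ such bumps; but then the cube need not be indexed by $d$ at all.) The hypothesis $p\geq s^{1/(2\alpha)}$ is used in the paper only to keep the normalising constant $C_{\bomega}$ bounded, not to create ``disjoint sub-intervals across the $s$ coordinates'' (each coordinate has its own copy of $[0,1]$, so disjointness in time across coordinates is irrelevant).

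\textbf{Noise branch: the KL bound depends on $p$ unless $\mu_1$ is scaled.}
With a ``fixed gap $\mu_1>\mu_2$'' and smooth unit-norm $\psi_d$'s, the grid vectors satisfy $\|F\|_{\ell_2}^2\asymp p$, $\|v_d\|_{\ell_2}^2\asymp p$, and a direct computation (Sherman--Morrison for the rank-one part of $\Sigma$) gives, for neighbours on your $s$--cube,
\[
\mathrm{KL}\bigl(P_\omega^{\otimes n},P_{\omega'}^{\otimes n}\bigr)\;\asymp\;\frac{n\,\mu_1\,\delta^2\,p}{\sigma^2},
\]
not $C\,n\,\delta^2$. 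Forcing this to be $O(1)$ gives $\delta^2\lesssim \sigma^2/(n\mu_1 p)$ and hence only $s/(n p)$ from Assouad. The paper avoids this by using a rank-one Gaussian process (so no $\Gamma_\perp$, no Weyl argument needed) and choosing the eigenvalue to scale as $\mu_1^*=1/(sp)$; it then controls the Hellinger affinity of the two observation Gaussians via explicit $2\times 2$ determinant computations. Incidentally, the paper uses a two--point Le~Cam reduction for the $s\,n^{-1}$ rate rather than an $s$--dimensional Assouad cube: the factor $s$ there comes from replicating the same perturbed scalar eigenfunction on $s$ coordinates, not from flipping $s$ bits.
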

 The proof is provided in Subsection \ref{proof:multi:borne_inf} page \pageref{proof:multi:borne_inf}.
This result can be seen as an extension of the result obtained in \cite{BPRR21} since the same techniques are used to obtain this lower bound. Thus it also suffers from the same limitations (the lower bound assumes a Gaussian distribution and is only shown for the first eigenfunction) but the lower bound of the minimax risk by $sp^{-2\alpha}$ is valid whatever the distribution of $\textbf{Z}$. 

\section{Definition of the estimators of the eigenelements}\label{sec:estim:multi}

Following the approach adopted in \cite{BPRR21}, we introduce the estimator of the covariance operator $\Gamma$ from the data. Let $(\phi_\lambda)_{\lambda \in \Lambda_M}$ be an orthonormal system of $\L^2$ where $\Lambda_M$ is a finite set of cardinal $M$. In the following, we will consider histograms only. Then, in the setting of model \eqref{model:multi}, we first reconstruct the observed curve on the entire interval $[0,1]$, and we define, for $i=1,\dots,n$ and $d=1,\dots,D$,
\[
\widetilde Y_{i,d}(t) := \sum_{\lambda\in\Lambda_M}\widetilde y_{i,d,\lambda} \phi_\lambda(t)
,\quad
\widetilde y_{i,d,\lambda}:=\frac1p\sum_{h=0}^{p-1}Y_{i,d}(t_h)\phi_\lambda(t_h), \quad t\in[0,1],
\]
where $\widetilde{y}_{i,d,\lambda}$ is an approximation of $\langle Z_{i,d},\phi_\lambda\rangle$. We then define the functions $\widetilde{\textbf{Y}}_i\in\mathbb{H}$ and the vectors $\widetilde{\textbf{y}}_i\in\mathbb{R}^{MD}$ as follows: 
\[(\widetilde{\textbf{Y}}_i)_d:=\widetilde{Y}_{i,d}\quad,(\widetilde{{y}}_{i,d})_{\lambda}:=\widetilde{y}_{i,d,\lambda}\quad\text{and } (\widetilde{\textbf{y}}_i)_{d}:=\widetilde{y}_{i,d}.\]
Thus $\widetilde{\textbf{Y}}_1\hdots,\widetilde{\textbf{Y}}_n$ is a smoothed version of the raw data. Then we define a natural estimator of the covariance kernel $K$ as follows:

\begin{equation}\label{eq:Kestim}
   \widehat{ K}_\phi(s,t)=\frac1n\sum_{i=1}^n  \widetilde{\textbf{Y}}_i(t)^T \widetilde{\textbf{Y}}_i(s),\quad (s,t) \in [0,1]^2,
\end{equation}
and derive from it an estimator of the covariance operator $\widehat{{\Gamma}}_\phi$
\begin{eqnarray*}
 \widehat{{\Gamma}}_\phi(\textbf{f})(\cdot) &=& \int_0^1 \widehat{ K}_\phi(s,\cdot)\textbf{f}(s)ds,\quad \textbf{f} \in \mathbb{H}.
\end{eqnarray*}
Since the kernel $\widehat{K}_\phi$ is symmetric, the operator $\widehat{\Gamma}_\phi$ is self-adjoint and it is also a finite-rank hence compact operator since ${\rm Im}(\widehat{ \Gamma}_\phi)\subset\text{span}\{\widetilde{\textbf{Y}}_1,\hdots,\widetilde{\textbf{Y}}_n\}$. From the spectral theorem, we know that there exists a basis $(\widehat{\textbf{f}}_{\phi,\ell})_{\ell\geq 1}$ of $\mathbb L^2([0,1])$ of eigenfunctions of $\widehat{\Gamma}_\phi$. We denote by $(\widehat{\mu}_{\phi,\ell})_{\ell\geq 1}$ the associated eigenvalues sorted in non-increasing order. Finally, with a slight abuse of notation, we denote the vectors containing all the observations by $(\textbf{Y}_i)_{i\in\{1,\dots,n\}}$, i.e., let $j\in\{1,\dots,pD\}$ we denote by $q$ and $r$ the qotient and the rest of the euclidean division of $j$ by $D$, for any $i\in\{1,\dots,n\}$ $$ (Y_{i})_{j}:=Y_{i,r}(t_q).$$ 
\paragraph{Optimization problem}\label{subsec:optimization}
Our approach mimics the one adopted by \cite{VAN}, which relies heavily on the Taylor expansion in $\mathbb{R}^p$ of the function:
\[\beta\rightarrow\|\Sigma-\beta^T\beta\|_F^2,\quad\beta\in\mathbb{R}^p,\]
where $\Sigma$ is a covariance matrix and $\|\bullet\|_F$ is the Frobenius norm. We first defines a functional counterpart of this function, for this purpose, recall the tensor product notation $\otimes$, given in \ref{eq:def:tensorial}. Note that $\textbf{h}\otimes \textbf{f}$ defines a linear operator verifying $\|\textbf{h}\otimes \textbf{f}\|_{HS} = \|\textbf{h}\|_{\mathbb{H}}\|\textbf{f}\|_{\mathbb{H}}< \infty$. We now introduce the objective function $R$ such that  
\begin{equation}\label{eq:def:R}
    \quad R(\textbf{h}):=\|\Gamma-\textbf{h} \otimes \textbf{h}\|_{HS}^2,\quad\textbf{h}\in\mathbb{H}.
\end{equation}
Extending this approach depends on our ability to use the Taylor expansion and the Taylor-Lagrange remainder in $\mathbb{H}$, however in
\cite{Zeidler85} Chapter 4 (theorem 4.5) an extension to Banach spaces of the Taylor expansion is defined, and since $R$ is a real-valued function, we know that the mean value theorem is valid; thus, the existence of the Taylor-Lagrange remainder is guaranteed.

\begin{theorem}
Let $R:U\subset \mathbb{H}\rightarrow \mathbb{R}$ be a map defined on an open convex set
$U$ of a Banach space $\mathbb{H}$. We suppose that $R$ is $n-$times differentiable. Then the Taylor formula
holds true for all $(\textbf{u},\textbf{h})\in (U\times \mathbb{H})$ such that $\textbf{h}+\textbf{u}\in U$ :
 
$$ R(\textbf{u}+\textbf{h})=R(\textbf{u})+\sum_{k=1}^{n-1}\frac{dR_\textbf{u}^{(k)}{(\textbf{h}^k)}}{k!}+\Xi_n,$$ 
where we set $dR_\textbf{u}^{(k)}(\textbf{h}^k):=dR_{\textbf{u}}^{(k)}(\textbf{u})\underbrace{(\textbf{h},\dots,\textbf{h})}_{k-\text{times}}$, and the remainder has the following form:

$$\Xi_n:=\int_0^1\frac{(1-\tau)^{n-1}}{(n-1)!}dR_{\textbf{u}+\tau \textbf{h}}^{(n)}(\textbf{h}^n) d\tau.$$

\end{theorem}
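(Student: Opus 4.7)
The plan is to reduce the Banach-space statement to the classical one-dimensional Taylor formula with integral remainder. First I would fix $(\textbf{u},\textbf{h})\in U\times\mathbb{H}$ with $\textbf{u}+\textbf{h}\in U$ and use openness and convexity of $U$ to ensure that the segment $\{\textbf{u}+\tau\textbf{h}:\tau\in[0,1]\}$, and in fact a slightly larger open segment, lies inside $U$, so that the scalar auxiliary function
\[
\varphi(\tau):=R(\textbf{u}+\tau\textbf{h})
\]
is well defined on a neighbourhood of $[0,1]$.

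Next, I would prove by induction on $k\leq n$ that $\varphi$ is $k$-times differentiable, with
\[
\varphi^{(k)}(\tau)=dR^{(k)}_{\textbf{u}+\tau\textbf{h}}(\textbf{h}^{k}).
\]
The case $k=1$ is the Fréchet chain rule applied to $R$ and the affine map $\tau\mapsto\textbf{u}+\tau\textbf{h}$. For the induction step, I view $\tau\mapsto dR^{(k-1)}_{\textbf{u}+\tau\textbf{h}}$ as a curve into the Banach space of bounded symmetric $(k-1)$-linear forms on $\mathbb{H}$, whose $\tau$-derivative equals $dR^{(k)}_{\textbf{u}+\tau\textbf{h}}(\textbf{h},\cdot)$; evaluation at the fixed tuple $(\textbf{h},\dots,\textbf{h})$ is a bounded linear operation, which commutes with differentiation and yields the recursion.

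Once this formula is in place, I apply the scalar Taylor formula with integral remainder to $\varphi$ on $[0,1]$,
\[
\varphi(1)=\varphi(0)+\sum_{k=1}^{n-1}\frac{\varphi^{(k)}(0)}{k!}+\int_0^1\frac{(1-\tau)^{n-1}}{(n-1)!}\varphi^{(n)}(\tau)\,d\tau,
\]
and substitute the expression for $\varphi^{(k)}$. This immediately produces the claimed expansion with the stated remainder $\Xi_n$. The integral here is an ordinary Riemann integral of a real-valued function of $\tau$, so no Bochner integration in $\mathbb{H}$ is required.

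The main obstacle is not analytic but notational: one must handle the higher-order Fréchet derivatives carefully, regarding each $dR^{(k)}_{\textbf{u}+\tau\textbf{h}}$ as a bounded symmetric $k$-linear form on $\mathbb{H}$ and tracking the chain rule through its multilinear evaluation at the diagonal tuple $(\textbf{h},\dots,\textbf{h})$. Modulo this bookkeeping, the argument is a direct transfer of the one-dimensional Taylor theorem, which is why the statement can be cited from \cite{Zeidler85} without further elaboration.
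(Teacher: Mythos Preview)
Your argument is correct and is the standard proof of Taylor's theorem with integral remainder in Banach spaces: reduce to the scalar case via the auxiliary function $\varphi(\tau)=R(\textbf{u}+\tau\textbf{h})$, identify $\varphi^{(k)}(\tau)=dR^{(k)}_{\textbf{u}+\tau\textbf{h}}(\textbf{h}^k)$ by the chain rule, and apply the one-dimensional Taylor formula with integral remainder on $[0,1]$.

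There is, however, nothing to compare against: the paper does not give its own proof of this theorem. It is stated verbatim as Theorem~4.5 of \cite{Zeidler85} and used as a black box; the only additional work the paper does is the short mean-value argument \emph{after} the theorem to pass from the integral remainder to the Lagrange-type remainder of Corollary~\ref{cor:remainder}. Your sketch is precisely the proof one finds in Zeidler (and in any standard reference), so in effect you have supplied what the paper chose to cite rather than prove.
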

Since $R$ is a real-valued function, we can apply the mean value theorem on $\Xi_n$ which means that there exists $\tau^*\in[0,1]$ such that
\begin{align*}
     \int_0^1\frac{(1-\tau)^{n-1}}{(n-1)!}dR_{\textbf{u}+\tau \textbf{h}}^{(n)}(\textbf{h}^n) d\tau&=dR_{\textbf{u}+\tau^* \textbf{h}}^{(n)}(\textbf{h}^n) \int_0^1\frac{(1-\tau)^{n-1}}{(n-1)!} d\tau\\&=\frac{dR_{\textbf{u}+\tau^* \textbf{h}}^{(n)}(\textbf{h}^n) }{n!}.
\end{align*}
This allows us to deduce the following corollary.

\begin{corollary}\label{cor:remainder}
Let $R:U\subset \mathbb{H}\rightarrow \mathbb{R}$ be a map defined on an open convex set
$U$ of a Banach space $\mathbb{H}$. We suppose that $R$ is $n-$times differentiable. Then the Taylor formula
holds true for all $(\textbf{u},\textbf{h})\in (U\times \mathbb{H})$ such that $\textbf{h}+\textbf{u}\in U$. There exists $\tau^*\in[0,1]$ such that:
 
 \begin{equation}\label{eq:taylor-lag}
      R(\textbf{u}+\textbf{h})=R(\textbf{u})+\sum_{k=1}^{n-1}\frac{dR_\textbf{u}^{(k)}(\textbf{h}^k)}{k!}+\Xi_n,
 \end{equation}
where we set $dR_\textbf{u}^{(k)}(\textbf{h}^k):=dR_\textbf{u}^{(k)}(\textbf{u})\underbrace{(\textbf{h},\dots,\textbf{h})}_{k-\text{times}}$, and the remainder has the following form

$$\Xi_n:=\frac{dR_{\textbf{u}+\tau^* \textbf{h}}^{(n)}(\textbf{h}^n) }{n!}.$$
\end{corollary}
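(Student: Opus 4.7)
The plan is to derive Corollary \ref{cor:remainder} directly from the Taylor formula already stated as the preceding theorem, whose integral-remainder form gives
\[
\Xi_n = \int_0^1 \frac{(1-\tau)^{n-1}}{(n-1)!} \, dR^{(n)}_{\textbf{u}+\tau \textbf{h}}(\textbf{h}^n) \, d\tau.
\]
Since $R$ is real-valued, the map $\tau \mapsto dR^{(n)}_{\textbf{u}+\tau \textbf{h}}(\textbf{h}^n)$ is a continuous real-valued function on $[0,1]$ (continuity follows from the existence and continuity of the $n$-th Fréchet derivative along the segment, which is automatic because that segment lies in the convex open set $U$ and $R$ is $n$-times differentiable there, with the $(n-1)$-fold derivative $C^1$). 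This reduces the problem to a one-dimensional integral of a continuous real function against a non-negative weight.

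Next I would apply the first mean value theorem for Riemann integrals. The weight $w(\tau) := \frac{(1-\tau)^{n-1}}{(n-1)!}$ is non-negative and continuous on $[0,1]$, and the integrand $\varphi(\tau) := dR^{(n)}_{\textbf{u}+\tau \textbf{h}}(\textbf{h}^n)$ is continuous. Therefore there exists $\tau^* \in [0,1]$ such that
\[
\int_0^1 w(\tau)\,\varphi(\tau)\,d\tau \;=\; \varphi(\tau^*)\int_0^1 w(\tau)\,d\tau.
\]
It remains to compute the normalising integral, which is the routine identity
\[
\int_0^1 \frac{(1-\tau)^{n-1}}{(n-1)!}\,d\tau \;=\; \frac{1}{n!}.
\]
Substituting back yields $\Xi_n = \dfrac{dR^{(n)}_{\textbf{u}+\tau^* \textbf{h}}(\textbf{h}^n)}{n!}$, which plugged into the Taylor formula gives \eqref{eq:taylor-lag}.

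The only genuinely delicate point is the justification of continuity of $\tau \mapsto dR^{(n)}_{\textbf{u}+\tau\textbf{h}}(\textbf{h}^n)$ along the segment, because the Banach-space version of Taylor's theorem is sometimes stated assuming only $n$-fold Fréchet differentiability without continuity of the top derivative. In practice this is not an obstacle for us: the function $R$ in \eqref{eq:def:R} is a polynomial of degree $4$ in $\textbf{h}$ with respect to the Hilbert-Schmidt structure, hence infinitely differentiable with continuous (in fact multilinear-continuous) derivatives of all orders, so the mean value theorem for integrals applies without issue. Everything else is bookkeeping: plugging the closed-form remainder into the Taylor expansion given by the preceding theorem.
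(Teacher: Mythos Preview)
Your argument is correct and matches the paper's own derivation: both start from the integral remainder of the preceding theorem, apply the (weighted) mean value theorem for integrals to the real-valued map $\tau\mapsto dR^{(n)}_{\textbf{u}+\tau\textbf{h}}(\textbf{h}^n)$, and then evaluate $\int_0^1 \frac{(1-\tau)^{n-1}}{(n-1)!}\,d\tau = \frac{1}{n!}$. Your additional remark on the continuity of the top derivative is a welcome piece of care that the paper leaves implicit.
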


\begin{remark}
In what follows we will use Corollary \ref{cor:remainder} with $n=2$ and $U=\mathcal{B}(\eta)$ (a ball defined later). So there exists $ \tau^*\in]0,1[$ such that for all $(\textbf{u},\textbf{h})\in (U\times \mathbb{H});\textbf{h}+\textbf{u}\in U$:
\begin{equation}\label{eq:diff:R:taylor}
    R(\textbf{u}+\textbf{h})=R(\textbf{u})+dR_{\textbf{u}}(\textbf{h})+\frac{dR_{\textbf{u}+\tau^* \textbf{h}}(\textbf{h})}{2}.
\end{equation}

\end{remark}	
One key element of the proof is the sub-differential of the $1$-norm introduced earlier. First, we recall the definition of a sub-differential, and we will show its existence and compute its value.
	\begin{definition}
	      Let $\textbf{F}:\mathbb{H}\rightarrow\mathbb{R}$ be a convex function, a sub-differential at point $\textbf{h}_0$ is a function denoted $\partial \textbf{F}(\textbf{h}_0)$ such that;
	      
	      \[\forall \textbf{h}\in\mathbb{H}\quad \textbf{F}(\textbf{h})-\textbf{F}(\textbf{h}_0)\geq \langle\partial \textbf{F}(\textbf{h}_0),\textbf{h}-\textbf{h}_0\rangle_\mathbb{H}.\]
	\end{definition}
In what follows we compute $\partial\|\bullet\|_1$. Let $\textbf{g},\textbf{h}\in\mathbb{H}$. First we define the $Sign$ function as :
\begin{align*}
    Sign(\textbf{g})&=(sign(g_1),...,sign(g_D)).
\end{align*}
Note that we have the following equality for any $g\in L^2([0,1])$:
\begin{align*}
    \|g\|_1 =\max_{\|h\|_\infty\leq 1}\langle g,h\rangle_{L^2([0,1])}=\langle sign(g),g\rangle_{L^2([0,1])}.
\end{align*}
This implies that :
\begin{align}
    \|\textbf{g}\|_1 &= \sum_{d=1}^D \|g_d\|_1\nonumber\\&=\sum_{d=1}^D \max_{\|h_d\|_\infty\leq 1}\langle g_d,h_d\rangle_{L^2([0,1])}\nonumber\\&=\max_{\|\textbf{h}\|_\infty\leq 1}\langle \textbf{g},\textbf{h}\rangle_{\mathbb{H}}\leq\langle Sign(\textbf{g}),\textbf{g}\rangle_{\mathbb{H}}.\nonumber
\end{align}
However, since $\langle Sign(\textbf{g}),\textbf{g}\rangle_{\mathbb{H}}=\|\textbf{g}\|_1$ and $\|Sign(\textbf{g})\|_\infty=1$ it implies that:
\begin{equation}
   \|\textbf{g}\|_1 =\max_{\|\textbf{h}\|_\infty\leq 1}\langle \textbf{g},\textbf{h}\rangle_{\mathbb{H}}=\langle Sign(\textbf{g}),\textbf{g}\rangle_{\mathbb{H}} \label{norm-1}.
\end{equation}
Thus $Sign(\textbf{h})$ is a good candidate for the subdifferential of $\|\textbf{h}\|_1$. Let $\textbf{g},\textbf{h}\in\mathbb{H}$. We show that 
\begin{align*}
    \forall \textbf{h}\in \mathbb{H}\quad  \|\textbf{h}\|_1-\|\textbf{g}\|_1\geq \langle Sign(\textbf{g}),\textbf{h}-\textbf{g}\rangle_\mathbb{H}.
\end{align*}
However since $ \|\textbf{g}\|_1 =\langle Sign(\textbf{g}),\textbf{g}\rangle_{L^2([0,1])}$, it is sufficient to show that :

\begin{align*}
    \forall \textbf{h}\in \mathbb{H}\quad \|\textbf{h}\|_1\geq \langle Sign(\textbf{g})),\textbf{h}\rangle_\mathbb{H},
\end{align*}
which is a consequence of Equation \eqref{norm-1}. Thus for all $\textbf{h}\in\mathbb{H}$ we have:

\begin{equation}\label{eq:subdiif:norme:1}
    \partial\|\textbf{h}\|_1=Sign(\textbf{h}).
\end{equation}
We now focus on $R$ given in \eqref{eq:def:R} and explicit its successive derivatives.
\begin{proposition}\label{prop:differential:multi}Let $\textbf{h}\in\mathbb{H}$. Denoting by $\dot{R}$ and $\ddot{R}$, the first and the second differentials, we have:
 \begin{align*}
    \dot{R}(\textbf{h})&=4(\|\textbf{h}\|_\mathbb{H}^2\textbf{h}-\Gamma(\textbf{h}) )\\\ddot{R}(\textbf{h})&=4(\|\textbf{h}\|_\mathbb{H}^2I+2\textbf{h}\otimes\textbf{h}-\Gamma).
\end{align*}
\end{proposition}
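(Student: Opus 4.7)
The plan is to expand $R$ using the Hilbert--Schmidt inner product, then read off the first and second differentials from the Taylor expansion of $R(\textbf{h}+\textbf{k})$ around $\textbf{h}$.

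First I would write
\[
R(\textbf{h})=\langle\Gamma-\textbf{h}\otimes\textbf{h},\Gamma-\textbf{h}\otimes\textbf{h}\rangle_{HS}=\|\Gamma\|_{HS}^2-2\langle\Gamma,\textbf{h}\otimes\textbf{h}\rangle_{HS}+\|\textbf{h}\otimes\textbf{h}\|_{HS}^2.
\]
The two simplifications I need are: (i) from the identity $\|\textbf{h}\otimes \textbf{f}\|_{HS}=\|\textbf{h}\|_{\mathbb{H}}\|\textbf{f}\|_{\mathbb{H}}$ recalled in the text, one gets $\|\textbf{h}\otimes\textbf{h}\|_{HS}^2=\|\textbf{h}\|_{\mathbb{H}}^4$; and (ii) expanding the HS inner product in any orthonormal basis $(\textbf{e}_i)$ containing $\textbf{h}/\|\textbf{h}\|_{\mathbb{H}}$, and using $(\textbf{h}\otimes\textbf{h})(\textbf{e}_i)=\langle\textbf{h},\textbf{e}_i\rangle_{\mathbb{H}}\textbf{h}$ together with the self-adjointness of $\Gamma$, gives $\langle\Gamma,\textbf{h}\otimes\textbf{h}\rangle_{HS}=\langle\Gamma(\textbf{h}),\textbf{h}\rangle_{\mathbb{H}}$. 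Hence
\[
R(\textbf{h})=\|\Gamma\|_{HS}^2-2\langle\Gamma(\textbf{h}),\textbf{h}\rangle_{\mathbb{H}}+\|\textbf{h}\|_{\mathbb{H}}^4.
\]

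Next I would compute $R(\textbf{h}+\textbf{k})-R(\textbf{h})$ for $\textbf{k}\in\mathbb{H}$ and collect terms by order in $\textbf{k}$. Using self-adjointness of $\Gamma$,
\[
-2\langle\Gamma(\textbf{h}+\textbf{k}),\textbf{h}+\textbf{k}\rangle_{\mathbb{H}}=-2\langle\Gamma(\textbf{h}),\textbf{h}\rangle_{\mathbb{H}}-4\langle\Gamma(\textbf{h}),\textbf{k}\rangle_{\mathbb{H}}-2\langle\Gamma(\textbf{k}),\textbf{k}\rangle_{\mathbb{H}},
\]
while a direct expansion of $\|\textbf{h}+\textbf{k}\|_{\mathbb{H}}^4=(\|\textbf{h}\|_{\mathbb{H}}^2+2\langle\textbf{h},\textbf{k}\rangle_{\mathbb{H}}+\|\textbf{k}\|_{\mathbb{H}}^2)^2$ yields
\[
\|\textbf{h}\|_{\mathbb{H}}^4+4\|\textbf{h}\|_{\mathbb{H}}^2\langle\textbf{h},\textbf{k}\rangle_{\mathbb{H}}+2\|\textbf{h}\|_{\mathbb{H}}^2\|\textbf{k}\|_{\mathbb{H}}^2+4\langle\textbf{h},\textbf{k}\rangle_{\mathbb{H}}^2+o(\|\textbf{k}\|_{\mathbb{H}}^2).
\]
The first-order part in $\textbf{k}$ is $4\langle\|\textbf{h}\|_{\mathbb{H}}^2\textbf{h}-\Gamma(\textbf{h}),\textbf{k}\rangle_{\mathbb{H}}$, which identifies the Fréchet derivative $\dot R(\textbf{h})=4\bigl(\|\textbf{h}\|_{\mathbb{H}}^2\textbf{h}-\Gamma(\textbf{h})\bigr)$ as an element of $\mathbb{H}$ (via Riesz).

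For the second differential, the quadratic part in $\textbf{k}$ is
\[
-2\langle\Gamma(\textbf{k}),\textbf{k}\rangle_{\mathbb{H}}+2\|\textbf{h}\|_{\mathbb{H}}^2\|\textbf{k}\|_{\mathbb{H}}^2+4\langle\textbf{h},\textbf{k}\rangle_{\mathbb{H}}^2,
\]
and Taylor gives $\tfrac12\langle\ddot R(\textbf{h})\textbf{k},\textbf{k}\rangle_{\mathbb{H}}$ equal to this quantity. Using the tensor identity $\langle\textbf{h},\textbf{k}\rangle_{\mathbb{H}}^2=\langle(\textbf{h}\otimes\textbf{h})\textbf{k},\textbf{k}\rangle_{\mathbb{H}}$ coming from \eqref{eq:def:tensorial}, I can rewrite the quadratic form as $\langle 4(\|\textbf{h}\|_{\mathbb{H}}^2 I+2\textbf{h}\otimes\textbf{h}-\Gamma)\textbf{k},\textbf{k}\rangle_{\mathbb{H}}$, so the polarization identity (valid since the operator in question is self-adjoint, the sum of $\|\textbf{h}\|_{\mathbb{H}}^2 I$, the self-adjoint rank-one operator $2\textbf{h}\otimes\textbf{h}$, and the self-adjoint operator $-\Gamma$) forces $\ddot R(\textbf{h})=4(\|\textbf{h}\|_{\mathbb{H}}^2 I+2\textbf{h}\otimes\textbf{h}-\Gamma)$.

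The argument is essentially a bookkeeping exercise, so there is no genuine obstacle; the only point that warrants care is recognising the rank-one contribution: the bilinear term $4\langle\textbf{h},\textbf{k}\rangle_{\mathbb{H}}^2$ must be rewritten through the tensor product in order to identify the operator $2\textbf{h}\otimes\textbf{h}$. Everything else is a direct consequence of bilinearity of $\langle\cdot,\cdot\rangle_{HS}$, self-adjointness of $\Gamma$, and the HS norm of a tensor product.
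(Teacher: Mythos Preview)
Your proof is correct and follows essentially the same route as the paper: expand $R(\textbf{h})=\|\Gamma\|_{HS}^2-2\langle\Gamma(\textbf{h}),\textbf{h}\rangle_{\mathbb{H}}+\|\textbf{h}\|_{\mathbb{H}}^4$ via the identities $\|\textbf{h}\otimes\textbf{h}\|_{HS}^2=\|\textbf{h}\|_{\mathbb{H}}^4$ and $\langle\Gamma,\textbf{h}\otimes\textbf{h}\rangle_{HS}=\langle\Gamma(\textbf{h}),\textbf{h}\rangle_{\mathbb{H}}$, then read off the derivatives from $R(\textbf{h}+\textbf{k})-R(\textbf{h})$. The only cosmetic difference is that the paper obtains $\ddot R$ by expanding $\dot R(\textbf{h}+\textbf{a})-\dot R(\textbf{h})$ and identifying its linear part, whereas you extract the quadratic term of the Taylor expansion of $R$ directly and invoke polarization; both are valid and equivalent here.
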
 
The proof is provided in Subsection \ref{proof:prop:differential:multi} page \pageref{proof:prop:differential:multi}. To conclude this section, we now present the optimization problem and its empirical LASSO counterpart.

\begin{proposition}(Reformulations of the MfPCA problem)
\label{prop:equivalent_formulations}
The first element of the FPC basis $\textbf{g}_1$ is solution of the following constrained optimization problem:

\begin{eqnarray}
\label{Eq:function optimization on gamma}
\textbf{g}_1 &\in& \underset{\textbf{h}\in\mathbb{H}}{\argmin}\quad R(\textbf{h}). 
\end{eqnarray}    

\end{proposition}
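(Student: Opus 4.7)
The plan is to expand $R(\textbf{h})$ using the Hilbert--Schmidt inner product, reduce the minimization to a coordinate problem in the Mercer eigenbasis, and then carry out an elementary optimization. Concretely, starting from
\begin{equation*}
R(\textbf{h}) = \|\Gamma\|_{HS}^2 - 2\langle \Gamma, \textbf{h}\otimes\textbf{h}\rangle_{HS} + \|\textbf{h}\otimes\textbf{h}\|_{HS}^2,
\end{equation*}
I would first establish the two standard identities $\|\textbf{h}\otimes\textbf{h}\|_{HS}^2 = \|\textbf{h}\|_\mathbb{H}^4$ and $\langle \Gamma, \textbf{h}\otimes\textbf{h}\rangle_{HS} = \langle \Gamma(\textbf{h}),\textbf{h}\rangle_\mathbb{H}$. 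Both follow by computing $\langle A, \textbf{h}\otimes\textbf{h}\rangle_{HS}$ in an arbitrary orthonormal basis $(\textbf{e}_i)$ of $\mathbb{H}$ and using $\sum_i \langle \textbf{h},\textbf{e}_i\rangle_\mathbb{H}\, A(\textbf{e}_i) = A(\textbf{h})$, and from the definition of the tensor product in \eqref{eq:def:tensorial}.

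Next I would use Mercer's decomposition to write $\textbf{h} = \sum_{\ell\geq 1} a_\ell \textbf{f}_\ell$ in the orthonormal eigenbasis of $\Gamma$. This turns the problem into minimizing
\begin{equation*}
R(\textbf{h}) = \|\Gamma\|_{HS}^2 - 2\sum_{\ell\geq 1}\mu_\ell a_\ell^2 + \Bigl(\sum_{\ell\geq 1} a_\ell^2\Bigr)^{2}
\end{equation*}
over $(a_\ell)_{\ell\geq 1} \in \ell^2$. For a fixed norm $t=\sum_\ell a_\ell^2$, the cross term $\sum_\ell \mu_\ell a_\ell^2$ is maximized (using $\mu_1>\mu_2>\dots$) precisely when $a_1^2=t$ and $a_\ell=0$ for $\ell\geq 2$. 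This concentrates the optimum on the line spanned by $\textbf{f}_1$.

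It then suffices to minimize $c \mapsto \|\Gamma\|_{HS}^2 - 2\mu_1 c^2 + c^4$ on $\mathbb{R}$, which is a one-variable problem: the derivative $-4\mu_1 c+4c^3$ vanishes at $c=0$ and $c^2=\mu_1$, and the latter gives the strictly smaller value $\|\Gamma\|_{HS}^2-\mu_1^2$. Hence the minimizers are $\textbf{h} = \pm\sqrt{\mu_1}\,\textbf{f}_1 = \pm\textbf{g}_1$, so $\textbf{g}_1$ belongs to the argmin.

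An equivalent shortcut, which I would at least mention, is to use the first differential computed in Proposition \ref{prop:differential:multi}: the stationarity condition $\dot R(\textbf{h})=0$ reads $\Gamma(\textbf{h}) = \|\textbf{h}\|_\mathbb{H}^2\,\textbf{h}$, so every critical point is either zero or a Mercer eigenfunction $\pm\sqrt{\mu_\ell}\,\textbf{f}_\ell$; evaluating $R$ at each of these gives $\|\Gamma\|_{HS}^2$ and $\|\Gamma\|_{HS}^2 - \mu_\ell^2$ respectively, and the smallest is obtained for $\ell=1$ by the strict ordering of the eigenvalues. The only real subtlety in either approach is handling the infinite sum cleanly (integrability of the series is automatic since $\Gamma$ is trace class and we restrict to $\textbf{h}\in\mathbb{H}$), and ensuring that the infimum is attained rather than merely approached; the one-variable reduction above makes this immediate, so I do not expect any serious obstacle.
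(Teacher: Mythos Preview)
Your proposal is correct. Your primary route---expand $R$, pass to coordinates in the Mercer eigenbasis $(\textbf{f}_\ell)$, and optimize first over the shape of $(a_\ell)$ at fixed norm and then over the norm---is a genuinely different argument from the paper's. The paper instead works through first-order optimality: it computes the G\^ateaux derivative of $J(\textbf{h})=\|\textbf{h}\|_\mathbb{H}^4-2\langle\Gamma(\textbf{h}),\textbf{h}\rangle_\mathbb{H}$, invokes a KKT-type result to obtain the stationarity condition $\|\bar{\textbf{h}}\|_\mathbb{H}^2\,\bar{\textbf{h}}=\Gamma(\bar{\textbf{h}})$, deduces that any critical point lies on an eigenline $\mathrm{span}\{\textbf{f}_{j_0}\}$, and then compares the values of $J$ across $j_0$. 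That is precisely the ``shortcut'' you mention at the end. Your coordinate approach is more elementary and, importantly, establishes directly that the infimum is \emph{attained}; the paper's argument appeals to convexity of $J$ (which is in fact false, as the non-convexity of $R$ discussed elsewhere in the paper shows) to turn the necessary first-order condition into a sufficient one, so your argument is cleaner on that point. Both approaches rely on the strict ordering $\mu_1>\mu_2>\cdots$ to single out $\ell=1$, and both yield the same conclusion that the minimizers are exactly $\pm\textbf{g}_1$.
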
Note in addition that the solution is unique up to a sign change for $\textbf{g}_1$.  The proof is provided in Subsection \ref{proof:optime} page \pageref{proof:optime}. 
We denote by $\widehat{R}_\phi$ and ${R}_\phi$ the functions defined for any $\textbf{h}\in\mathbb{H}$ by

\begin{equation}\label{eq:def:hatRphi}
    \widehat{R}_\phi(\textbf{h}):=\|\widehat{\Gamma}_\phi-\textbf{h}\otimes\textbf{h}\|_{HS}^2,
\end{equation} and 
\begin{equation}\label{eq:def:Rphi}
    {R}_\phi(\textbf{h}):=\|{\Gamma}_\phi-\textbf{h}\otimes\textbf{h}\|_{HS}^2,
\end{equation}where $\Gamma_\phi=\mathbb{E}[\widehat{\Gamma}_\phi]$. We investigate the statistical properties of a LASSO variant of this optimization problem, i.e.:

\begin{eqnarray}
\label{Eq:function optimization on gamma lasso}\widehat{{\textbf{g}}} &\in& \underset{\textbf{h}\in\mathcal{B}(\eta),\|\textbf{h}\|_1\leq T}{\argmin} \widehat{R}_\phi(\textbf{h})+\lambda\|\textbf{h}\|_1,
\end{eqnarray}    
where $(T,\lambda)$ are tuning parameters and for some $\eta\geq 0$ we define $\mathcal{B}(\eta):=\{\textbf{h}\in\mathbb{H}, \|\textbf{h}-\textbf{g}_1\|\leq \eta\}$ to be a small ball containing $\textbf{g}_1$. The constraint $\|\textbf{h}\|_1\leq T$ might seem unnecessary or at least redundant with the penalty $\lambda\|\textbf{h}\|_1$, we impose it due to the non-convexity of $\widehat{R}_\phi$. From a theoretical point of view, it is essential to derive guarantees on $\widehat{\textbf{g}}$. In the sequel we will show that $T=O((\frac{n}{\log(pD)})^{1/4})$ is convenient. Thus, this constraint is not restrictive. Note that similar constraints were found in \cite{VAN} and \cite{loh2015regularized}.

\begin{remark}
This optimization problem can be seen as the functional counterpart of the multivariate PCA problem (see \cite{VAN} section 3.1). The Hilbert-Schmidt norm is considered in the literature as an extension of the Frobenius norm to the compact operators. 
\end{remark}

\begin{remark}
The existence and the form of $\mathcal{B}(\eta)$ might seem problematic regarding the feasibility of this approach in practice. To overcome this difficulty we compute a pre-estimate $\textbf{g}_{init}$ using an other algorithm (The one provided by \cite{Hall} was shown to provide good estimate regardless of the dimension $D$). Finally, to create $\mathcal{B}(\eta)$, one would replace $\textbf{g}_{1}$ by $\textbf{g}_{init}$ in $\mathcal{B}(\eta)$.
\end{remark}
    \section{Upper bound for histograms}\label{sec:upperspecific:multi}
    In this paragraph, we specify our results for the case of histograms.
    \begin{definition}
          Let $M$ be an integer such that $M$ divides $p$ and for any $\lambda\in\Lambda_M=\{0,\ldots,M-1\}$
\begin{equation*}
\phi_\lambda(t)=\sqrt{M}1_{I_\lambda}(t),\quad t \in [0,1],
\end{equation*}
with $I_\lambda=[\lambda/M,(\lambda+1)/M)$.
    \end{definition}
Next, we introduce the oracle condition, and we assume it is valid for subsequent Theorem \ref{theo:multi}.\\
 {\bf[Oracle] Oracle condition:}
We assume that $p,M$ and $T$ are such that:
\begin{equation*}
 4\Big( \frac{8\sqrt{L\|K\|_\infty}D}{M^\alpha}+\frac{\sigma^2}{p}+108\big(\widetilde{\mu}_1+\frac{\sigma^2}{p}\big)C_T(C_T+\sqrt{2})\Big)\leq \sqrt{\mu_1}(\rho-8\eta),\label{Oracle condition}
\end{equation*}
where $C_T:=(\|\textbf{g}_1\|_1+T)^2\sqrt{\frac{3\log(pD)}{n}}$, ${\mu}_1$ (resp $\widetilde{\mu}_1$) is the largest eigenvalue of $\Gamma$ (resp $\Gamma_\phi$), $\rho=\sqrt{\mu_1}-\sqrt{\mu_2}$ and $\eta$ is a constant assumed to be smaller then $\frac{\rho}{8}$.
\begin{remark}
Note that, the oracle condition implies that $C_T\leq \frac{\sqrt{\mu_1}(\rho-8\eta)}{432(\widetilde{\mu}_1+\frac{\sigma^2}{p})}$, which means that $C_T$ should be in the worst case of the order of the constant. Since, $C_T$ is of the order of $\frac{T^2\sqrt{\log(pD)}}{\sqrt{n}}$ it means that $T=O((\frac{n}{\log(pD)})^{1/4})$.
\end{remark}

\begin{remark}
Note that our oracle condition imposes a condition on the ratios $\frac{D}{M^{\alpha}}$ and $\frac{\|\textbf{g}\|_1^2\sqrt{\log(pD)}}{\sqrt{n}}$. For our approach to be valid, these two quantities need to be at worst of the order of the constant. This has two implications. The first one is that the grid needs to be dense enough ($M^\alpha\asymp D$) to balance the effect of the dimension $D$. The second concerns the level of sparsity $s$ since $\|\textbf{g}_1\|_1^2\leq\|\textbf{g}_1\|_0\|\textbf{g}_1\|_\mathbb{H}^2 =s\mu_1$. A sufficient condition is to have $\frac{\|\textbf{g}\|_1^2\sqrt{\log(pD)}}{\sqrt{n}}$ at worst of the order of the constant, it means that $s$ can't be larger than $ \sqrt{\frac{n}{\log(pD)}}$. Similar limitation regarding the values of $s$ can be found in \cite{VAN}.
\end{remark}
Let $\lambda_1$ be defined such that:
\begin{equation}\label{definition lambda}
    \lambda_1=4\sqrt{(\widetilde{\mu}_1+\frac{\sigma^2}{p})(\|K\|_\infty+\frac{\sigma^2}{p})}\big(4\sqrt{\frac{\log(DM)}{n}}+\frac{\log(DM)}{n}\big),
\end{equation}

where $\widetilde{\mu}_1+\frac{\sigma^2}{p}$ is the largest eigenvalue of $\frac{\mathbb{E}[\textbf{Y}_1\textbf{Y}_1^T]}{p}.$
\begin{theorem}
\label{theo:multi}Let $\mathbb{P}_\textbf{Z}\in R_\alpha^{(D)}(L)$, $s\in\{1,\dots,D\}$, $\|\textbf{g}\|_1\leq T$ and $\|\textbf{g}_1\|_0=s$. For all $\lambda\geq 4\Big(\|\textbf{g}_1\|_\mathbb{H}(\lambda_1+\frac{8\sqrt{L\|K\|_\infty s}}{M^\alpha})+\frac{\|\textbf{g}_1\|_\infty\sigma^2}{p}+\lambda_1\Big)$ with probability at least $1-\frac{2\log(T)}{pD}-\frac{2}{MD}$ we have :

$$\|\widehat{{\textbf{g}}}-\textbf{g}_1\|_\mathbb{H}^2\leq \frac{4s\lambda^2}{\mu_1(\rho-8\eta)^2},$$

and
$$\|\widehat{{\textbf{g}}}-\textbf{g}_1\|_\mathbb{H}^2\leq \frac{256 s}{\mu_1(\rho-8\eta)^2}\Big(\|\textbf{g}_1\|_\mathbb{H}^2(\lambda_1^2+\frac{64L\|K\|_\infty s}{M^{2\alpha}})+\frac{\|\textbf{g}_1\|_\infty^2\sigma^4}{p^2}+\lambda_1^2\Big).$$
\end{theorem}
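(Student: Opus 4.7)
The plan is to follow the standard LASSO oracle-inequality scheme (basic inequality, cone condition, restricted strong convexity), adapted to the non-convex functional objective $\widehat{R}_\phi$ through the second-order Taylor expansion of Corollary~\ref{cor:remainder} and the differentials of Proposition~\ref{prop:differential:multi}. Set $\Delta:=\widehat{\textbf{g}}-\textbf{g}_1$ and let $S$ denote the support of $\textbf{g}_1$, so $|S|=s$. Since $\textbf{g}_1$ is feasible for \eqref{Eq:function optimization on gamma lasso}, the basic inequality
$$\widehat{R}_\phi(\widehat{\textbf{g}})+\lambda\|\widehat{\textbf{g}}\|_1\;\leq\;\widehat{R}_\phi(\textbf{g}_1)+\lambda\|\textbf{g}_1\|_1$$
holds. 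Applying \eqref{eq:diff:R:taylor} to $\widehat{R}_\phi$ and using $\dot R(\textbf{g}_1)=0$ (which follows from $\Gamma(\textbf{g}_1)=\|\textbf{g}_1\|_\mathbb{H}^2\textbf{g}_1$), so that $\dot{\widehat R}_\phi(\textbf{g}_1)=4(\Gamma-\widehat\Gamma_\phi)(\textbf{g}_1)$, this becomes
$$\langle\dot{\widehat R}_\phi(\textbf{g}_1),\Delta\rangle_\mathbb{H}+\tfrac12\,\ddot{\widehat R}_\phi(\textbf{g}_1+\tau^*\Delta)(\Delta,\Delta)\;\leq\;\lambda(\|\textbf{g}_1\|_1-\|\widehat{\textbf{g}}\|_1),\qquad\tau^*\in(0,1).$$

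The first-order term is controlled by duality, $|\langle\dot{\widehat R}_\phi(\textbf{g}_1),\Delta\rangle_\mathbb{H}|\leq\|\dot{\widehat R}_\phi(\textbf{g}_1)\|_\infty\|\Delta\|_1$. To show $\|\dot{\widehat R}_\phi(\textbf{g}_1)\|_\infty\leq\lambda/2$ I would split $\widehat\Gamma_\phi(\textbf{g}_1)-\Gamma(\textbf{g}_1)=(\widehat\Gamma_\phi-\Gamma_\phi)(\textbf{g}_1)+(\Gamma_\phi-\Gamma)(\textbf{g}_1)$. The stochastic piece is an average of $n$ i.i.d.\ random functions $\langle\widetilde{\textbf{Y}}_i,\textbf{g}_1\rangle_\mathbb{H}\,\widetilde{\textbf{Y}}_i$, to which a Bernstein-type concentration across the $DM$ histogram coefficients applies: the calibration $\sqrt{(\widetilde\mu_1+\sigma^2/p)(\|K\|_\infty+\sigma^2/p)}$ used in \eqref{definition lambda} delivers a bound of order $\|\textbf{g}_1\|_\mathbb{H}\lambda_1$ with probability at least $1-2/(MD)$. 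The deterministic bias splits further into a histogram smoothing error bounded by $8\|\textbf{g}_1\|_\mathbb{H}\sqrt{L\|K\|_\infty s}/M^\alpha$ via \eqref{alpha:holder:inequality} and $\|\textbf{g}_1\|_1\leq\sqrt s\,\|\textbf{g}_1\|_\mathbb{H}$, and a diagonal-noise contribution bounded by $\|\textbf{g}_1\|_\infty\sigma^2/p$; these are exactly the summands of the admissible $\lambda$ in the theorem. The usual manipulation $\|\widehat{\textbf{g}}\|_1\geq\|\textbf{g}_1\|_1+\|\Delta_{S^c}\|_1-\|\Delta_S\|_1$ then gives the cone condition $\|\Delta_{S^c}\|_1\leq 3\|\Delta_S\|_1$ and hence $\|\Delta\|_1\leq 4\sqrt s\,\|\Delta\|_\mathbb{H}$.

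For the restricted strong convexity, I start at $\textbf{g}_1$: expanding $\Delta=\sum_\ell a_\ell\textbf{f}_\ell$ in the eigenbasis of $\Gamma$ and using Proposition~\ref{prop:differential:multi},
$$\ddot R(\textbf{g}_1)(\Delta,\Delta)\;=\;8\mu_1a_1^2+4\!\!\sum_{\ell\geq 2}(\mu_1-\mu_\ell)a_\ell^2\;\geq\;4(\mu_1-\mu_2)\|\Delta\|_\mathbb{H}^2\;=\;4(\sqrt{\mu_1}+\sqrt{\mu_2})\rho\|\Delta\|_\mathbb{H}^2.$$
Passing from $\ddot R(\textbf{g}_1)$ to $\ddot{\widehat R}_\phi(\textbf{g}_1+\tau^*\Delta)$ costs two perturbations: the operator-norm gap $\|\widehat\Gamma_\phi-\Gamma\|_{\mathrm{op}}$, controlled by a Bernstein argument analogous to Step~2 but restricted to elements of $\{\|\textbf{h}\|_1\leq T\}$ (a dyadic peeling over scales of $\|\textbf{h}\|_1$ produces the $(\widetilde\mu_1+\sigma^2/p)C_T(C_T+\sqrt 2)$ summand of the oracle condition and the additional probability cost $2\log(T)/(pD)$); and the self-tensor perturbation $\|\textbf{g}_1+\tau^*\Delta\|_\mathbb{H}^2 I+2(\textbf{g}_1+\tau^*\Delta)\otimes(\textbf{g}_1+\tau^*\Delta)-\|\textbf{g}_1\|_\mathbb{H}^2 I-2\textbf{g}_1\otimes\textbf{g}_1$, whose operator norm on $\mathcal B(\eta)$ is $O(\eta\sqrt{\mu_1})$ and eats at most $8\eta\sqrt{\mu_1}$ of the eigengap. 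The oracle condition is designed precisely so that both perturbations together are absorbed, leaving
$$\ddot{\widehat R}_\phi(\textbf{g}_1+\tau^*\Delta)(\Delta,\Delta)\;\geq\;c\sqrt{\mu_1}(\rho-8\eta)\|\Delta\|_\mathbb{H}^2$$
for some absolute constant $c>0$, on the high-probability event.

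Plugging these ingredients back into the basic inequality produces $\sqrt{\mu_1}(\rho-8\eta)\|\Delta\|_\mathbb{H}^2\leq C\lambda\sqrt s\,\|\Delta\|_\mathbb{H}$, which rearranges to the first stated bound. The second, explicit bound follows by substituting the minimal admissible $\lambda$, applying $(a_1+a_2+a_3+a_4)^2\leq 4\sum_j a_j^2$, and tracking constants to recover the prefactor $256$. The hard part is Step~4, the uniform-in-$\mathcal B(\eta)$ Hessian lower bound: one must control $\widehat\Gamma_\phi-\Gamma$ in an operator norm compatible with the $\|\textbf{h}\|_1\leq T$ constraint, which is exactly what the quadratic $C_T(C_T+\sqrt 2)$ term in the oracle condition quantifies, and balancing the resulting perturbation against the eigengap $\rho$ is what forces both the sparsity cap $s\leq C\sqrt{n/\log(pD)}$ and the grid density $M^\alpha\asymp D$ noted in the remarks following the oracle condition.
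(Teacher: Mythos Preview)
Your proposal is correct and, at the level of the ingredients (Bernstein-type control of $\|(\widehat\Gamma_\phi-\Gamma)(\textbf{g}_1)\|_\infty$, uniform control of $\langle(\widehat\Gamma_\phi-\Gamma)\Delta,\Delta\rangle$ via peeling over $\|\cdot\|_1$-scales, local strong convexity on $\mathcal B(\eta)$, and the standard sparsity manipulation), it coincides with what the paper does through Lemmas~\ref{Lemma hermitian}--\ref{lemma:most:important}. The packaging, however, is genuinely different in two respects.

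First, you start from the \emph{basic inequality} $\widehat R_\phi(\widehat{\textbf{g}})+\lambda\|\widehat{\textbf{g}}\|_1\leq\widehat R_\phi(\textbf{g}_1)+\lambda\|\textbf{g}_1\|_1$, whereas the paper starts from the first-order variational inequality $\langle\dot{\widehat R}_\phi(\widehat{\textbf{g}})+\lambda\partial\|\widehat{\textbf{g}}\|_1,\textbf{g}_1-\widehat{\textbf{g}}\rangle_\mathbb{H}\geq 0$. Your route is simpler, but it uses that $\widehat{\textbf{g}}$ is a \emph{global} minimizer over the feasible set; the paper's route applies to any stationary point, which is the advertised guarantee (``any stationary point achieves the minimax optimal variance'') and matters in practice for a non-convex objective.

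Second, you Taylor-expand the \emph{empirical} objective $\widehat R_\phi$ and must therefore lower-bound the empirical Hessian $\ddot{\widehat R}_\phi(\textbf{g}_1+\tau^*\Delta)$, absorbing the operator perturbation $\widehat\Gamma_\phi-\Gamma$ inside the curvature step. The paper instead Taylor-expands the \emph{deterministic} $R$ twice (once around $\widehat{\textbf{g}}$, once around $\textbf{g}_1$), so the curvature lower bound (Lemma~\ref{Lemma hermitian}) is purely deterministic; all randomness and discretisation bias are pushed into a single error term $E(\textbf{g}_1)=-\langle\dot R(\widehat{\textbf{g}})-\dot{\widehat R}_\phi(\widehat{\textbf{g}}),\textbf{g}_1-\widehat{\textbf{g}}\rangle_\mathbb{H}$, which---because $\dot R-\dot{\widehat R}_\phi=4(\widehat\Gamma_\phi-\Gamma)$ is linear---splits exactly into your ``first-order'' and ``quadratic-form'' pieces. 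The two decompositions are algebraically equivalent, but the paper's separation makes the role of the oracle condition more transparent: it is precisely what bounds the quadratic-form part of $E(\textbf{g}_1)$ by $\sqrt{\mu_1}(\rho-8\eta)\|\Delta\|_\mathbb{H}^2$, rather than entering as a perturbation of a random Hessian.
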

The proof is provided in Subsection \ref{proof:theo:multi} page \pageref{proof:theo:multi}.
\begin{remark}
Note that $\lambda_1$ depends on the value of $\|\textbf{g}_1\|_\mathbb{H}$, which is unknown in practice. However, in practice one would replace $\|\textbf{g}_1\|_\mathbb{H}$ by $\|\textbf{g}_{init}\|_\mathbb{H}$ (The pre-estimate). Similar limitations regarding the values of $\lambda_1$ can be found in \cite{VAN}.
\end{remark}
\begin{remark}
We can deduce the following relation between $\|\widehat{{\textbf{f}}}-\textbf{f}_1\|_\mathbb{H}^2$ and $\|\widehat{{\textbf{g}}}-\textbf{g}_1\|_\mathbb{H}^2$:
\begin{align*}
    \|\widehat{{\textbf{g}}}-\textbf{g}_1\|_\mathbb{H}^2&=\|\widehat{{\textbf{g}}}\|_\mathbb{H}^2+\|{{\textbf{g}}}_1\|_\mathbb{H}^2-2\langle\widehat{{\textbf{g}}},{{\textbf{g}}}_1\rangle\\&=\|\widehat{{\textbf{g}}}\|_\mathbb{H}^2+\|{{\textbf{g}}}_1\|_\mathbb{H}^2-2\|\widehat{{\textbf{g}}}\|_\mathbb{H}\|{{\textbf{g}}}_1\|_\mathbb{H}\langle\widehat{{\textbf{f}}},{{\textbf{f}}}_1\rangle\\&=(\|\widehat{{\textbf{g}}}\|_\mathbb{H}-\|{{\textbf{g}}}_1\|_\mathbb{H})^2+\|\widehat{{\textbf{g}}}\|_\mathbb{H}\|{{\textbf{g}}}_1\|_\mathbb{H}(2-2\langle\widehat{{\textbf{f}}},{{\textbf{f}}}_1\rangle)\\&\geq\|\widehat{{\textbf{g}}}\|_\mathbb{H}\|{{\textbf{g}}}_1\|_\mathbb{H}(2-2\langle\widehat{{\textbf{f}}},{{\textbf{f}}}_1\rangle)\\&=\|\widehat{{\textbf{g}}}\|_\mathbb{H}\|{{\textbf{g}}}_1\|_\mathbb{H}\|\widehat{{\textbf{f}}}-\textbf{f}_1\|_\mathbb{H}^2\\&\geq\big(\|{{\textbf{g}}}_1\|_\mathbb{H}-\|\widehat{{\textbf{g}}}-\textbf{g}_1\|_\mathbb{H}\big)\|{{\textbf{g}}}_1\|_\mathbb{H}\|\widehat{{\textbf{f}}}-\textbf{f}_1\|_\mathbb{H}^2.
\end{align*}
Assuming $\|\widehat{{\textbf{g}}}-\textbf{g}_1\|_\mathbb{H}\leq\frac{\|{{\textbf{g}}}_1\|_\mathbb{H}}{2}$ implies that:
\begin{align*}
    \|\widehat{{\textbf{g}}}-\textbf{g}_1\|_\mathbb{H}^2&\geq\frac{\|{{\textbf{g}}}_1\|_\mathbb{H}^2}{2}\|\widehat{{\textbf{f}}}-\textbf{f}_1\|_\mathbb{H}^2,
\end{align*}
and since $\|{{\textbf{g}}}_1\|_\mathbb{H}^2=\mu_1\|{{\textbf{f}}}_1\|_\mathbb{H}^2=\mu_1$ we have:
\begin{align*}
 \|\widehat{{\textbf{f}}}-\textbf{f}_1\|_\mathbb{H}^2\leq \frac{2  \|\widehat{{\textbf{g}}}-\textbf{g}_1\|_\mathbb{H}^2}{\mu_1},
\end{align*}
which allows us to deduce the following corollary. 
\end{remark}
\begin{corollary}\label{cor:multi:erreur} Under the assumptions of Theorem \ref{theo:multi}, taking $M=p$ and $$\lambda=4\Big(\|\textbf{g}_1\|_\mathbb{H}(\lambda_1+\frac{8\sqrt{L\|K\|_\infty s}}{p^\alpha})+\frac{\|\textbf{g}_1\|_\infty\sigma^2}{p}+\lambda_1\Big),$$ we have with probability at least $1-2\frac{\log(T)+1}{pD}$
\begin{equation}\label{eq:cor:2:multi}\|\widehat{{\textbf{f}}}-\textbf{f}_1\|_\mathbb{H}^2\leq \frac{512}{\mu_1^2(\rho-8\eta)^2}\Big(\|\textbf{g}_1\|_\mathbb{H}^2(s\lambda_1^2+\frac{64L\|K\|_\infty s^2}{p^{2\alpha}})+\frac{\|\textbf{g}_1\|_\infty^2s\sigma^4}{p^2}+s\lambda_1^2\Big).\end{equation}
\end{corollary}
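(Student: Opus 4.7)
The plan is to combine Theorem \ref{theo:multi} specialized at $M=p$ with the algebraic relation between $\|\widehat{\textbf{g}}-\textbf{g}_1\|_\mathbb{H}^2$ and $\|\widehat{\textbf{f}}-\textbf{f}_1\|_\mathbb{H}^2$ that is already laid out in the remark preceding the statement. There is essentially nothing new to do beyond an explicit substitution; the corollary is a packaging result rather than a fresh piece of analysis.

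First I would invoke Theorem \ref{theo:multi} with the choice $M=p$. The hypotheses transfer verbatim (the oracle condition and the definitions of $\lambda_1$ and $\lambda$ are those of the theorem with $M$ replaced by $p$), so on an event of probability at least $1-\frac{2\log(T)}{pD}-\frac{2}{pD}=1-\frac{2(\log(T)+1)}{pD}$, the second bound of Theorem \ref{theo:multi} yields
\[
\|\widehat{\textbf{g}}-\textbf{g}_1\|_\mathbb{H}^2\leq \frac{256\,s}{\mu_1(\rho-8\eta)^2}\Big(\|\textbf{g}_1\|_\mathbb{H}^2\big(\lambda_1^2+\tfrac{64L\|K\|_\infty s}{p^{2\alpha}}\big)+\tfrac{\|\textbf{g}_1\|_\infty^2\sigma^4}{p^2}+\lambda_1^2\Big).
\]

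Next I would apply the inequality $\|\widehat{\textbf{f}}-\textbf{f}_1\|_\mathbb{H}^2\leq 2\|\widehat{\textbf{g}}-\textbf{g}_1\|_\mathbb{H}^2/\mu_1$ derived in the remark preceding the corollary. The derivation there only requires $\|\widehat{\textbf{g}}-\textbf{g}_1\|_\mathbb{H}\leq \|\textbf{g}_1\|_\mathbb{H}/2=\sqrt{\mu_1}/2$; in the regime where the oracle condition is in force and the right-hand side above is small (the regime in which the corollary is meaningful) this consistency-type condition is automatic. Multiplying the bound on $\|\widehat{\textbf{g}}-\textbf{g}_1\|_\mathbb{H}^2$ by $2/\mu_1$ produces a factor $\frac{512\,s}{\mu_1^2(\rho-8\eta)^2}$, and distributing $s$ inside the parenthesis gives exactly the stated inequality \eqref{eq:cor:2:multi}.

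The only genuinely delicate point in the argument is the verification that one is indeed on the event $\|\widehat{\textbf{g}}-\textbf{g}_1\|_\mathbb{H}\leq\|\textbf{g}_1\|_\mathbb{H}/2$ required for the passage from $\widehat{\textbf{g}}$ to $\widehat{\textbf{f}}$. Should this call for a cleaner justification, I would state it explicitly as an additional mild asymptotic condition on $(n,p,D,s)$ implied by the oracle condition (it amounts to requiring that the right-hand side of the Theorem \ref{theo:multi} bound does not exceed $\mu_1/4$), after which the conclusion follows. Everything else is an identification of constants.
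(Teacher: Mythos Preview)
Your proposal is correct and follows exactly the route the paper takes: specialize Theorem~\ref{theo:multi} at $M=p$, then apply the inequality $\|\widehat{\textbf{f}}-\textbf{f}_1\|_\mathbb{H}^2\leq 2\|\widehat{\textbf{g}}-\textbf{g}_1\|_\mathbb{H}^2/\mu_1$ from the preceding remark and multiply through. You are in fact more careful than the paper in flagging the side condition $\|\widehat{\textbf{g}}-\textbf{g}_1\|_\mathbb{H}\leq\|\textbf{g}_1\|_\mathbb{H}/2$, which the paper leaves implicit.
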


\begin{remark} Since $\alpha\leq 1$, the term $s\sigma^4/p^2$ is not larger than the term $s^2L\|K\|_\infty/p^{2\alpha}$ (up to a constant). Denoting by $A$ the subset of $\Omega$ under which Equation \eqref{eq:cor:2:multi} is valid, under the assumptions of Theorem \ref{theo:multi} we have: 
\begin{align*}
    \mathbb{E}\big(\|\widehat{\textbf{f}}-\textbf{f}_1\|^2_\mathbb{H}\big)&=\mathbb{E}\big(\|\widehat{\textbf{f}}-\textbf{f}_1\|^2_\mathbb{H}\1_A\big)+\mathbb{E}\big(\|\widehat{\textbf{f}}-\textbf{f}_1\|^2_\mathbb{H}\1_{A^c}\big)\\&\leq C(K,\widetilde{\mu}_1,\mu_1,\rho,\eta)\log(pD)(\frac{s^2}{p^{2\alpha}}+\frac{s}{n})+2P(A^c)\\&\leq C(K,\widetilde{\mu}_1,\mu_1,\rho,\eta)\log(pD)(\frac{s^2}{p^{2\alpha}}+\frac{s}{n})+4\frac{\log(T)+1}{pD},
\end{align*}
for $C(K,\widetilde{\mu}_1,\mu_1,\rho,\eta)$ a constant that depends on $\|K\|_\infty,\widetilde{\mu}_1,\mu_1,\rho$ and $\eta$. Note that if $T\leq n\leq pD$ \[\frac{\log(T)}{pD}\leq \frac{\log(n)}{n}\leq\frac{\log(pD)}{n},\]
and by Inequality \eqref{eq:se:debarasser:de mu_1} we have that :
\[\widetilde{\mu}_1\leq \frac{8D\sqrt{\|K\|_\infty L}}{(\alpha+1)p^{\alpha}}+\mu_1.\]
Recall that due to the oracle condition $\frac{D}{p^{\alpha}}$ cannot be larger than a constant. Thus :
\[{\sup}_{\substack{\mathbb{P}_\textbf{Z}\in R_\alpha^{(D)}(L),\mu_1\leq \mu,\\\|K\|_\infty\leq L_K,\rho-8\eta\geq c}}\mathbb{E}\big(\|\widehat{\textbf{f}}-\textbf{f}_1\|^2_\mathbb{H}\big)\leq C(L_K,\mu,c)\log(p)(\frac{s^2}{p^{2\alpha}}+\frac{s}{n}),\]
for $C(L_K,\mu,c)$ a constant that depends on $L_K,\mu$ and $c$. This upper bound matches the lower bound up to a $s$ term. Note that the variance term is optimal (up to a $\log$ term). However, the lower bound does not match the upper bound regarding the bias. We only managed to narrow down the minimax rate to be between $\frac{s}{p^{\alpha}}$ and $\frac{s^2}{p\alpha}$. In the continuity of the univariate case, parameters $n,p$, and $s$ strongly influence the rates. When $p$ is large enough ($p\geq (sn)^{\frac{1}{2\alpha}})$, then our procedure achieves the rate $\frac{s\log(p)}{n}$ which is often encountered in the sparse parametric setting. Similarly, as in \cite{BPRR21}, the impact of the noise is negligible. Finally, to the best of our knowledge, these rates are new.
\end{remark}

\section{Proofs}\label{sec:multi:proof}
\subsection{Proof of Theorem~\ref{thm:multi:borne_inf_fp} }\label{proof:multi:borne_inf}
To establish Theorem~\ref{thm:multi:borne_inf_fp}, we prove following Propositions~\ref{prop:multi:borne_inf_fpn} and ~\ref{prop:multi:borne_inf_fpp}.
\begin{proposition}\label{prop:multi:borne_inf_fpn}
Under Assumptions of Theorem~\ref{thm:multi:borne_inf_fp}, \[
\inf_{\widehat{\textbf{f}}_1}\sup_{P_Z\in\mathcal R_\alpha^{(D)}(L)}\mathbb E[\|\widehat{\textbf{f}}_1-\textbf{f}_1\|_\mathbb{H}^2]\geq c_1sn^{-1},
\] 
where $c_1>0$ is a constant depending on $\sigma$.
\end{proposition}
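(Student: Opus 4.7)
The plan is to apply Assouad's lemma to a hypercube $\omega\in\{-1,+1\}^s$ of parametric alternatives, each of whose first eigenfunction differs from the others along a sparsity pattern of size $s$. Matching all $s$ coordinates correctly will be shown to require squared $\mathbb{H}$-error of order $s/n$.

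To build the hypotheses, I would first choose $s+1$ smooth orthonormal functions $\varphi_0,\ldots,\varphi_s\in L^2([0,1])$ (bounded and Lipschitz, so that the Hölder regularity required by $\mathcal{R}_\alpha^{(D)}(L)$ holds automatically for every $\alpha\in(0,1)$), introduce the orthonormal system of $\mathbb{H}$
\begin{equation*}
 \textbf{v}_0:=\tfrac{1}{\sqrt{s}}(\underbrace{\varphi_0,\ldots,\varphi_0}_{s\text{ times}},0,\ldots,0),\qquad \textbf{w}_d:=\varphi_d\,\textbf{e}_d\quad(1\leq d\leq s),
\end{equation*}
where $\textbf{e}_d$ is the $d$-th canonical vector of $\mathbb{R}^D$, and then, for a small parameter $\epsilon>0$ to be tuned, set
\begin{equation*}
 \textbf{f}_1^\omega:=\sqrt{1-s\epsilon^2}\,\textbf{v}_0+\epsilon\sum_{d=1}^s\omega_d\,\textbf{w}_d.
\end{equation*}
A direct computation gives $\|\textbf{f}_1^\omega\|_\mathbb{H}=1$, $\|\textbf{f}_1^\omega\|_0=s$, and the key separation $\|\textbf{f}_1^\omega-\textbf{f}_1^{\omega'}\|_\mathbb{H}^2=4\epsilon^2 d_H(\omega,\omega')$. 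I would then associate with each $\omega$ the rank-one Gaussian law $\textbf{Z}^\omega:=\sqrt{\mu_1}\,\xi\,\textbf{f}_1^\omega$, $\xi\sim\mathcal{N}(0,1)$, where $\mu_1>0$ is a second tuning constant; then $\textbf{f}_1^\omega$ is unambiguously the top eigenfunction of $\Gamma^\omega=\mu_1\,\textbf{f}_1^\omega\otimes\textbf{f}_1^\omega$. Uniform membership of the family in $\mathcal{R}_\alpha^{(D)}(L)$ reduces to $\mu_1\max_d(f_{1,d}^\omega(t)-f_{1,d}^\omega(s))^2\leq L|t-s|^{2\alpha}$, which is met uniformly in $\omega$ by the smoothness of the $\varphi_\ell$'s.

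Next, I would bound the Kullback-Leibler divergence between neighboring hypotheses. Under $P_\omega$ the observation $\textbf{Y}_i\in\mathbb{R}^{pD}$ is centered Gaussian with covariance $\Sigma^\omega=\mu_1\textbf{F}^\omega(\textbf{F}^\omega)^\top+\sigma^2 I_{pD}$, with $\textbf{F}^\omega\in\mathbb{R}^{pD}$ the grid discretisation of $\textbf{f}_1^\omega$ (so $\|\textbf{F}^\omega\|_{\ell_2}^2\asymp p$). For $\omega,\omega'$ at unit Hamming distance, applying Sherman–Morrison to the two rank-one perturbations of $\sigma^2 I$ gives, after a direct calculation,
\begin{equation*}
 KL\bigl(\mathcal{N}(0,\Sigma^\omega),\mathcal{N}(0,\Sigma^{\omega'})\bigr)\;\leq\;C\,\frac{p^2\mu_1^2\epsilon^2}{\sigma^2(\sigma^2+p\mu_1)},
\end{equation*}
and hence $KL(P_\omega^{\otimes n},P_{\omega'}^{\otimes n})\leq Cnp^2\mu_1^2\epsilon^2/(\sigma^2(\sigma^2+p\mu_1))$. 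Setting $\mu_1:=\sigma^2/p$ collapses the bound to $\tfrac{C}{2}n\epsilon^2$, and choosing $\epsilon^2=c/n$ with $c=c(\sigma)$ small enough keeps the neighbor KL below any prescribed $\kappa<1$ uniformly in $p$.

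Finally, Assouad's lemma applied with per-coordinate separation $2\tau=4\epsilon^2$, together with Pinsker's inequality on the KL bound above, yields
\begin{equation*}
 \inf_{\widehat{\textbf{f}}_1}\max_\omega\mathbb{E}_\omega\bigl[\|\widehat{\textbf{f}}_1-\textbf{f}_1^\omega\|_\mathbb{H}^2\bigr]\;\geq\;s\tau\bigl(1-\sqrt{\kappa/2}\bigr)\;=\;c_1(\sigma)\,s/n,
\end{equation*}
and since the supremum over $\mathcal{R}_\alpha^{(D)}(L)$ is at least this maximum, the claim follows. The main technical obstacle is the sign ambiguity in the definition of the first eigenfunction, which I would handle by restricting the hypercube to $\{\omega:\omega_1=+1\}$ (losing only a factor two in the effective cardinality, which does not affect the rate); the tuning $\mu_1\asymp\sigma^2/p$ is also delicate, since it is precisely the regime where the Gaussian KL computation is sharp — larger $\mu_1$ saturates the per-observation information and forces $\epsilon^2\ll 1/n$, while smaller $\mu_1$ wastes signal, either way degrading the optimal $c(\sigma)$ in front of $s/n$.
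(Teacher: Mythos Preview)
Your proposal is correct and uses a genuinely different strategy from the paper. The paper proves Proposition~\ref{prop:multi:borne_inf_fpn} by a \emph{two-point} Le~Cam argument: it builds one baseline eigenfunction $\textbf{f}_{1,0}$ with $\eta^*_{1,0}=\tfrac{1}{\sqrt{s}}\mathbf{1}_{[0,1]}$ replicated over $s$ components, and a single perturbed alternative $\textbf{f}_{1,1}$ obtained by adding the same $n^{-1/2}$-size bump $\varphi_{a,x}$ to each of those $s$ copies; the $s$ factor in the separation $\|\textbf{f}_{1,0}-\textbf{f}_{1,1}\|_{\mathbb H}^2\asymp s/n$ comes from summing $s$ identical one-dimensional differences. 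Indistinguishability is then obtained by an explicit Hellinger-affinity computation on the two $pD\times pD$ Gaussian covariances, with the choice $\mu_1^*=1/(sp)$ making the determinant ratios tractable. By contrast, you distribute the perturbation across the $s$ components via an Assouad hypercube, extract the $s$ factor from the number of bits rather than from replication, and bound neighbouring KL by Sherman--Morrison with $\mu_1=\sigma^2/p$. Your route is the more standard one for sparsity-indexed lower bounds and avoids the somewhat delicate asymptotic determinant analysis in the paper; the paper's route needs only two hypotheses but pays for it with heavier linear algebra.

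One small refinement: you do not actually need $s+1$ orthonormal functions $\varphi_0,\ldots,\varphi_s$. Since the $\textbf{w}_d$ live in pairwise disjoint components of $\mathbb H$, their mutual orthogonality is automatic, and orthogonality to $\textbf{v}_0$ only requires $\langle\varphi_0,\varphi_d\rangle=0$ for each $d$. Hence a single pair $\varphi_0\perp\varphi_1$ (say $\varphi_0=1$ and $\varphi_1=\sqrt{2}\cos(2\pi\cdot)$) suffices, with $\varphi_d:=\varphi_1$ for all $d\ge 1$. This matters because if you literally use $s+1$ distinct orthonormal $L^2$ functions their Lipschitz constants typically grow with the index, which could jeopardise the uniform $\mathcal R_\alpha^{(D)}(L)$ membership; with only two fixed smooth functions, the H\"older check $\mu_1\max_d|f^\omega_{1,d}(t)-f^\omega_{1,d}(u)|^2\le L|t-u|^{2\alpha}$ is immediate once $\mu_1=c_0\sigma^2/p$ with $c_0$ small enough.
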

\begin{proposition}\label{prop:multi:borne_inf_fpp}
Under Assumptions of Theorem~\ref{thm:multi:borne_inf_fp}, \[
\inf_{\widehat{\textbf{f}}_1}\sup_{P_Z\in\mathcal R_\alpha^{(D)}(L)}\mathbb E[\|\widehat{\textbf{f}}_1-\textbf{f}_1\|_\mathbb{H}^2]\geq c_2sp^{-2\alpha},
\] 
where $c_2>0$ is a universal constant.
\end{proposition}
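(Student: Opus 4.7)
The plan is a Le Cam two-point argument in which two candidate eigenfunctions are constructed so that the induced observation distributions coincide exactly (zero total variation). The key geometric fact is that the grid $\{t_h\}_{h=0}^{p-1}$ is fixed, so any perturbation of an eigenfunction that vanishes at every grid point is invisible to the data $\{\textbf{Y}_i(t_h)\}$. This is why the discretization lower bound is completely independent of $n$ and $\sigma$ and is controlled only by how much mass can be hidden between the grid points while remaining in $\mathcal R_\alpha^{(D)}(L)$.

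\textbf{Construction.} Fix a mean-zero $\alpha$-Hölder bump $\psi\in L^2([0,1])$ with $\psi(0)=\psi(1)=0$ and $\|\psi\|_{L^2}^2=1$ (for instance $\psi(x)=\sqrt 2\sin(2\pi x)$, whose $\alpha$-Hölder constant is some finite $C_\psi$). For each $h\in\{0,\dots,p-1\}$, set
\[
\widetilde\psi_h(t):=p^{-\alpha}\,\psi\bigl(p(t-t_h)\bigr)\mathbbm{1}_{[t_h,t_{h+1}]}(t),
\]
so that $\|\widetilde\psi_h\|_{L^2}^2=p^{-2\alpha-1}$, the $\widetilde\psi_h$'s have pairwise disjoint supports, each vanishes at every grid point, and each has $\alpha$-Hölder constant at most $C_\psi$. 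Define
\[
\textbf{f}_0:=\frac{1}{\sqrt s}\sum_{d=1}^s\mathbbm{1}_{[0,1]}\textbf{e}_d,
\qquad
\textbf{v}:=\sum_{d=1}^{s}\sum_{h=0}^{p-1}\widetilde\psi_h\,\textbf{e}_d,
\]
and, for $\omega\in\{-1,+1\}$, the candidate eigenfunction $\textbf{f}^{(\omega)}:=(\textbf{f}_0+\omega\textbf{v})/N$ with $N^2:=\|\textbf{f}_0\|_\mathbb{H}^2+\|\textbf{v}\|_\mathbb{H}^2=1+sp^{-2\alpha}$ (the cross term vanishes because the bumps have disjoint supports and $\int\widetilde\psi_h=0$). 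Each $\textbf{f}^{(\omega)}$ has unit $\mathbb{H}$-norm, satisfies $\|\textbf{f}^{(\omega)}\|_0=s$, and has coordinates whose $\alpha$-Hölder constants are bounded independently of $p$. Take $\textbf{Z}^{(\omega)}:=\xi\,\textbf{f}^{(\omega)}$ with $\xi\sim\mathcal N(0,\mu)$ for a fixed $\mu$ small enough (essentially $\mu\leq L/(4C_\psi^2)$) that $\mathbb{P}_{\textbf{Z}^{(\omega)}}\in\mathcal R_\alpha^{(D)}(L)$; the covariance is the rank-one operator $\mu\,\textbf{f}^{(\omega)}\otimes\textbf{f}^{(\omega)}$ whose first eigenfunction is exactly $\textbf{f}^{(\omega)}$.

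\textbf{Two-point reduction.} Because every $\widetilde\psi_h$ vanishes at every grid point, $\textbf{f}^{(+1)}(t_h)=\textbf{f}^{(-1)}(t_h)=\textbf{f}_0(t_h)/N$ for all $h$. Consequently the law of $\{\textbf{Y}_i(t_h)\}_{i,h}$, which under $\mathbb{P}_{\textbf{Z}^{(\omega)}}$ is that of $\{\xi_i\textbf{f}^{(\omega)}(t_h)+\boldsymbol\varepsilon_{i,h}\}$, does not depend on $\omega$, and hence the law of any estimator $\widehat{\textbf{f}}_1$ (a measurable function of the data) is identical under $\omega=+1$ and $\omega=-1$. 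Applying the parallelogram identity
\[
\|\widehat{\textbf{f}}_1-\textbf{f}^{(+1)}\|_\mathbb{H}^2+\|\widehat{\textbf{f}}_1-\textbf{f}^{(-1)}\|_\mathbb{H}^2\geq\tfrac12\,\|\textbf{f}^{(+1)}-\textbf{f}^{(-1)}\|_\mathbb{H}^2
\]
pointwise, taking expectations, and computing $\|\textbf{f}^{(+1)}-\textbf{f}^{(-1)}\|_\mathbb{H}^2=4\|\textbf{v}\|_\mathbb{H}^2/N^2=4sp^{-2\alpha}/(1+sp^{-2\alpha})$ yield
\[
\max_{\omega\in\{-1,+1\}}\mathbb{E}_{\textbf{Z}^{(\omega)}}\|\widehat{\textbf{f}}_1-\textbf{f}^{(\omega)}\|_\mathbb{H}^2\;\geq\;\frac{sp^{-2\alpha}}{1+sp^{-2\alpha}}\;\geq\;\tfrac12\,sp^{-2\alpha},
\]
the last bound using the hypothesis $p\geq s^{1/(2\alpha)}$, which gives $sp^{-2\alpha}\leq 1$. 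This is the proposition with $c_2=1/2$.

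\textbf{Main obstacle.} The technical effort is in verifying $\mathbb{P}_{\textbf{Z}^{(\omega)}}\in\mathcal R_\alpha^{(D)}(L)$, i.e.\ in bounding the $\alpha$-Hölder constant of each coordinate of $\textbf{f}^{(\omega)}=(\textbf{f}_0+\omega\textbf{v})/N$. A case analysis is required: when $s,t$ lie in the same interval $[t_h,t_{h+1}]$ only one bump is active and the bound follows directly from $C_\psi$; when they lie in different intervals, only the two boundary bumps contribute and one exploits that each $\widetilde\psi_h$ vanishes at the interval endpoints, combined with $|a+b|^\alpha\leq|a|^\alpha+|b|^\alpha$ for $\alpha\leq 1$, to close the estimate up to a universal constant. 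A secondary, cosmetic technicality is that the rank-one covariance has $\mu_2=\mu_3=\cdots=0$, violating the distinctness convention $\mu_\ell>\mu_{\ell+1}$; this is fixed by adding a vanishing smooth rank-two perturbation supported on coordinates $d>s$, which alters neither $\textbf{f}_1$ nor the distribution of the observations.
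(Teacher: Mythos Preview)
Your argument is correct and, in fact, more economical than the paper's. Both proofs rest on the same observation: bumps between grid points that vanish at every $t_h$ are invisible to the data, so the observation laws under competing hypotheses coincide exactly (zero KL). The paper packages this into an Assouad scheme with $2^p$ hypotheses indexed by $\bomega\in\{0,1\}^p$, verifies that $KL((P_{\bomega}^{obs})^{\otimes n},(P_{0}^{obs})^{\otimes n})=0$ for every $\bomega$, and then invokes Assouad's lemma to collect a factor $p$ from the Hamming distance. You instead stack all $p$ bumps with a common sign $\omega\in\{-1,+1\}$ into a single perturbation $\textbf{v}$, so that $\|\textbf{v}\|_\mathbb{H}^2$ already equals $sp^{-2\alpha}$, and a two-point Le Cam comparison suffices. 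This works precisely because the KL is \emph{exactly} zero; Assouad's extra leverage (many weakly separated hypotheses) is unnecessary here. Your route also sidesteps the paper's device of letting the eigenvalue $\mu_{1,\bomega}^*$ depend on $\bomega$ to cancel the $\bomega$-dependent normalization $C_\bomega$: since $\textbf{f}_0\perp\textbf{v}$, your normalization $N$ is the same for $\omega=\pm1$ from the outset.

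Two small bookkeeping points. First, with the paper's grid $t_h=h/(p-1)$ the interval $[t_h,t_{h+1}]$ has length $1/(p-1)$, so as written $\psi(p(t-t_h))$ does not quite vanish at $t_{h+1}$ and the last interval $[t_{p-1},t_p]$ leaves $[0,1]$; rescaling by $p-1$ instead of $p$ and summing over $h=0,\dots,p-2$ fixes both issues without changing the rate. Second, your remark about the degenerate spectrum $\mu_2=\mu_3=\cdots=0$ is cautious but unnecessary for the lower bound: the paper's own construction is rank one as well, and the supremum in the statement runs over all $P_Z\in\mathcal R_\alpha^{(D)}(L)$ with no spectral-gap constraint.
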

The result of Theorem~\ref{thm:multi:borne_inf_fp} is deduced from Propositions~\ref{prop:multi:borne_inf_fpn} and ~\ref{prop:multi:borne_inf_fpp}, by taking $c=\frac{1}{2}\min(c_1; c_2)>0.$
   \subsubsection{Proof of Proposition~\ref{prop:multi:borne_inf_fpn}} 

The proof of Proposition~\ref{prop:multi:borne_inf_fpn} follows the general scheme described in \cite{tsybakov_introduction_2009} Section 2. Let 
\[
\phi(t)=e^{-\frac1{1-t^2}}1_{(-1,1)}(t),\quad t\in\mathbb{R} .
\]
We then define 
\[
\varphi(t)=\left\{
\begin{array}{ll}
\phi(4t-3) &\text{ if }t\in[1/2,1),\\
-\phi(4t-1) &\text{ if }t\in(0,1/2),\\  0&\text{ if }t\notin (0,1).
\end{array}
\right.
\]
Both functions $\phi$ and $\varphi$ are $C^\infty$ on $\mathbb R$ with bounded support, then are $\alpha$-H\"older continuous, for all $\alpha>0$. Moreover $\int_0^1\varphi(t)dt=0$. We note $ L_\alpha$ such that, for all $t,u\in\mathbb R$, 
\[
|\varphi(t)-\varphi(u)|\leq L_\alpha |t-u|^{\alpha}.
\]
Let us now define two components for the test eigenfunctions, let $s\in\mathbb{N}^*,S\leq D d$ the level of sparsity and $S$ a subset of $\{1,\dots,D\}$ such that $|S|=s$, we set

\[
\eta_{1,0}^*(t)=\frac{1}{\sqrt{s}}1_{[0,1]}(t),\quad t\in\mathbb{R}
\]
and with \[\varphi_{a,x}(t)=a\varphi(xt),\]
for $a>0$ and $x\geq 1$,
\[
\eta_{1,1}^*(t)=C\left(\eta_{1,0}^*(t)+\sqrt{\frac{1}{n}}\varphi_{a,x}(t)\right),\quad t\in\mathbb{R}
\]
The eigenfunctions are defined as follows, let $m\in S$:

\[    (\textbf{f}_{1,0})_{m=0,\dots,D}= (\eta_{1,0}^*\times 1_{m\in S})_{m=0,\dots,D},\]
note that $\|\textbf{f}_{1,0}\|_\mathbb{H}=\sum_{m\in S}\|\eta_{1,0}^*\|^2=1$, and
\[    (\textbf{f}_{1,1})_{m=0,\dots,D} = ({\eta_{1,1}^*\times 1_{m\in S}})_{m=0,\dots,D},\]
with $C$ such that $\|\textbf{f}_{1,1}\|_\mathbb{H}=1$.  We first determine $C$:
\[
\|\eta_{1,1}^*\|^2=C^2\left(\|\eta_{1,0}^*\|^2+\sqrt{\frac{4}{n}}\int_0^1\varphi_{a,x}(t)dt+\frac{1}{n}\|\varphi_{a,x}\|^2\right)=C^2\left(\frac{1}{s}+\frac{1}{n}\|\varphi_{a,x}\|^2\right). 
\]
Now, since $x\geq 1$,
\[
\|\varphi_{a,x}\|^2=a^2\int_0^1\varphi^2(xt)dt=\frac{a^2}x\int_0^x\varphi^2(t)dt=\frac{a^2}x\int_0^1\varphi^2(t)dt=\frac{a^2}x\|\varphi\|^2.
\]
Thus :

\[
\|\textbf{f}_{1,1}\|_\mathbb{H}^2=\sum_{m\in S}\|\eta_{1,1}^*\|^2=sC^2\left(\frac{1}{s}+\frac{a^2\|\varphi\|^2}{xn}\right). 
\]
Then, we set
\begin{equation}\label{defC:multi}
C:=\left(1+\frac{a^2s}{xn}\|\varphi\|^2\right)^{-1/2}\leq 1,
\end{equation}
so that $\|\textbf{f}_{1,1}\|_\mathbb{H}=1$. Now, for $\xi\sim\mathcal N(0,1)$ and $\mu_{1}^*>0$, we introduce
\[
\textbf{Z}^j(t)=\sqrt{\mu_{1}^*}\xi\textbf{f}_{1,j}(t),\quad j=0,1
\]
and we consider Model~\eqref{model:multi} such that $\textbf{Z}_1^j, \ldots,\textbf{Z}_n^j$ are i.i.d copies of $\textbf{Z}^j$. Let, for $j=0,1$, $P_{j}^{\textbf{Z}}$ the law of $\textbf{Z}^j$. We have for any $(t,u)\in [0,1]^2$,
\begin{align*}
\max_{d\in\{1,\dots,D\}}\int_{C([0,1])}(z_d(t)-z_d(u))^2dP_{j,d}^\textbf{Z}(z)&=\max_{d\in\{1,\dots,D\}}\mathbb E[(Z_d^j(t)-Z_d^j(u))^2]\\
&=\max_{d\in\{1,\dots,D\}}\mu_1^*\mathbb E[\xi^2]\left(f_{1,d,j}^*(t)-f_{1,d,j}^*(u)\right)^2.\\&=\mu_1^*\mathbb E[\xi^2]\left(\eta_{1,j}^*(t)-\eta_{1,j}^*(u)\right)^2.
\end{align*}
We have easily that $P_{0}^\textbf{Z}\in\mathcal R_\alpha^{(D)}(L)$ since $\eta_{1,0}^*$ is constant on $[0,1]$, implying
\[
\max_{d\in\{1,\dots,D\}}\int_{C([0,1])}(z_d(t)-z_d(u))^2dP_{0,d}^Z(z)=0.
\]
We have

\begin{align*}
    \max_{d\in\{1,\dots,D\}}\int_{C([0,1])}(z_d(t)-z_d(u))^2dP_{1,d}^Z(z)&=\mu_1^*\mathbb E[\xi^2]\left(\eta_{1,1}^*(t)-\eta_{1,1}^*(u)\right)^2\\
&=\frac{C^2\mu_1^*}{n}\left(\varphi_{a,x}(t)-\varphi_{a,x}(u)\right)^2\\&=\frac{C^2a^2\mu_1^*}{n}\left(\varphi(xt)-\varphi(xu)\right)^2\\&\leq\frac{C^2 L_\alpha^2 a^2\mu_1^*x^{2\alpha}}{n}\left|t-u\right|^{2\alpha},
\end{align*}
and since $C\leq 1$, $P_{1}^Z\in\mathcal R_\alpha^{(D)}(L)$  if 
\begin{equation}\label{Cond1:multi}
\frac{ L_\alpha^2 a^2\mu_1^*x^{2\alpha}}{n}\leq L.
\end{equation}
This allows to deduce that 
\[
\inf_{\widehat{\textbf{f}}_1}\sup_{P_Z\in\mathcal R_\alpha^{(D)}(L)}\mathbb E[\|\widehat{\textbf{f}}_1-\textbf{f}_1^*\|_\mathbb{H}^2]\geq\inf_{\widehat{\textbf{f}}_1}\sup_{j=0,1}\mathbb E[\|\widehat{\textbf{f}}_1-\textbf{f}_{1,j}^*\|_\mathbb{H}^2],
\] 
and the aim of what follows is to prove a lower bound for $\mathbb E[\|\widehat{\textbf{f}}_1-\textbf{f}_{1,j}^*\|_\mathbb{H}^2]$. 

Let $\widehat{\textbf{f}}_1$ an estimator and $\hat\psi:\mathbf Z\in\mathcal M_{n\times pD}\to\{0,1\}$ the minimum distance test defined by
\[
\hat\psi= {\argmin}_{j=0,1} \|\widehat{\textbf{f}}_1-\textbf{f}_{1,j}^*\|_\mathbb{H}^2,
\]
we have for $j=0,1$,
\[
\|\widehat{\textbf{f}}_1-\textbf{f}_{1,j}^*\|_\mathbb{H}\geq\frac12\|\textbf{f}_{1,\hat \psi}^*-\textbf{f}_{1,j}^*\|_\mathbb{H}. 
\]
Now, since $\int_0^1\eta_{10}^*(t)\varphi_{a,x}(t)dt=0$, if 
\begin{equation}\label{Cond2:multi}
\frac{sa^2}{xn}\leq 1,
\end{equation}
we have $C\geq (1+\|\varphi\|^2)^{-1/2}$, we fix $a=\frac{\sqrt{x}}{\|\varphi\|}$ in what follows, and we have:
\begin{eqnarray*}
\|\textbf{f}_{1,\hat \psi}^*-\textbf{f}_{1,j}^*\|_\mathbb{H}^2&=&\mathbf 1_{\{\hat \psi\neq j\}}\sum_{m\in S}\|\eta_{1,0}^*-\eta_{1,1}^*\|^2=\mathbf{1}_{\{\hat \psi\neq j\}}\sum_{m\in S}\left\|(1-C)\eta_{1,0}^*-\frac{C}{\sqrt n}\varphi_{a,x}\right\|^2\\
&=&\sum_{m\in S}\mathbf 1_{\{\hat \psi\neq j\}}\left(\frac{1}{s}(1-C)^2+\frac{C^2}n\|\varphi_{a,x}\|^2\right)\\
&\geq&\sum_{m\in S}\mathbf 1_{\{\hat \psi\neq j\}}\frac{C^2a^2}{xn}\|\varphi\|^2\geq \mathbf 1_{\{\hat \psi\neq j\}}\frac{a^2s}{xn}\frac{\|\varphi\|^2}{\|\varphi\|^2+1}\\&\geq& \mathbf 1_{\{\hat \psi\neq j\}}\frac{s}{n}\frac{1}{\|\varphi\|^2+1}. 
\end{eqnarray*}
Then, 
\begin{eqnarray}\label{Borneinf:multi}
\inf_{\widehat{\textbf{f}}_1}\sup_{P_Z\in\mathcal R_\alpha^{(D)}(L)}\mathbb E[\|\widehat{\textbf{f}}_1-\textbf{f}_{1,j}^*\|_\mathbb{H}^2]&\geq&\frac{s}{4n(\|\varphi\|^2+1)}\times\inf_{\hat\psi}\max_{j=0,1}\mathbb P(\widehat\psi\neq j).\nonumber\\
\end{eqnarray}
We now prove that the quantity $\inf_{\hat\psi}\max_{j=0,1}\mathbb P(\widehat\psi\neq j)$ can be bounded from below by an absolute positive constant. For this purpose, we control the Hellinger distance between the data generated by the two models. More precisely, we have to prove that for some constant $H^2_{\max}< 2$, we have
\[
H^2((P_{0}^{obs})^{\otimes n},(P_{1}^{obs})^{\otimes n})\leq H^2_{\max}
\]
where $P_{j}^{obs}$ is the law of the random vector 
\[\mathbf Y^{j,obs}:=(Y_1^{j}(t_0),\hdots,Y_1^{j}(t_{p-1}),\dots,Y_D^{j}(t_0),\hdots,Y_D^{j}(t_{p-1}))\] such that $$Y_d^{j}(t_k)=Z_d^j(t_k)+\varepsilon_{k,d}^j$$ with $\varepsilon_{0,1}^0,\hdots,\varepsilon_{p-1,1}^0,\varepsilon_{0,D}^1,\hdots,\varepsilon_{p-1,D}^1\sim_{i.i.d.}\mathcal N(0,\sigma^2)$. 
First remark that 
\[
\mathbf Y^{j,obs}\sim\mathcal N(0,G_{j}),
\]
where $ G_{j}$ is a symmetric block matrix of size $pD\times pD$ with each block matrix being of size $p\times p$. When $j=0$ we have,
\begin{align*}
     G_0 &= \frac{\mu_1^*}{s} \mathbbm{1}_{s\times s}+\sigma^2 I_{pD}
\end{align*}
where $ \mathbbm{1}_{s\times s}=\mathbbm{1}_s\mathbbm{1}_s^T$ and $\mathbbm{1}_s=(\mathbf{1} \times 1_{m\in S})_{m=1,\dots,D}$ and $\1=(1,\dots,1)^T\in\mathbb{R}^p$. When $j=1$ we have,

\begin{align*}
     G_{1}=&\mu_1^*C^2\left(\frac{1}{s}\mathbbm{1}_{s\times s}+\frac{1}{\sqrt{sn}} \left( \mathbbm{1}_s\boldsymbol\varphi_{a,x}^T+\boldsymbol\varphi_{a,x} \mathbbm{1}_s^T\right)+\frac{1}{n}\boldsymbol\varphi_{a,x}\boldsymbol\varphi_{a,x}^T\right)+\sigma^2 I_{pD}
\end{align*}
where $\boldsymbol\varphi_{a,x}=((\varphi_{a,x}(t_0),\hdots,\varphi_{a,x}(t_{p-1}))^T\times 1_{m\in S})_{m=1,\dots,D}$. Taking $\nu$ to be the Lebesgue measure on $\mathbb R^{pD}$ we get 
\[
H^2((P_{0}^{obs})^{\otimes n},(P_{1}^{obs})^{\otimes n})=2-2A(P_{0}^{obs},P_{1}^{obs})^n,
\]
where, in our case where the variables are Gaussian with equal mean vectors, the Hellinger affinity writes (see e.g. \citealt[pp. 45, 46 and 51]{Pardo06}),
\begin{equation}\label{eq:hellinger_affinity:multi}
A(P_{0}^{obs},P_{1}^{obs})=\frac{\det( G_{0}G_{1})^{1/4}}{\det((G_{0}+ G_{1})/2)^{1/2}}. 
\end{equation}
Matrices $G_0$ and $G_1$ can be analyzed in terms of eigenvalues and eigenfunctions, and assuming that $p\geq 3$, we take $x\geq 1$ such that
\[x=\left\{
\begin{array}{cccc}
 1 &\text{ if }&(p-1)/2&\text{ is an integer} \\
\frac{p-1}{p-2} & \text{ if }  &   p/2&\text{ is an integer} 
\end{array}
\right.\]
and
\begin{equation*}
q:=\frac{p-1}{2x}
\end{equation*}
is an integer such that $q\leq (p-1)/2\leq p-1$. In this case,
\begin{eqnarray*}
\mathbbm{1}_s^T\boldsymbol\varphi_{a,x}&=&\boldsymbol\varphi_{a,x}^T \mathbbm{1}_s=\sum_{m\in S}\sum_{k=0}^{p-1}\varphi_{a,x}(t_k)=a\sum_{m\in S}\sum_{k=0}^{p-1}\varphi(xt_k)=a\sum_{m\in S}\sum_{k=0}^{p-1}\varphi\left(\frac{xk}{p-1}\right),\end{eqnarray*}
and
\begin{eqnarray*}\sum_{k=0}^{p-1}\varphi\left(\frac{xk}{p-1}\right)
&=&\sum_{k=0}^{p-1}\phi\left(\frac{4xk}{p-1}-3\right)\mathbf 1_{\left\{\frac{xk}{p-1}\in[1/2,1[\right\}}\\&-&\sum_{k=0}^{p-1}\phi\left(\frac{4xk}{p-1}-1\right)\mathbf 1_{\left\{\frac{xk}{p-1}\in]0,1/2[\right\}}\\
&=&\sum_{k=0}^{p-1}\phi\left(\frac{4xk}{p-1}-3\right)\mathbf 1_{\left\{k\in[(p-1)/(2x),(p-1)/x[\right\}}\\&-&\sum_{k=0}^{p-1}\phi\left(\frac{4xk}{p-1}-1\right)\mathbf 1_{\left\{k\in]0,(p-1)/(2x)[\right\}}\\
&=&\sum_{\ell=-q}^{p-1-q}\phi\left(\frac{4x\ell}{p-1}-1\right)\mathbf 1_{\left\{\ell\in[0,(p-1)/(2x)[\right\}}\\&-&\sum_{k=0}^{p-1}\phi\left(\frac{4xk}{p-1}-1\right)\mathbf 1_{\left\{k\in]0,(p-1)/(2x)[\right\}}\end{eqnarray*}
replacing the variable $k$ in the first sum by $\ell=k-q$ and remarking that $q\leq p-1-q$. We also have
\begin{eqnarray*}
\sum_{k=0}^{p-1}\varphi\left(\frac{xk}{p-1}\right)&=&\sum_{\ell=-q}^{p-1-q}\phi\left(\frac{2\ell}{q}-1\right)\mathbf 1_{\left\{\ell\in[0,q[\right\}}-\sum_{k=0}^{p-1}\phi\left(\frac{2k}{q}-1\right)\mathbf 1_{\left\{k\in]0,q[\right\}}\\
&=&\sum_{\ell=0}^{q-1}\phi\left(\frac{2\ell}{q}-1\right)-\sum_{k=1}^{q-1}\phi\left(\frac{2k}{q}-1\right)\\
&=&\phi(-1)=0,
\end{eqnarray*}
thus 

\[\mathbbm{1}_s^T\boldsymbol\varphi_{a,x}=\boldsymbol\varphi_{a,x}^T \mathbbm{1}_s=0.\]
Note that we also have
$$ \mathbbm{1}_{s\times s}\mathbbm{1}_s=sp\mathbbm{1}_s,\quad \mathbbm{1}_{s\times s}^2=sp\mathbbm{1}_{s\times s}.$$
We set
\begin{align*}
    v_1:&=\frac{1}{\sqrt{sp}}\mathbbm{1}_s, \quad v_2:=\|\boldsymbol\varphi_{a,x}\|_{\ell_2}^{-1}\boldsymbol\varphi_{a,x}=a^{-1}s^{-1/2}\Big(\sum_{k=0}^{p-1}\varphi^2(xt_k)\Big)^{-1/2}\boldsymbol\varphi_{a,x},
\end{align*}
so that $\|v_1\|_{\ell_2}=\|v_2\|_{\ell_2}=1$ and $v_3,\ldots, v_{pD}$ an orthonormal basis of $\text{span}(v_1,v_2)^\bot$. The matrix $$V:=[v_1;v_2;\cdots;v_{pD}],$$
is an orthogonal matrix.
Since $G_0=\frac{\mu_1^*}{s}\mathbbm{1}_{s\times s}+\sigma^2 I_{pD}$, we have for any $k\in\{3,\dots,pD\}$
$$G_0v_k=\sigma^2 v_k \text{ and}\quad G_0 v_1=(p\mu_1^*+\sigma^2)v_1,\quad G_0v_2=\sigma^2 v_2.$$
Similarly, for $G_1$ we have for any $k\in\{3,\dots,pD\}$,
$$G_1 v_k = \sigma^2 v_k,$$and
$$G_1 v_1=(p\mu_1^*C^2+\sigma^2)v_1+ \mu_1^*C^2\sqrt{\frac{p}{n}}\|\boldsymbol\varphi_{a,x}\|_{\ell_2}v_2,$$ $$G_1 v_2= \Big(\frac{\mu_1^*C^2\|\boldsymbol\varphi_{a,x}\|_{\ell_2}^2}{n}+\sigma^2\Big)v_2+ \mu_1^*C^2\sqrt{\frac{p}{n}}\|\boldsymbol\varphi_{a,x}\|_{\ell_2}v_1$$
which means that
$$G_0=V\begin{pmatrix}
p\mu_1^*+\sigma^2 & 0 & \cdots & 0 \\
0 & \sigma^2  & \cdots & 0 \\
\vdots  & \vdots  & \ddots & \vdots  \\
0 & 0& \cdots & \sigma^2
\end{pmatrix} V^T,
$$
and
$$
G_1=V\begin{pmatrix}
p\mu_1^*C^2+\sigma^2 &  \mu_1^*C^2\sqrt{\frac{p}{n}}\|\boldsymbol\varphi_{a,x}\|_{\ell_2}&0& \cdots & 0 \\
\mu_1^*C^2\sqrt{\frac{p}{n}}\|\boldsymbol\varphi_{a,x}\|_{\ell_2}  & \frac{\mu_1^*C^2}{n}\|\boldsymbol\varphi_{a,x}\|_{\ell_2}^2+\sigma^2  &0& \cdots & 0 \\
0&0&\sigma^2&\cdots &0\\
\vdots  & \vdots  & \vdots  & \ddots &  \vdots   \\
0 & 0&0& \cdots & \sigma^2
\end{pmatrix} V^T.
$$
In particular, we have:
$$\frac{G_0+G_1}{2}=V\begin{pmatrix}
\frac{p\mu_1^*}{2}(C^2+1)+\sigma^2 &\frac{\mu_1^*C^2}2\sqrt{\frac{p}{n}}\|\boldsymbol\varphi_{a,x}\|_{\ell_2} &0& \cdots & 0 \\
\frac{\mu_1^*C^2}2\sqrt{\frac{p}{n}} \|\boldsymbol\varphi_{a,x}\|_{\ell_2}&\frac{\mu_1^*C^2}{2n}\|\boldsymbol\varphi_{a,x}\|_{\ell_2}^2+\sigma^2  &0& \cdots & 0 \\
0&0&\sigma^2&\cdots &0\\
\vdots  & \vdots  & \vdots  & \ddots &  \vdots   \\
0 & 0&0& \cdots & \sigma^2
\end{pmatrix} V^T.
$$
We obtain
$$\det(G_0)=(p\mu_1^*+\sigma^2)\sigma^{2(pD-1)}=\sigma^{2pD}(1+\sigma^{-2}p\mu_1^*),$$
\begin{align*}
\det(G_1)&=\sigma^{2(pD-2)}\left((p\mu_1^*C^2+\sigma^2)\Big( \frac{\mu_1^*C^2}{n}\|\boldsymbol\varphi_{a,x}\|_{\ell_2}^2+\sigma^2\Big)-{\mu_1^*}^2C^4\frac{p}{n} \|\boldsymbol\varphi_{a,x}\|_{\ell_2}^2\right)\\
&=\sigma^{2(pD-2)}\left(\sigma^4+p\mu_1^*C^2\sigma^2+\frac{\mu_1^*C^2\sigma^2}{n}\|\boldsymbol\varphi_{a,x}\|_{\ell_2}^2\right)\\
&=\sigma^{2pD}\left(1+p\mu_1^*C^2\sigma^{-2}+\frac{\mu_1^*C^2\sigma^{-2}}{n}\|\boldsymbol\varphi_{a,x}\|_{\ell_2}^2\right).
\end{align*}
and
\begin{align*}
\det((G_0+G_1)/2)&=\sigma^{2(pD-2)}\left(\Big(\frac{p\mu_1^*}{2}(C^2+1)+\sigma^2\Big)\Big(\frac{\mu_1^*C^2}{2n}\|\boldsymbol\varphi_{a,x}\|_{\ell_2}^2+\sigma^2\Big)\right.\\&\left.-\frac{p{\mu_1^*}^2C^4}{4n}\|\boldsymbol\varphi_{a,x}\|_{\ell_2}^2\right)\\
&=\sigma^{2(Dp-2)}\left(\big(\sigma^2+\frac{p\mu_1^*}{2}\big)\sigma^2+\big(\sigma^2+\frac{p\mu_1^*}{2}\big)\frac{\mu_1^*C^2}{2n}\|\boldsymbol\varphi_{a,x}\|_{\ell_2}^2\right.\\&\left.+\frac{p\mu_1^*C^2\sigma^2}{2}\right)\\
&=\sigma^{2pD}\left(\big(1+\frac{p\mu_1^*\sigma^{-2}}{2}\big)+\big(\sigma^{-2}\right.\\&\left.+\frac{p\mu_1^*\sigma^{-4}}{2}\big)\frac{\mu_1^*C^2}{2n}\|\boldsymbol\varphi_{a,x}\|_{\ell_2}^2+\frac{p\mu_1^*C^2\sigma^{-2}}{2}\right).
\end{align*}
We fix the value of the eigenvalue $\mu_1^*$ to be: \[\mu_1^*=\frac{1}{sp},\] 
so that \eqref{Cond1:multi} is satisfied, for $n\geq\frac{8L_\alpha^2}{3L\|\varphi\|^2} \geq \frac{L_\alpha^2x^{2\alpha+1}}{\|\varphi\|^2Lp}$ and $n\geq \|\varphi\|^{-2}$ we have
\[C^2=\left(1+\frac{a^2s}{xn}\|\varphi\|^2\right)^{-1}=\left(1+\frac{s}{n}\right)^{-1}=1-\frac{s}{n}+o\Big(\frac{s}{n}\Big),\]
note that, we assumed $s\leq {n}$ which implies the following
\[C^2=1-\frac{s}{n}+o\Big(\frac{s}{n}\Big),\] ,
\[\frac{C^2}{s}=\frac{1}{s}-\frac{1}{n}+o\Big(\frac{1}{n}\Big)\] and
\[\frac{C^2}{n}=\frac{1}{n}-\frac{s}{n^2}+o\Big(\frac{s}{n^2}\Big)=\frac{1}{n}-\frac{s}{n^2}+o\Big(\frac{1}{n}\Big).\]
Observe that
\begin{equation}\label{eq:limit:up}
    \frac{1}{sp}\|\boldsymbol\varphi_{a,x}\|_{\ell_2}^2=\frac{1}{sp}\sum_{m\in S}\sum_{k=0}^{p-1}a^2\varphi^2(xt_k)=\frac{x}{p\|\varphi\|^2}\sum_{k=0}^{p-1}\varphi^2\Big(\frac{xk}{p-1}\Big) \to 1,
\end{equation}
when $p\to+\infty$, so
\[u_p:=\frac{1}{sp}\|\boldsymbol\varphi_{a,x}\|_{\ell_2}^2\]
is bounded from below and above uniformly in $p$ (and $n$).
Furthermore,
\[\det(G_0)=\sigma^{2pD}(1+\frac{\sigma^{-2}}{s})\]
\begin{align*}
\det(G_1)&=\sigma^{2pD}\left(1+\frac{C^2\sigma^{-2}}{s}+\frac{C^2\sigma^{-2}}{n}u_p\right)\\&=\sigma^{2pD}(1+\frac{\sigma^{-2}}{s})\left(1+\frac{\sigma^{-2}}{n(1+\frac{\sigma^{-2}}{s})}\left(u_p-1\right)+o\Big(\frac{1}{n}\Big)\right)\end{align*}
 \begin{align*}
    	\det((G_0+G_1)/2)&=\sigma^{2pD}\left(1+\frac{\sigma^{-2}}{2s}+\big(\sigma^{-2}+\frac{\sigma^{-4}}{2s}\big)\frac{C^2}{2n}u_p+\frac{C^2\sigma^{-2}}{2s}\right)\\
    	&=\sigma^{2pD}(1+\frac{\sigma^{-2}}{s})\left(1+\frac{\sigma^{-2}}{2n(1+\frac{\sigma^{-2}}{s})}(u_p-1)+ \frac{\sigma^{-4}}{4n(1+\frac{\sigma^{-2}}{s})s}u_p+o\Big(\frac{1}{n}\Big)\right).
    \end{align*}
We fix $\varepsilon>0$. For $p$ large enough \eqref{eq:limit:up} implies that $|u_p-1|\leq\varepsilon$ and using \eqref{eq:hellinger_affinity:multi},
\begin{align*}
A(P_{0}^{obs},P_{1}^{obs})&=\frac{\det( G_{0}G_{1})^{1/4}}{\det((G_{0}+ G_{1})/2)^{1/2}}\\
&\geq\frac{\left(1-\frac{\sigma^{-2}}{n(1+\frac{\sigma^{-2}}{s})}\varepsilon+o\Big(\frac{1}{n}\Big)\right)^{1/4}}{\left(1+\frac{\sigma^{-2}}{2n(1+\frac{\sigma^{-2}}{s})}\varepsilon+  \frac{\sigma^{-4}}{4n(1+\frac{\sigma^{-2}}{s})s}(1+\varepsilon)+o\Big(\frac{1}{n}\Big)\right)^{1/2}}
\end{align*}
implying
\begin{align*}
A(P_{0}^{obs},P_{1}^{obs})^n&\geq\frac{\left(1-\frac{\sigma^{-2}}{n(1+\frac{\sigma^{-2}}{s})}\varepsilon+o\Big(\frac{1}{n}\Big)\right)^{n/4}}{\left(1+\frac{\sigma^{-2}}{2n(1+\frac{\sigma^{-2}}{s})}\varepsilon+  \frac{\sigma^{-4}}{4n(1+\frac{\sigma^{-2}}{s})s}(1+\varepsilon)+o\Big(\frac{1}{n}\Big)\right)^{n/2}}
\end{align*}
so
\begin{align*}
    \liminf_{n\to+\infty}A(P_{0}^{obs},P_{1}^{obs})^n&\geq\exp\Big(-0.5\frac{\sigma^{-2}}{(1+\frac{\sigma^{-2}}{s})}\varepsilon-0.125 \frac{\sigma^{-4}}{(1+\frac{\sigma^{-2}}{s})s}(1+\varepsilon)\Big)\\&\geq \exp\Big(-0.5\frac{\sigma^{-2}}{(1+\frac{\sigma^{-2}}{s})}\varepsilon-0.125 \frac{\sigma^{-4}}{s+\sigma^{-2}}(1+\varepsilon)\Big),
\end{align*}
note that $\exp(-\frac{y^4}{y^2+x})$ is increasing in $x$, which implies that for all $x\geq 1$ we have $\exp(-\frac{y^4}{y^2+x})\geq \exp(-\frac{y^4}{y^2+1})$, hence
\begin{align*}
    \liminf_{n\to+\infty}A(P_{0}^{obs},P_{1}^{obs})^n&\geq \exp\Big(-0.5\frac{\sigma^{-2}}{(1+\frac{\sigma^{-2}}{s})}\varepsilon-0.125 \frac{\sigma^{-4}}{1+\sigma^{-2}}(1+\varepsilon)\Big),
\end{align*}
and the last quantity is positive for any $\varepsilon >0$. This implies that

\[\limsup_{n\to+\infty}H^2((P_{0}^{obs})^{\otimes n},(P_{1}^{obs})^{\otimes n})<2.\]


 \subsubsection{Proof of Proposition~\ref{prop:multi:borne_inf_fpp}} 
The proof is based on Assouad's Lemma and follows the general scheme described in \citep[Section 2]{tsybakov_introduction_2009}. 
Let 
\[
\phi(t)=e^{-\frac1{1-t^2}}1_{(-1,1)}(t), \quad t\in\mathbb{R}.
\]
We then define 
\[
\varphi(t)=\left\{
\begin{array}{ll}
\phi(4t-1) &\text{ if }t\in[0,1/2),\\
-\phi(4t+1) &\text{ if }t\in(-1/2,0),\\
0 &\text{ if }t\notin(-1/2,1/2).
\end{array}
\right.
\]
Both functions $\phi$ and $\varphi$ are $C^\infty$ on $\mathbb R$ with bounded support, then are $\alpha$-H\"older continuous, for all $\alpha>0$. The function $\varphi$ has it support included in $(-1/2,1/2)$ and verifies $\int_{-1/2}^{1/2}\varphi(t)dt=0$. We note $ L_\alpha$ such that, for all $t,u\in\mathbb R$, 
\[
|\varphi(t)-\varphi(u)|\leq L_\alpha |t-u|^{\alpha}.
\]
Let us now define test eigenfunctions. For $\bomega=(w_0,\ldots,w_{p-1})\in\{0,1\}^p$, we set

\begin{align*}
    \eta_{1,\bomega}^*(t)&=C_{\bomega}\left(\gamma+\sum_{k=0}^{p-1}\omega_k \left(p^{-\alpha}\varphi\left(p(t-t_k)-1/2\right)\right)\right),\\ (\textbf{f}_{1,\bomega}(t))_{m\in\{1,\dots,D\}}&=\eta_{1,\bomega}^*(t)\times 1_{m\in S}
 \end{align*}
with $C_{\bomega}$ and $\gamma>0$ two positive constants to be specified later. To be an eigenfunction, $\textbf{f}_{1,\bomega}$ has to be of norm 1, which writes
\begin{eqnarray*}
\|\eta_{1,\bomega}^*\|^2&=&C_{\bomega}^2\int_0^1\left(\gamma+\sum_{k=0}^{p-1}\omega_k\left(p^{-\alpha}\varphi(p(t-t_k)-1/2)\right)\right)^2dt\\
&=&C_{\bomega}^2\left(\gamma^2+2\gamma\sum_{k=0}^{p-1}\omega_k\left(p^{-\alpha}\int_0^1\varphi(p(t-t_k)-1/2)dt\right)\right.\\
&&\left.+\int_0^1\left(\sum_{k=0}^{p-1}\omega_k\left(p^{-\alpha}\varphi(p(t-t_k)-1/2)\right)\right)^2dt\right).
\end{eqnarray*}
Using successively that the support of $\varphi$ is in $(-1/2,1/2)$ and that $\int_{-1/2}^{1/2}\varphi(t)dt=0$, we can see that 
\[
\int_0^1\varphi(p(t-t_k)-1/2)dt=\int_{t_k}^{t_{k+1}}\varphi(p(t-t_k)-1/2)dt=p^{-1}\int_{-1/2}^{1/2}\varphi(t)dt=0,
\]
and 
\[
\int_0^1\left(\sum_{k=0}^{p-1}\omega_k\varphi(p(t-t_k)-1/2)\right)^2dt=\sum_{k=0}^{p-1}\omega_k\int_0^1\varphi^2(p(t-t_k)-1/2)dt=p^{-1}\sum_{k=0}^{p-1}\omega_k\|\varphi\|^2.
\]
This implies that 
\begin{eqnarray*}
\|\eta_{1,\bomega}^*\|^2
&=&C_{\bomega}^2\left(\gamma^2+p^{-2\alpha-1}\|\varphi\|^2\sum_{k=0}^{p-1}\omega_k\right).\\\|\textbf{f}_{1,\bomega}\|_\mathbb{H}^2&=&sC_{\bomega}^2\left(\gamma^2+p^{-2\alpha-1}\|\varphi\|^2\sum_{k=0}^{p-1}\omega_k\right)
\end{eqnarray*}
We then fix the quantity 
\[
C_{\bomega}=\left(s\gamma^2+sp^{-2\alpha-1}\|\varphi\|^2\sum_{k=0}^{p-1}\omega_k\right)^{-1/2},
\]
and $\gamma=\gamma_1 s^{-\frac{1}{2}}$.Remark that $C_{\bomega}$ verifies for $p\geq s^{\frac{1}{2\alpha}}$ (it implies that $sp^{-2\alpha}\leq 1$) we have:
\begin{equation}\label{eq:Comega}
\left(\gamma_1^2+\|\varphi\|^2\right)^{-1/2}\leq\left(s\gamma^2+sp^{-2\alpha}\|\varphi\|^2\right)^{-1/2}\leq C_{\bomega}\leq\gamma^{-1}s^{-\frac{1}{2}}=\gamma_1^{-1}.
\end{equation}
Let us define now the associated law of our observations: for $\xi\sim\mathcal N(0,1)$ and $\mu_{1,\bomega}^*=\frac{L}{2L_\alpha^2C_\omega^2}$, we set
\[
\textbf{Z}_{\bomega}(t)=\sqrt{\mu_{1,\bomega}^*}\xi\textbf{f}_{1,\bomega}(t).
\]
Let $P_{\bomega}^\textbf{Z}$ the law of $\textbf{Z}_\bomega$ we have that $P_{\bomega}^\textbf{Z}\in\mathcal R_\alpha^{(D)}(L)$ since for any $d\in\{1,\dots,D\}$
\begin{eqnarray*}
\int_{C([0,1])}(\textbf{z}(t)-\textbf{z}(s))_d^2dP_{d,\bomega}^\textbf{Z}(\textbf{z})&=&\mathbb E[((\textbf{Z}_{\bomega}(t))_d-(\textbf{Z}_{\bomega}(s))_d)^2]\\&=&\mathbb E[(Z_{d,\bomega}(t)-(Z_{d,\bomega}(s))^2]\\&=&\mu_{1,\bomega}^*(({f}_{1,d,\bomega}(t)-{f}_{1,d,\bomega}(s))^2\mathbb E[\xi^2]\\
&=&\mu_{1,\bomega}^*({f}_{1,d,\bomega}(t)-{f}_{1,d,\bomega}(s))^2\\
&=&\mu_{1,\bomega}^*C_{\bomega}^2\\&&\times\left(\sum_{k=0}^{p-1}\omega_kp^{-\alpha}(\varphi(p(t-t_k)-1/2)-\varphi(p(s-t_k)-1/2))\right)^2. 
\end{eqnarray*}
Then, using the properties of $\varphi$, we have two cases : 
\begin{itemize}
\item If $s,t\in[t_\ell,t_{\ell+1}[$, 
\begin{eqnarray*}
\left(\sum_{k=0}^p\omega_kp^{-\alpha}(\varphi(p(t-t_\ell)-1/2)-\varphi(p(s-t_\ell)-1/2))\right)^2&&\\
&&\hspace{-8cm}=\omega_\ell^2p^{-2\alpha}(\varphi(p(t-t_\ell)-1/2)-\varphi(p(s-t_\ell)-1/2))^2\\
&&\hspace{-8cm}\leq p^{-2\alpha}L_{\alpha}^2(p(t-t_\ell)-p(s-t_\ell))^{2\alpha}=L_{\alpha}^2(t-s)^{2\alpha}.
\end{eqnarray*}
\item If $s\in[t_\ell,t_{\ell+1}[$ and  $t\in[t_{\ell'},t_{\ell'+1}[$ with $\ell\neq\ell'$,
\begin{eqnarray*}
\left(\sum_{k=0}^p\omega_kp^{-\alpha}(\varphi(p(t-t_k)-1/2)-\varphi(p(s-t_k)-1/2))\right)^2&&\\
&&\hspace{-8cm}=\omega_\ell^2p^{-2\alpha}\left(|\varphi(p(t-t_\ell)-1/2)-\varphi(p(s-t_\ell)-1/2)|\right.\\
&&\hspace{-6cm}\left.+\omega_{\ell'}^2p^{-2\alpha}|\varphi(p(t-t_{\ell'})-1/2)-\varphi(p(s-t_{\ell'})-1/2)|\right)^2\\
&&\hspace{-8cm}\leq 2L_{\alpha}^2|t-s|^{2\alpha}.
\end{eqnarray*}
\end{itemize}
Finally
\[
\max_{d\in\{1,\dots,D\}}\int_{C([0,1])}(\textbf{z}(t)-\textbf{z}(s))_d^2dP_{d,\bomega}^\textbf{Z}(\textbf{z})\leq  2\mu_{1,\bomega}^*C_{\bomega}^2 L_\alpha^2|t-s|^{2\alpha}=L|t-s|^{2\alpha}. 
\]
This allows us to deduce that
\[
\inf_{\widehat{\textbf{f}}_1}\sup_{P_Z\in\mathcal R_\alpha^{(D)}(L)}\mathbb E[\|\widehat{\textbf{f}}_1-\textbf{f}_1\|_\mathbb{H}^2]\geq\inf_{\widehat{\textbf{f}}_1}\sup_{\bomega\in\{0,1\}^{p}}\mathbb E[\|\widehat{\textbf{f}}_1-\textbf{f}_{1,\bomega}^*\|_\mathbb{H}^2],
\] 
and the aim of what follows is to prove a lower bound for $\mathbb E[\|\widehat{\textbf{f}}_1-\textbf{f}_{1,\bomega}^*\|_\mathbb{H}^2]$. 

Let $\widehat{\textbf{f}}_1$ an estimator and 
\[
\widehat\bomega\in {\argmin}_{\bomega\in\{0,1\}^p} \|\widehat{\textbf{f}}_1-\textbf{f}_{1,\bomega}^*\|_\mathbb{H}^2,
\]
we have 
\[
\|\widehat{\textbf{f}}_1-\textbf{f}_{1,\widehat\bomega}^*\|_\mathbb{H}\geq\frac12\|\textbf{f}_{1,\widehat\bomega}^*-\textbf{f}_{1,\bomega}^*\|_\mathbb{H}. 
\]
Now, still from the support properties of $\varphi$, and denoting $\omega_k=\widehat\omega_k=0$ if $k>m$,

\begin{eqnarray*}
\|\eta_{1,\widehat\bomega}^*-\eta_{1,\bomega}^*\|^2&&\\
&&\hspace{-3cm}=\sum_{k=0}^{p-1}\int_{t_{k}}^{t_{k+1}}\left(C_{\widehat\bomega}(\gamma+\widehat\omega_{k}p^{-\alpha}\varphi(p(t-t_{k})-1/2))-C_{\bomega}(\gamma+\omega_{k}p^{-\alpha}\varphi(p(t-t_{k})-1/2))\right)^2dt\\
&&\hspace{-3cm}=p^{-1}\sum_{k=0}^{p-1}\int_{-1/2}^{1/2}\left(C_{\widehat\bomega}(\gamma+\widehat\omega_{k}p^{-\alpha}\varphi(u))-C_{\bomega}(\gamma+\omega_{k}p^{-\alpha}\varphi(u))\right)^2du\\
&&\hspace{-3cm}=(C_{\widehat\bomega}-C_{\bomega})^2\gamma^2+\|\varphi\|^2p^{-2\alpha-1}\sum_{k=0}^{p-1}(C_{\widehat\bomega}\widehat\omega_{k}-C_{\bomega}\omega_{k})^2\geq\|\varphi\|^2p^{-2\alpha-1}\sum_{k=0}^{p-1}(C_{\widehat\bomega}\widehat\omega_{k}-C_{\bomega}\omega_{k})^2\\
&&\hspace{-3cm}\geq\|\varphi\|^2p^{-2\alpha-1}\left(\min\{C_{\widehat\bomega}^2,C_\bomega^2\}\sum_{k=0}^{p-1}\mathbf 1_{\{\widehat\omega_k\neq\omega_k\}}^2+(C_{\widehat\bomega}-C_{\bomega})^2\sum_{k=0}^{p-1}\mathbf 1_{\{\widehat\omega_k=\omega_k=1\}}^2\right)\\
&&\hspace{-3cm}\geq\|\varphi\|^2p^{-2\alpha-1}\min\{C_{\widehat\bomega}^2,C_\bomega^2\}\rho(\bomega,\widehat\bomega).
\end{eqnarray*}
where $\rho(\bomega,\bomega')=\sum_{k=0}^{p-1}\mathbf 1_{\omega_k\neq\omega_k'}$ is the Hamming distance on $\{0,1\}^p$.
Thus we have for all values of $p$ such that $p\geq s^\frac{1}{2\alpha}$:
$$\|\textbf{f}_{1,\widehat\bomega}^*-\textbf{f}_{1,\bomega}^*\|_\mathbb{H}^2\geq s \|\varphi\|^2p^{-2\alpha-1}\min\{C_{\widehat\bomega}^2,C_\bomega^2\}\rho(\bomega,\widehat\bomega)\geq s(\gamma_1^2+\|\varphi\|^2)^{-1}\|\varphi\|^2p^{-2\alpha-1}\rho(\bomega,\widehat\bomega)$$

Combining all the inequalities above, we have the existence of a constant $\tilde c=\|\varphi\|^2/(4(\gamma_1^2+\|\varphi\|^2))$ such that
\begin{eqnarray*}
\inf_{\widehat{\textbf{f}}_1}\sup_{P_Z\in\mathcal R_\alpha^{(D)}(L)}\mathbb E[\|\textbf{f}_{1,\widehat\bomega}^*-\textbf{f}_{1,\bomega}^*\|_\mathbb{H}^2]&\geq& \tilde csp^{-2\alpha-1}\inf_{\widehat\bomega}\max_{\bomega\in\{0,1\}^{p}}\mathbb E[\rho(\widehat\bomega,\bomega)]\\
\end{eqnarray*}

By Assouad's Lemma (see e.g. Tsybakov, 2009, Theorem 2.12),  there exists a constant $c>0$ such that 
\[
\inf_{\widehat\bomega}\max_{\bomega\in\{0,1\}^{p}}\mathbb E[\rho(\widehat\bomega,\bomega)]\geq cp,
\]
provided we can prove the following upper bound 
\[
KL((P_{\bomega}^{obs})^{\otimes n},(P_{0}^{obs})^{\otimes n})\leq K_{\max}< +\infty, \text{ for all } \bomega\in\{0,1\}^p,
\]
where $P_{\bomega}^{obs}$ is the law of the random vector $\mathbf Y^{obs}_{\bomega}:=(Y_{1,\bomega}(t_0),\hdots,Y_{D,\bomega}(t_{p-1}))$ such that $\textbf{Y}_{\bomega}(t_j)=\textbf{Z}_{\bomega}(t_j)+\varepsilon(t_j)$ with $\varepsilon(t_0),\hdots,\varepsilon(t_{p-1})\sim_{i.i.d.}\mathcal N(0,I_D\sigma^2)$ and $KL(P,Q)$ is the Kullback-Leibler divergence between two measures $P$ and $Q$. 
However, remark that, for all $\bomega\in\{0,1\}^p$, for all $j=0,\hdots,p-1$ and $d\in\{1,\dots,D\}$,
\[
\textbf{Y}_{\bomega}(t_j)=\textbf{Z}_{\bomega}(t_j)+\varepsilon(t_j)=\sqrt{\mu_{1,\bomega}^*}\xi\textbf{f}_{1,\bomega}(t_j)+\varepsilon(t_j). 
\]
Now
\[
\eta_{1,\bomega}^*(t_j)=C_\bomega\left(\gamma+\sum_{k=0}^{p-1}\omega_k(p^{-\alpha}\varphi(p(t_j-t_k)-1/2))\right)=C_\bomega\gamma,
\]
since $\varphi((p(t_j-t_k)-1/2)=\varphi(-1/2)=0$ if $j=k$ and $\varphi((p(t_j-t_k)-1/2)=0$ if $j\neq k$ by the support properties of $\varphi$ and the fact that $p(t_j-t_k)-1/2=\frac{p}{p-1}(j-k)-1/2\geq \frac{p}{p-1}-1/2\geq 1/2$ if $j>k$ and $p(t_j-t_k)-1/2\leq 1/2$ if $j<k$. 
Hence for all $d\in\{1,\dots,D\}$ and for all $j=0,\dots,p-1$,
\begin{align*}
    {Y}_{d,\bomega}(t_j)&=\sqrt{\mu_{1,\bomega}^*}\xi C_\bomega\gamma+\varepsilon_d(t_j)=\frac{\gamma \sqrt{L}}{\sqrt{2}L_{\alpha}}\xi+\varepsilon_d(t_j)
\end{align*}
the distribution of $\mathbf{Y}^{obs}_{\bomega}$ does not depend on $\bomega$. Therefore, 
$$KL((P_{\bomega}^{obs})^{\otimes n},(P_{0}^{obs})^{\otimes n})=nKL(P_{\bomega}^{obs},P_{0}^{obs})=0.$$

\subsection{Proof of Proposition~\ref{prop:differential:multi}}\label{proof:prop:differential:multi}
In this subsection we will compute the first and the second differential of $R$ defined such that for all $\textbf{h}\in\mathbb{H}$ we have:

\begin{align*}
    R(\textbf{h})&=  \left\|\Gamma-\textbf{h} \otimes \textbf{h}\right\|_{{HS}}^2\\&=\left\|\Gamma\right\|_{{HS}}^2+\left\|\textbf{h} \otimes \textbf{h}\right\|_{HS}^2-2\langle\Gamma,\textbf{h} \otimes \textbf{h}\rangle_{HS}.
\end{align*}
For the sake of simplicity, we first compute the values of $\left\|\textbf{h} \otimes \textbf{h}\right\|_{HS}^2$ and $\langle\Gamma,\textbf{h} \otimes \textbf{h}\rangle_{HS}$ in what follows:
\begin{itemize}
    \item Let $\textbf{h}\in\mathbb{H}\backslash\{0\}$ and $(\textbf{e}_i)_{i\in\mathbb{N}^*}$ be an orthonormal basis of $\mathbb{H}$ such that $\textbf{e}_1=\frac{\textbf{h}}{\|\textbf{h}\|_\mathbb{H}}$, we have:
\begin{align*}
    \|\textbf{h}\otimes \textbf{h}\|_{HS}^2&=\sum_{i\in\mathbb{N}^*}\langle\textbf{h}\otimes\textbf{h}(\textbf{e}_i),\textbf{h}\otimes\textbf{h}(\textbf{e}_i)\rangle_\mathbb{H}\\&=\sum_{i\in\mathbb{N}^*}\langle\textbf{h},\textbf{h}\rangle_\mathbb{H}\langle\textbf{h},\textbf{e}_i\rangle_\mathbb{H}\langle\textbf{h},\textbf{e}_i\rangle_\mathbb{H}\\&=\|\textbf{h}\|_\mathbb{H}^2\sum_{i\in\mathbb{N}^*}\langle\textbf{h},\textbf{e}_i\rangle_\mathbb{H}^2=\|\textbf{h}\|_\mathbb{H}^4.
\end{align*}
\item Similarly for $\langle\Gamma,\textbf{h} \otimes \textbf{h}\rangle_{HS}$ we have:

\begin{align*}
    \langle\Gamma,\textbf{h} \otimes \textbf{h}\rangle_{HS}&=\sum_{i\in\mathbb{N}^*}\langle\Gamma(\textbf{e}_i),\textbf{h}\otimes\textbf{h}(\textbf{e}_i)\rangle_\mathbb{H}\\&=\sum_{i\in\mathbb{N}^*}\langle\Gamma(\textbf{e}_i),\textbf{h}\rangle_\mathbb{H}\langle\textbf{h},\textbf{e}_i\rangle_\mathbb{H}.
\end{align*}
Since $\textbf{e}_1=\frac{\textbf{h}}{\|\textbf{h}\|_\mathbb{H}}$ it implies that for all $i\geq 2$ we have $\langle\textbf{e}_i,\textbf{e}_1\rangle_\mathbb{H}=\langle\textbf{e}_i,\frac{\textbf{h}}{\|\textbf{h}\|_\mathbb{H}}\rangle_\mathbb{H}=0$. Thus we have:

\begin{align*}
    \langle\Gamma,\textbf{h} \otimes \textbf{h}\rangle_{HS}&=\sum_{i\in\mathbb{N}^*}\langle\Gamma(\textbf{e}_i),\textbf{h}\rangle_\mathbb{H}\langle\textbf{h},\textbf{e}_i\rangle_\mathbb{H}\\&=\langle\Gamma(\textbf{e}_1),\textbf{h}\rangle_\mathbb{H}\langle\textbf{h},\textbf{e}_1\rangle_\mathbb{H}\\&=\frac{\langle\textbf{h},\textbf{h}\rangle_\mathbb{H}}{\|\textbf{h}\|_\mathbb{H}^2}\langle\textbf{e}_1,\textbf{h}\rangle_\mathbb{H}\langle\Gamma(\textbf{h}),\textbf{h}\rangle_\mathbb{H}\\&=\langle\Gamma(\textbf{h}),\textbf{h}\rangle_\mathbb{H}.
\end{align*}
Thus, we have for all $\textbf{h}\in\mathbb{H}$:
\begin{align*}
    R(\textbf{h})&=\left\|\Gamma\right\|_{{HS}}^2+\left\| \textbf{h}\right\|_{\mathbb{H}}^4-2\langle\Gamma( \textbf{h}),\textbf{h}\rangle_{\mathbb{H}}.
\end{align*}
\item Note that by definition of $\|\cdot\|_{HS}$ we have:
\begin{align*}
    \|\Gamma\|_{HS}^2& = \sum_{i\in\mathbb{N}^*}\langle \Gamma(\textbf{e}_i),\Gamma(\textbf{e}_i)\rangle_\mathbb{H}\\& = \sum_{i\in\mathbb{N}^*}\|\Gamma(\textbf{e}_i)\|_\mathbb{H}^2\\&\geq\|\Gamma(\textbf{e}_1)\|_\mathbb{H}^2=\frac{\|\Gamma(\textbf{h})\|_\mathbb{H}^2}{\|\textbf{h}\|_\mathbb{H}^2}.
\end{align*}
Thus, we have for any $\textbf{h}\in\mathbb{H}$:
\begin{equation}
\|\Gamma(\textbf{h})\|_\mathbb{H}^2\leq\|\Gamma\|_{HS}^2\|\textbf{h}\|_\mathbb{H}^2.
    \label{operator:norm:inequality:multi}
\end{equation}
\end{itemize}
Now we compute the first differential, let $\textbf{h}$ and $\textbf{a}$ two functions in $\mathbb{H}$, we have :
\begin{align*}
     R(\textbf{h}+\textbf{a})-R(\textbf{h})&=\left\| \textbf{h}+\textbf{a}\right\|_{\mathbb{H}}^4-2\langle\Gamma (\textbf{h}+\textbf{a}),(\textbf{h}+\textbf{a})\rangle_{\mathbb{H}}-\left\| \textbf{h}\right\|_{\mathbb{H}}^4+2\langle\Gamma(\textbf{h}),\textbf{h}\rangle_{\mathbb{H}}\\&=-4\langle\Gamma(\textbf{h}),\textbf{a}\rangle_{\mathbb{H}}-2\langle\Gamma(\textbf{a}),\textbf{a}\rangle_{\mathbb{H}}+(\left\| \textbf{h}\right\|_{\mathbb{H}}^2+2\langle\textbf{h},\textbf{a}\rangle_\mathbb{H}+\left\| \textbf{a}\right\|_{\mathbb{H}}^2)^2-\left\| \textbf{h}\right\|_{\mathbb{H}}^4\\&=4\|\textbf{h}\|_\mathbb{H}^2\langle\textbf{h},\textbf{a}\rangle-4\langle\Gamma(\textbf{h}),\textbf{a}\rangle_{\mathbb{H}}\\&+\left\| \textbf{a}\right\|_{\mathbb{H}}^4+4\langle\textbf{a},\textbf{h}\rangle_\mathbb{H}^2+4\langle\textbf{a},\textbf{h}\rangle_\mathbb{H}\left\| \textbf{a}\right\|_{\mathbb{H}}^2+2\left\| \textbf{a}\right\|_{\mathbb{H}}^2\left\| \textbf{h}\right\|_{\mathbb{H}}^2-2\langle\Gamma( \textbf{a}),\textbf{a}\rangle_{\mathbb{H}},
 \end{align*}
note that $|\langle\Gamma(\textbf{a}),\textbf{a}\rangle_{\mathbb{H}}|\leq \|\Gamma(\textbf{a})\|_\mathbb{H}\|\textbf{a}\|_{\mathbb{H}}\leq\|\Gamma\|_{HS}\|\textbf{a}\|_\mathbb{H}^2$ and $\langle\textbf{a},\textbf{h}\rangle_\mathbb{H}^2\leq \|\textbf{h}\|_\mathbb{H}^2\|\textbf{a}\|_\mathbb{H}^2$, which implies that:
\begin{align*}
    \| \textbf{a}\|_{\mathbb{H}}^4+4\langle\textbf{a},\textbf{h}\rangle_\mathbb{H}^2+4\langle\textbf{a},\textbf{h}\rangle_\mathbb{H}\| \textbf{a}\|_{\mathbb{H}}^2+2\| \textbf{a}\|_{\mathbb{H}}^2\| \textbf{h}\|_{\mathbb{H}}^2-2\langle\Gamma(\textbf{a}),\textbf{a}\rangle_{\mathbb{H}}=o(\|\textbf{a}\|_\mathbb{H}).
\end{align*}
Finally we have:  $$R(\textbf{a}+\textbf{h})-R(\textbf{h})=4\|\textbf{h}\|_\mathbb{H}^2\langle\textbf{h},\textbf{a}\rangle-4\langle\Gamma( \textbf{h}),\textbf{a}\rangle_{\mathbb{H}}+o(\|\textbf{a}\|_\mathbb{H}),$$ meaning for any $\textbf{h}\in\mathbb{H}$:

\[\dot{R}(\textbf{h})=4\|\textbf{h}\|_\mathbb{H}^2\textbf{h}-4\Gamma( \textbf{h}).\]
Similarly for the second differential, let $\textbf{h},\textbf{a}\in\mathbb{H}$ we have:

\begin{align*}
    \dot{R}(\textbf{h}+\textbf{a})-\dot{R}(\textbf{h})&=4\Big(\|\textbf{h}+\textbf{a}\|_\mathbb{H}^2(\textbf{h}+\textbf{a})-\Gamma(\textbf{h}+\textbf{a})-\|\textbf{h}\|_\mathbb{H}^2\textbf{h}+\Gamma(\textbf{h})\Big)\\&=4\Big((\|\textbf{h}\|_\mathbb{H}^2+\|\textbf{a}\|_\mathbb{H}^2+2\langle\textbf{h},\textbf{a}\rangle_\mathbb{H})(\textbf{h}+\textbf{a})-\|\textbf{h}\|_\mathbb{H}^2\textbf{h}-\Gamma(\textbf{a})\Big)\\&=4\Big(\|\textbf{h}\|_\mathbb{H}^2\textbf{a}+2\langle\textbf{h},\textbf{a}\rangle_\mathbb{H}\textbf{h}-\Gamma(\textbf{a})\Big)\\&+4\|\textbf{a}\|_\mathbb{H}^2\textbf{h}+8\langle\textbf{a},\textbf{h}\rangle_\mathbb{H}\textbf{a}+4\|\textbf{a}\|_\mathbb{H}^2\textbf{a},
\end{align*}
note that  $\Big\|4\|\textbf{a}\|_\mathbb{H}^2\textbf{h}-8\langle\textbf{a},\textbf{h}\rangle_\mathbb{H}\textbf{a}\Big\|_\mathbb{H}\leq 12\|\textbf{h}\|_\mathbb{H}\|\textbf{a}\|_\mathbb{H}^2=o(\|\textbf{a}\|_\mathbb{H})$ and $4\Big(\|\textbf{h}\|_\mathbb{H}^2\textbf{a}-2\langle\textbf{h},\textbf{a}\rangle_\mathbb{H}\textbf{h}-\Gamma(\textbf{a})\Big)$ is a linear function in $\textbf{a}$ thus for any $\textbf{h}\in\mathbb{H}$:

\[\ddot{R}(\textbf{h})=4\Big(\|\textbf{h}\|_\mathbb{H}^2I+2\textbf{h}\otimes\textbf{h}-\Gamma\Big),\]
which ends the proof of Proposition \ref{prop:differential:multi}.
\subsection{Proof of Proposition~\ref{prop:equivalent_formulations}}\label{proof:optime}

Let $\Gamma$ be a covariance operator, $(\textbf{f}_\ell)_{\ell \in \mathbb{N}^*}$ the basis of eigenfunctions of $\Gamma$ and $(\mu_\ell)_{\ell \in \mathbb{N}^*}$ the associated eigenvalues sorted in decreasing order. Remarking that the function to minimize for any $\textbf{h}\in\mathbb{H}$
\[
R(\textbf{h})=\left\|\Gamma-\textbf{h}\otimes\textbf{h}\right\|_{HS}^2
\]
can be written
\[
R(\textbf{h})=\left\|\Gamma-\textbf{h}\otimes\textbf{h}\right\|_{{HS}}^2=\|\Gamma\|_{{HS}}^2-2\langle\Gamma(\textbf{h}),\textbf{h}\rangle_{\mathbb{H}}+\|\textbf{h}\otimes\textbf{h}\|_{{HS}}^2,
\]
we then minimize the functional 
\[
 J(\boldsymbol{h})=\|\textbf{h}\otimes\textbf{h}\|_{{HS}}^2-2\langle\Gamma(\textbf{h}),\textbf{h}\rangle_{\mathbb{H}},\]
On the set $\mathcal U=\mathbb{H}$. The Karush-Kuhn-Tucker (KKT) theory for infinite-dimensional optimization problems can be found, e.g., in \citet[Section 47.10]{Zeidler85}. We first remark that all function $J$ is a convex function. Moreover, it’s G\^ateaux-derivative $J'$  exists, which means that for all $\boldsymbol\eta\in\mathcal U$, there exists $J' (\boldsymbol{h})\in\mathcal{U}$ such that for all $\boldsymbol\eta\in\mathcal U$,

\[
\langle J'(\boldsymbol h),\boldsymbol\eta\rangle=\lim_{t\to 0}\frac{J(\boldsymbol h+t\boldsymbol\eta)-J(\boldsymbol h)}t. 
\]
By \citet[Theorem 47.E]{Zeidler85}, we know that $\bar{\boldsymbol h}$ is a solution to 
\[
\inf_{\boldsymbol h} J(\boldsymbol h)
\]
if and only if 
\begin{equation}\label{eq:KKT}
J'(\overline{\boldsymbol h}) = 0
\end{equation}
Thus for all $\boldsymbol{\eta}\in\mathbb{H}$ we have :

\[
4\|\overline{\boldsymbol{h}}\|_\mathbb{H}^2\langle\overline{\boldsymbol{h}},\boldsymbol{\eta}\rangle-4\langle\Gamma(\overline{\boldsymbol{h}}),\boldsymbol{\eta}\rangle_\mathbb{H}=0
\]
which implies that :
\[
\|\overline{\boldsymbol{h}}\|_\mathbb{H}^2\overline{\boldsymbol{h}}=\Gamma(\overline{\boldsymbol{h}})
\]
which implies that the space ${\rm span}\{\overline{\boldsymbol{h}}\}$ is an invariant subspace of $\Gamma$ of dimension at most $1$. In other words there exists an integers $j_0$  such that
\begin{equation}\label{eq:equiv_spaces}
{\rm span}\{\overline{\boldsymbol{h}}\}={\rm span}\{\textbf{f}_{j_0}\}.
\end{equation}
We then have to prove that $j_0=1$ minimizes the functional, in order to do so note that 
\[
J(\overline{\boldsymbol{h}})= \sum_{j\in\mathbb{N}^*}\mu_j-\mu_{j_0}.
\]
Since $(\mu_j)_{j\geq 1}$ is sorted in decreasing order, this implies that $J(\overline{\boldsymbol{h}})$ is minimal when $j_0=1$.

\subsection{Proof of Theorem \ref{theo:multi}}\label{proof:theo:multi}

As defined, the objective function $R$ is not convex, to illustrate this, we fix $\textbf{a}=a\textbf{f}_1$ and $\textbf{h}=\sqrt{\frac{\mu_1}{4}}\textbf{f}_1$, note that :
\begin{align*}
  \langle\textbf{a},\ddot{R}(\textbf{h})(\textbf{a})\rangle_\mathbb{H}&=4\Big(\|\textbf{h}\|_\mathbb{H}^2\langle\textbf{a},\textbf{a}\rangle_\mathbb{H}+2\langle\textbf{h},\textbf{a}\rangle_\mathbb{H}^2+\langle\textbf{a},\Gamma(\textbf{a})\rangle_\mathbb{H}\Big)\\&=4\Big((a\sqrt{\frac{\mu_1}{4}})^2+2(a\sqrt{\frac{\mu_1}{4}})^2-a^2\mu_1\Big)\\&=4a^2\Big(\frac{\mu_1}{4}+2\frac{\mu_1}{4}-\mu_1)=-a^2\mu_1<0.
\end{align*}
Thus, the naive approach that minimizes the empirical risk, namely $\widehat{R}_\phi$, is cursed by non-convexity since even the theoretical risk $R$ is non-convex, finding the global minimizer of a non-convex function is known to be a challenging task. To overcome this issue, we take advantage of the fact that $R$ is locally convex in the neighborhood of $\textbf{g}_1$ and constraint our optimization problem to find a solution in a small ball $\mathcal{B}(\eta)$ that contains the global minimizer.
Provided that $\eta$ (the radius of $\mathcal{B}$) is small enough, Lemma \ref{Lemma hermitian} below shows that $R$ is convex on $\mathcal{B}(\eta)$. The same approach can be found in \cite{VAN}. 
\begin{lemma}\label{Lemma hermitian}
We denote by $\rho=\sqrt{\mu_1}-\sqrt{\mu_2}$ and assume that $8\eta < \rho$. Then for
all $\textbf{g}$
satisfying $\|\textbf{g}-{\textbf{g}}_1\|_\mathbb{H}<\eta$ we have for all $ \textbf{x}\in \mathbb{H}$:
$$\langle \textbf{x}\ddot{{R}}(\textbf{g}),\textbf{x}\rangle_\mathbb{H}> 4\sqrt{\mu}_1(\rho-8\eta)\|\textbf{x}\|_\mathbb{H}^2.$$

\end{lemma}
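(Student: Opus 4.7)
The plan is to start from the explicit form of the second differential given by Proposition~\ref{prop:differential:multi}, which yields
\[
\langle \textbf{x},\ddot R(\textbf{g})(\textbf{x})\rangle_{\mathbb{H}}=4\big(\|\textbf{g}\|_{\mathbb{H}}^2\|\textbf{x}\|_{\mathbb{H}}^2+2\langle \textbf{g},\textbf{x}\rangle_{\mathbb{H}}^2-\langle \textbf{x},\Gamma(\textbf{x})\rangle_{\mathbb{H}}\big).
\]
The strategy is to decompose both $\textbf{g}$ and $\textbf{x}$ along the direction $\textbf{f}_1$ and its orthogonal complement in $\mathbb{H}$, then exploit the spectral gap hidden in $\rho=\sqrt{\mu_1}-\sqrt{\mu_2}$ to control the three terms simultaneously. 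The cross term $2\langle\textbf{g},\textbf{x}\rangle_{\mathbb{H}}^2$ is precisely what compensates the loss coming from the top eigenvalue of $\Gamma$ acting on the component of $\textbf{x}$ along $\textbf{f}_1$, so I would keep it carefully rather than bounding it away by zero.

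More concretely, I would write $\textbf{x}=\beta_1\textbf{f}_1+\textbf{x}^\perp$ with $\textbf{x}^\perp\perp\textbf{f}_1$, and $\textbf{g}=t\textbf{f}_1+\textbf{v}$ with $\textbf{v}\perp\textbf{f}_1$, where by orthogonality and $\|\textbf{g}-\textbf{g}_1\|_{\mathbb{H}}<\eta$ (recall $\textbf{g}_1=\sqrt{\mu_1}\textbf{f}_1$) we get $|t-\sqrt{\mu_1}|<\eta$ and $\|\textbf{v}\|_{\mathbb{H}}<\eta$. Using the spectral decomposition $\Gamma=\sum_\ell\mu_\ell\textbf{f}_\ell\otimes\textbf{f}_\ell$ and the ordering of eigenvalues, I would bound $\langle\textbf{x},\Gamma(\textbf{x})\rangle_{\mathbb{H}}\leq\mu_1\beta_1^2+\mu_2\|\textbf{x}^\perp\|_{\mathbb{H}}^2$. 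Expanding $\|\textbf{g}\|_{\mathbb{H}}^2=t^2+\|\textbf{v}\|_{\mathbb{H}}^2$ and $\langle\textbf{g},\textbf{x}\rangle_{\mathbb{H}}=t\beta_1+\langle\textbf{v},\textbf{x}^\perp\rangle_{\mathbb{H}}$, squaring and dropping the nonnegative terms $\|\textbf{v}\|_{\mathbb{H}}^2\|\textbf{x}\|_{\mathbb{H}}^2$ and $2\langle\textbf{v},\textbf{x}^\perp\rangle_{\mathbb{H}}^2$, I obtain a lower bound of the form
\[
\langle \textbf{x},\ddot R(\textbf{g})(\textbf{x})\rangle_{\mathbb{H}}/4\;\geq\;(3t^2-\mu_1)\beta_1^2+(t^2-\mu_2)\|\textbf{x}^\perp\|_{\mathbb{H}}^2+4t\beta_1\langle\textbf{v},\textbf{x}^\perp\rangle_{\mathbb{H}}.
\]

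The cross term is handled by the Cauchy--Schwarz inequality $|\langle\textbf{v},\textbf{x}^\perp\rangle_{\mathbb{H}}|\leq\eta\|\textbf{x}^\perp\|_{\mathbb{H}}$ followed by the Young-type bound $2|\beta_1|\|\textbf{x}^\perp\|_{\mathbb{H}}\leq \beta_1^2+\|\textbf{x}^\perp\|_{\mathbb{H}}^2$, giving $4t\beta_1\langle\textbf{v},\textbf{x}^\perp\rangle_{\mathbb{H}}\geq -2t\eta(\beta_1^2+\|\textbf{x}^\perp\|_{\mathbb{H}}^2)$. The coefficients of $\beta_1^2$ and $\|\textbf{x}^\perp\|_{\mathbb{H}}^2$ can then be estimated separately using $t\geq\sqrt{\mu_1}-\eta$ and the factorization $\mu_1-\mu_2=(\sqrt{\mu_1}-\sqrt{\mu_2})(\sqrt{\mu_1}+\sqrt{\mu_2})\geq\sqrt{\mu_1}\rho$. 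The key numerical checks are $3t^2-\mu_1-2t\eta\geq\mu_1-8\sqrt{\mu_1}\eta\geq\sqrt{\mu_1}(\rho-8\eta)$ (using $\sqrt{\mu_1}\geq\rho$) and $t^2-\mu_2-2t\eta\geq\sqrt{\mu_1}\rho-(17/4)\sqrt{\mu_1}\eta\geq\sqrt{\mu_1}(\rho-8\eta)$.

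The main obstacle is bookkeeping: both the $\beta_1^2$ and the $\|\textbf{x}^\perp\|_{\mathbb{H}}^2$ coefficients must be simultaneously bounded from below by the same quantity $\sqrt{\mu_1}(\rho-8\eta)$, which forces one to use the assumption $8\eta<\rho$ (and the consequence $\eta<\sqrt{\mu_1}/8$) to absorb the quadratic remainders $\eta^2$ that arise from expanding $(\sqrt{\mu_1}-\eta)^2$ and $2t\eta$. Once the two coefficients are controlled, adding them yields the desired lower bound $4\sqrt{\mu_1}(\rho-8\eta)\|\textbf{x}\|_{\mathbb{H}}^2$, with strict inequality as soon as $\textbf{x}\neq 0$ since the nonnegative terms we dropped are then strictly positive whenever $\textbf{v}\neq 0$, and otherwise the inequalities on $3t^2-\mu_1$ and $t^2-\mu_2$ are themselves strict.
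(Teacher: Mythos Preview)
Your argument is correct and follows essentially the same line as the paper's proof: both start from $\ddot R(\textbf{g})=4(\|\textbf{g}\|_{\mathbb H}^2 I+2\textbf{g}\otimes\textbf{g}-\Gamma)$, use the spectral decomposition of $\Gamma$ to isolate the $\textbf{f}_1$-direction from its complement, invoke the gap $\mu_1-\mu_2\ge \sqrt{\mu_1}\,\rho$, and absorb the perturbation $\|\textbf{g}-\textbf{g}_1\|_{\mathbb H}<\eta$ via Cauchy--Schwarz. The only organizational difference is that you decompose $\textbf{g}=t\textbf{f}_1+\textbf{v}$ explicitly and work with the scalar coefficients $(t,\beta_1)$, whereas the paper stays at the operator level, bounding $\langle\textbf{x},\textbf{g}\rangle_{\mathbb H}^2\ge \langle\textbf{x},\textbf{g}_1\rangle_{\mathbb H}^2-2\sqrt{\mu_1}\,\eta\|\textbf{x}\|_{\mathbb H}^2$ directly and treating $(\|\textbf{g}\|_{\mathbb H}^2-\mu_1)\langle\textbf{f}_1,\textbf{x}\rangle_{\mathbb H}^2\ge -2\eta\sqrt{\mu_1}\|\textbf{x}\|_{\mathbb H}^2$ without splitting $\textbf{g}$; the resulting numerical losses ($-2\eta$, $-4\eta$, $-2\eta$ summing to $-8\eta$) are the same in both arguments. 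One minor remark: the paper itself only establishes $\ge$ in its final line, so the strict inequality in the statement (which fails for $\textbf{x}=0$) is a typo you need not worry about justifying.
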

The proof is provided in Subsection \ref{proof:Lemma hermitian} page \pageref{proof:Lemma hermitian}. Finally, to obtain the stated result, we will need the following lemma, which controls the error terms appearing in the rest of the proof.
\begin{lemma}\label{upper bound on E(g)}
Assuming the oracle condition is valid, for all $\lambda\geq 4\Big(\|\textbf{g}_1\|_\mathbb{H}(\lambda_1+\frac{8\sqrt{L\|K\|_\infty s}}{M^\alpha})+\|\textbf{g}_1\|_\infty\frac{\sigma^2}{p}+\lambda_1\Big)$ we have with probability at least $1-2\frac{\log(T)+1}{MD}$
$$|E(\textbf{g}_1)|\leq \lambda\|\textbf{g}_1-\widehat{{\textbf{g}}}\|_1+\sqrt{\mu_1}(\rho-8\eta) \|\textbf{g}_1-\widehat{{\textbf{g}}}\|_\mathbb{H}^2,$$
where $\lambda_1=4\sqrt{(\widetilde{\mu}_1+\frac{\sigma^2}{p})(\|K\|_\infty+\sigma^2)}\Big(4\sqrt{\frac{\log(DM)}{n}}+\frac{\log(MD)}{n}\Big)$,
where $E(\textbf{g}_1)$ is defined in \eqref{eq:def:E(g)} page \pageref{eq:def:E(g)}.
\end{lemma}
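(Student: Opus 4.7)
The plan is to decompose $E(\textbf{g}_1)$ into a deterministic bias part and a stochastic fluctuation part, then separately control each via duality between $\|\cdot\|_1$ and $\|\cdot\|_\infty$ and a Young-type inequality to split a cross term between the linear and quadratic bound. Based on the definition of $\widehat{R}_\phi$ and $R$ together with Proposition~\ref{prop:differential:multi}, the error $E(\textbf{g}_1)$ should be of the form $\langle \dot R(\textbf{g}_1)-\dot{\widehat{R}}_\phi(\textbf{g}_1), \widehat{\textbf{g}}-\textbf{g}_1\rangle_\mathbb{H}$ (possibly with second-order corrections arising from the Taylor--Lagrange remainder in Corollary~\ref{cor:remainder}). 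Writing $\dot R-\dot{\widehat R}_\phi=\dot R-\dot R_\phi+\dot R_\phi-\dot{\widehat R}_\phi$ isolates two sources of error: the bias $\Gamma-\Gamma_\phi$ coming from projecting onto the histogram basis $(\phi_\lambda)_{\lambda\in\Lambda_M}$, and the stochastic fluctuation $\widehat{\Gamma}_\phi-\Gamma_\phi$ coming from the empirical average over the $n$ replicates.

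For the bias part, I would exploit Hypothesis~\eqref{alpha:holder:inequality}: since $K$ is $\alpha$-Hölder and the histograms have width $M^{-1}$, the coefficients of $(\Gamma-\Gamma_\phi)(\textbf{f})$ in the basis are controlled by a quantity of order $M^{-\alpha}\sqrt{L\|K\|_\infty}$ per active direction. Summing over the $s=\|\textbf{g}_1\|_0$ non-zero components in $\textbf{g}_1$ (and using that only these directions matter for the dual inner product with $\textbf{g}_1$) yields the $\frac{8\sqrt{L\|K\|_\infty s}}{M^\alpha}\|\textbf{g}_1\|_\mathbb{H}$ term. An analogous argument using the ``diagonal'' approximation error coming from replacing $\widetilde y_{i,d,\lambda}$ by $\langle Z_{i,d},\phi_\lambda\rangle$ produces the $\|\textbf{g}_1\|_\infty \sigma^2/p$ contribution (this is where the noise variance enters through the discretization step).

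For the stochastic part, I would apply a concentration inequality (Bernstein's inequality for sub-exponential random variables, since the $\widetilde{\textbf{Y}}_i$ are Gaussian-like and their tensor products give sub-exponential entries) coordinatewise to each of the $(MD)^2$ coefficients of $\widehat{\Gamma}_\phi-\Gamma_\phi$ in the product basis $(\phi_\lambda\otimes \mathbf{e}_d)$. A union bound over these coefficients, combined with the observation that the variance proxy is of order $\widetilde{\mu}_1+\sigma^2/p$ and the envelope is of order $\|K\|_\infty+\sigma^2/p$, produces exactly the rate $\lambda_1\asymp\sqrt{(\widetilde{\mu}_1+\sigma^2/p)(\|K\|_\infty+\sigma^2/p)}\bigl(\sqrt{\log(MD)/n}+\log(MD)/n\bigr)$, with probability at least $1-2(MD)^{-1}$. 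Hölder's inequality $|\langle v,w\rangle_\mathbb{H}|\leq \|v\|_\infty \|w\|_1$ then converts this $\ell_\infty$ control into a bound proportional to $\lambda_1\|\widehat{\textbf{g}}-\textbf{g}_1\|_1$.

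The final step, which I expect to be the main obstacle, is handling the \emph{quadratic} $\|\widehat{\textbf{g}}-\textbf{g}_1\|_\mathbb{H}^2$ term on the right-hand side. This term must absorb the cross contributions of the form $(\widehat{\Gamma}_\phi-\Gamma)(\widehat{\textbf{g}}-\textbf{g}_1)$ evaluated at $\widehat{\textbf{g}}-\textbf{g}_1$ itself (arising in the Taylor remainder of $\widehat{R}_\phi$), as well as the contribution of $C_T$ quantities appearing through the $\log(T)$ union bound (handled via a peeling/slicing argument over dyadic levels of $\|\widehat{\textbf{g}}\|_1 \in [0, T]$, giving the extra $\log(T)/(pD)$ failure probability). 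To split them, I would apply Young's inequality $2ab\leq a^2/c + cb^2$ with $c$ chosen proportional to $\sqrt{\mu_1}(\rho-8\eta)$; the oracle condition is precisely the condition ensuring that the resulting cross-terms (like those involving $C_T$) are dominated by $\sqrt{\mu_1}(\rho-8\eta)/4$, so that they can be absorbed into the desired quadratic term. Putting these three controls together and using that $\lambda$ is chosen larger than four times the sum of all the deterministic upper bounds derived above yields exactly the stated inequality.
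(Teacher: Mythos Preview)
Your proposal is essentially correct and follows the same architecture as the paper's proof: decompose $\dot R-\dot{\widehat R}_\phi=4(\widehat\Gamma_\phi-\Gamma)$ into bias $(\Gamma_\phi-\Gamma)$ and fluctuation $(\widehat\Gamma_\phi-\Gamma_\phi)$, control the linear part via $\|\cdot\|_\infty/\|\cdot\|_1$ duality (Lemmas~\ref{Lemma bias infty norm} and~\ref{Lemma concentration}), and control the quadratic form $\langle(\widehat\Gamma_\phi-\Gamma)(\textbf{g}_1-\widehat{\textbf{g}}),\textbf{g}_1-\widehat{\textbf{g}}\rangle_\mathbb{H}$ uniformly via a peeling argument (Lemmas~\ref{Lemma:multi:bias:oracle} and~\ref{lemma:most:important}), with the oracle condition absorbing the $C_T$ constants into $\sqrt{\mu_1}(\rho-8\eta)$.

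Two small clarifications. First, $E(\textbf{g}_1)$ is defined at $\widehat{\textbf{g}}$, not at $\textbf{g}_1$; the split into a piece at $\textbf{g}_1$ and a piece at $\textbf{g}_1-\widehat{\textbf{g}}$ comes from the \emph{linearity} of $h\mapsto(\widehat\Gamma_\phi-\Gamma)(h)$, not from a Taylor remainder. Second, Young's inequality is not actually needed: the peeling lemma (Lemma~\ref{lemma:most:important}, which is the functional analogue of Lemma~10 in \cite{VAN}) already delivers a bound of the form $\lambda_1\|\textbf{g}\|_1+\mathcal Q(\|\textbf{g}\|_1^2)\|\textbf{g}\|_\mathbb{H}^2$, and the constraint $\|\widehat{\textbf{g}}\|_1\le T$ turns $\mathcal Q(\|\textbf{g}_1-\widehat{\textbf{g}}\|_1^2)$ directly into the $C_T$ constant that the oracle condition then dominates by $\sqrt{\mu_1}(\rho-8\eta)$. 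No additional splitting of cross terms is required.
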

The proof is provided in Subsection \ref{proof:upper bound on E(g)} page \pageref{proof:upper bound on E(g)}.

\subsubsection{End of the proof of Theorem \ref{theo:multi}}
Since $\widehat{\textbf{g}}$ is a solution to the optimization problem \eqref{Eq:function optimization on gamma lasso}, we know it satisfies the following: 
\begin{equation}\label{eq:sub-diff:theo}
   \langle\widehat{\dot{{{R}}}}_\phi(\widehat{{\textbf{g}}})+\lambda\partial\|\widehat{{\textbf{g}}}\|_1,\textbf{g}_1-\widehat{{\textbf{g}}}\rangle_\mathbb{H}\geq 0,
\end{equation}
where  $\partial\|\widehat{{\textbf{g}}}\|_1$ is the sub-differential of the 1-norm  evaluated at $\widehat{{\textbf{g}}}$. By Taylor
expansion of the loss function $R$ (from equation \eqref{eq:diff:R:taylor}), we obtain:

$$R(\textbf{g}_1)-R(\widehat{{\textbf{g}}})=\langle\dot{R}(\widehat{{\textbf{g}}}),\textbf{g}_1-\widehat{{\textbf{g}}}\rangle_\mathbb{H}+\frac{1}{2}\langle\textbf{g}_1-\widehat{{\textbf{g}}},\ddot{R}(\textbf{g}^*)(\textbf{g}_1-\widehat{{\textbf{g}}})\rangle_\mathbb{H},$$
where there exists $t\in]0,1[$ such that $\textbf{g}^*=t\textbf{g}_1+(1-t)\widehat{{\textbf{g}}}$, thus $\|\textbf{g}_1-{{\textbf{g}}}^*\|_\mathbb{H}\leq \eta$
and applying Lemma \ref{Lemma hermitian} we know that $\langle\textbf{g}_1-\widehat{{\textbf{g}}},\ddot{R}(\textbf{g}^*)(\textbf{g}_1-\widehat{{\textbf{g}}})\rangle_\mathbb{H}>0$, which implies that :

\begin{equation}\label{eq:hermit:theo}
    R(\textbf{g}_1)-R(\widehat{{\textbf{g}}})\geq\langle\dot{R}(\widehat{{\textbf{g}}}),\textbf{g}_1-\widehat{{\textbf{g}}}\rangle_\mathbb{H}.
\end{equation}
Combining inequalities \eqref{eq:sub-diff:theo} and \eqref{eq:hermit:theo} gives the following:

\begin{equation}
    0\geq R(\widehat{{\textbf{g}}})-R(\textbf{g}_1)+\langle\dot{R}(\widehat{{\textbf{g}}})-\widehat{\dot{{{R}}}}_\phi(\widehat{{\textbf{g}}}),\textbf{g}_1-\widehat{{\textbf{g}}}\rangle_\mathbb{H}+\langle\lambda\partial\|\widehat{{\textbf{g}}}\|_1,\widehat{{\textbf{g}}}-\textbf{g}_1\rangle_\mathbb{H}.
\end{equation}
Since from equation \eqref{eq:subdiif:norme:1} we know that $\partial\|\widehat{{\textbf{g}}}\|_1=Sign(\widehat{\textbf{g}})$, by definition we have $\langle\partial\|\widehat{{\textbf{g}}}\|_1,\widehat{{\textbf{g}}}\rangle_\mathbb{H}=\|\widehat{{\textbf{g}}}\|_1$ and since $\|Sign(\widehat{\textbf{g}})\|_\infty\leq 1$ it implies that $\langle\partial\|\widehat{{\textbf{g}}}\|_1,-\textbf{g}_1\rangle_\mathbb{H}\leq \|\textbf{g}_1\|_1$, thus we have:

\begin{align}\label{eq:upper_bound}
   \lambda\|\textbf{g}_1\|_1- \langle\dot{R}(\widehat{{\textbf{g}}})-\widehat{\dot{{{R}}}}_\phi(\widehat{{\textbf{g}}}),\textbf{g}_1-\widehat{{\textbf{g}}}\rangle_\mathbb{H}&\geq R(\widehat{{\textbf{g}}})-R(\textbf{g}_1)+\lambda\|\widehat{{\textbf{g}}}\|_1
\end{align}
we set:
\begin{align}
    E(\textbf{g}_1)& =- \langle\dot{R}(\widehat{{\textbf{g}}})-\widehat{\dot{{{R}}}}_\phi(\widehat{{\textbf{g}}}),\textbf{g}_1-\widehat{{\textbf{g}}}\rangle_\mathbb{H}\label{eq:def:E(g)}\\&=- \langle\dot{R}(\textbf{g}_1)-\widehat{\dot{{{R}}}}_\phi(\textbf{g}_1),\textbf{g}_1-\widehat{{\textbf{g}}}\rangle_\mathbb{H}+\langle\dot{R}(\textbf{g}_1-\widehat{{\textbf{g}}})-\widehat{\dot{{{R}}}}_\phi(\textbf{g}_1-\widehat{{\textbf{g}}}),\textbf{g}_1-\widehat{{\textbf{g}}}\rangle_\mathbb{H}\nonumber.
\end{align}
Thus rewriting the Inequality \eqref{eq:upper_bound}, we get :

\begin{align*}
   R(\widehat{{\textbf{g}}})-R(\textbf{g}_1)+\lambda\|\widehat{{\textbf{g}}}\|_1&\leq \lambda\|\textbf{g}_1\|_1 +E(\textbf{g}_1).
\end{align*}
To conclude we use one last time a Taylor expansion, keeping mind that since $\textbf{g}_1$ is the minimizer in this situation $\dot{R}(\textbf{g}_1)=0$:

$$R(\widehat{{\textbf{g}}})-R(\textbf{g}_1)=\frac{1}{2}\langle\textbf{g}_1-\widehat{{\textbf{g}}},\ddot{R}(\textbf{g}^*_1)(\textbf{g}_1-\widehat{{\textbf{g}}})\rangle_\mathbb{H}$$
where there exists $t\in]0,1[$ such that $\textbf{g}_1^*=t\textbf{g}_1+(1-t)\widehat{{\textbf{g}}}$, thus $\|\textbf{g}_1-{{\textbf{g}}}_1^*\|_\mathbb{H}\leq \eta$
and applying Lemma \ref{Lemma hermitian} we know that :

$$R(\widehat{{\textbf{g}}})-R(\textbf{g}_1)\geq 2\sqrt{\mu_1}(\rho-8\eta) \|\textbf{g}_1-\widehat{{\textbf{g}}}\|_\mathbb{H}^2.$$
Thus we have:
\begin{align*}
   2\sqrt{\mu_1}(\rho-8\eta) \|\textbf{g}_1-\widehat{{\textbf{g}}}\|_\mathbb{H}^2+\lambda\|\widehat{{\textbf{g}}}\|_1&\leq \lambda\|\textbf{g}_1\|_1 +E(\textbf{g}_1).
\end{align*}
Using Lemma \ref{upper bound on E(g)} we know that for all $\lambda\geq 4\Big(\|\textbf{g}_1\|_\mathbb{H}(\lambda_1+\frac{8\sqrt{L\|K\|_\infty s}}{M^\alpha})+\|\textbf{g}_1\|_\infty\frac{\sigma^2}{p}+\lambda_1\Big)$ with probability at least $1-2\frac{\log(T)+1}{MD}$ we have:
$$|E(\textbf{g}_1)|\leq \lambda\|\textbf{g}_1-\widehat{{\textbf{g}}}\|_1+\sqrt{\mu_1}(\rho-8\eta) \|\textbf{g}_1-\widehat{{\textbf{g}}}\|_\mathbb{H}^2,$$
which implies that 

\begin{align*}
   \sqrt{\mu_1}(\rho-8\eta) \|\textbf{g}_1-\widehat{{\textbf{g}}}\|_\mathbb{H}^2+\lambda\|\widehat{{\textbf{g}}}\|_1&\leq  \lambda\|\textbf{g}_1\|_1+ \lambda\|\textbf{g}_1-\widehat{{\textbf{g}}}\|_1.
\end{align*}
Since $\|\textbf{g}_1\|_0=s$. Denoting by $\textbf{g}_1^S$ the projection on those components, we have :

\begin{align*}
    \|\widehat{{\textbf{g}}}\|_1&=\|\widehat{{\textbf{g}}}^S\|_1+\|(\widehat{{\textbf{g}}}^{S})^c\|_1\\
    \|\textbf{g}_1\|_1&=\|\textbf{g}_1^S\|_1\\
    \|\textbf{g}_1-\widehat{{\textbf{g}}}\|_1&=\|(\textbf{g}_1-\widehat{{\textbf{g}}})^S\|_1+\|(\widehat{{\textbf{g}}}^{S})^c\|_1,
\end{align*}
using this, we have

\begin{align*}
   \sqrt{\mu_1}(\rho-8\eta) \|\textbf{g}_1-\widehat{{\textbf{g}}}\|_\mathbb{H}^2&\leq  2\lambda\|(\textbf{g}_1-\widehat{{\textbf{g}}})^S\|_1\\&\leq  2\lambda\sqrt{s}\|(\textbf{g}_1-\widehat{{\textbf{g}}})^S\|_\mathbb{H}\\&\leq  (2\lambda\sqrt{s})\|\textbf{g}_1-\widehat{{\textbf{g}}}\|_\mathbb{H},
\end{align*}
and 
\[\|\textbf{g}_1-\widehat{{\textbf{g}}}\|_\mathbb{H}\leq  \frac{2\lambda\sqrt{s}}{\sqrt{\mu_1}(\rho-8\eta)}\]
which ends the proof of Theorem \ref{theo:multi}.

\subsection{Proof of Lemma \ref{upper bound on E(g)}}
The term $E(\textbf{g}_1)$ encapsulates error terms, we recall its definition:
\begin{align*}
    E(\textbf{g}_1)& =- \langle\dot{R}(\widehat{{\textbf{g}}})-\widehat{\dot{{{R}}}}_\phi(\widehat{{\textbf{g}}}),\textbf{g}_1-\widehat{{\textbf{g}}}\rangle_\mathbb{H}\\&=- \langle\dot{R}(\textbf{g}_1)-\widehat{\dot{{{R}}}}_\phi(\textbf{g}_1),\textbf{g}_1-\widehat{{\textbf{g}}}\rangle_\mathbb{H}+\langle\dot{R}(\textbf{g}_1-\widehat{{\textbf{g}}})-\widehat{\dot{{{R}}}}_\phi(\textbf{g}_1-\widehat{{\textbf{g}}}),\textbf{g}_1-\widehat{{\textbf{g}}}\rangle_\mathbb{H}.
\end{align*}
As consequence of Proposition \ref{prop:differential:multi}, we know that (more details are provided in the rest of the proof): $$\dot{R}(\bullet)-\widehat{\dot{{{R}}}}_\phi(\bullet)=-4\Big(\Gamma(\bullet)-\widehat{\Gamma}_\phi(\bullet)\Big).$$ Thus we need to have control over $\Gamma(\bullet)-\widehat{\Gamma}_\phi(\bullet)$, the lemmas below provide precisely that.
\begin{lemma}\label{Lemma bias infty norm}Let $\mathbb{P}_\textbf{Z}\in R_\alpha^{(D)}(L)$, $s\in\{1,\dots,D\}$ and $\| \textbf{g}_1\|_0=s$, we have 

$$\|({{\Gamma}}_\phi-{\Gamma})(\textbf{g}_1)\|_\infty\leq\frac{8\sqrt{sL\|K\|_\infty}\|\textbf{g}_1\|_\mathbb{H}}{M^\alpha}+\frac{\sigma^2\|\textbf{g}_1\|_\infty}{p}.$$
\end{lemma}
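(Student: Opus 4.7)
\textbf{Proof proposal for Lemma \ref{Lemma bias infty norm}.} The plan is to decompose $\Gamma_\phi$ into a ``signal'' part and a ``noise'' part by taking expectations at the level of the coefficients $\widetilde y_{i,d,\lambda}$, and then to control each piece separately, exploiting the Hölder regularity \eqref{alpha:holder:inequality} for the signal and the sparsity $\|\textbf{g}_1\|_0=s$ to pull in the factor $\sqrt{s}$.

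First, I would compute $K_\phi(s,t)=\mathbb E[\widehat K_\phi(s,t)]$ by plugging $Y_{i,d}(t_h)=Z_{i,d}(t_h)+\varepsilon_{i,d,h}$ into the definition of $\widetilde y_{i,d,\lambda}$. Independence of the $\varepsilon$'s and $\textbf Z$, together with $\mathbb E[\varepsilon_{i,d,h}\varepsilon_{i,d',h'}]=\sigma^2\delta_{d,d'}\delta_{h,h'}$, gives
\[
K_{\phi,d,d'}(s,t)=K^{\mathrm{sig}}_{\phi,d,d'}(s,t)+\frac{\sigma^2\delta_{d,d'}}{p}\sum_{\lambda\in\Lambda_M}\phi_\lambda(s)\phi_\lambda(t),
\]
where the signal part is
\[
K^{\mathrm{sig}}_{\phi,d,d'}(s,t)=\sum_{\lambda,\lambda'}\phi_\lambda(t)\phi_{\lambda'}(s)\,\frac{1}{p^2}\sum_{h,h'}K_{d,d'}(t_h,t_{h'})\phi_\lambda(t_h)\phi_{\lambda'}(t_{h'}).
\]
For the histogram basis, the orthonormality relation $\frac1p\sum_h\phi_\lambda(t_h)\phi_{\lambda'}(t_h)=\delta_{\lambda,\lambda'}$ (which uses $M\mid p$) and the identity $\sum_\lambda\phi_\lambda(s)\phi_\lambda(t)=M\mathbf 1_{\{s,t\in I_\lambda\text{ for some }\lambda\}}$ will be used.

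For the noise term, applied to $\textbf{g}_1$, the $d$-th component equals $\frac{\sigma^2}{p}\,P_\phi(g_{1,d})(t)$, where $P_\phi$ is the $L^2$ projection onto the histogram space; since histograms act by piecewise averaging, $\|P_\phi(g)\|_\infty\leq\|g\|_\infty$, yielding the $\sigma^2\|\textbf{g}_1\|_\infty/p$ contribution immediately.

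For the signal term, a direct calculation shows that $K^{\mathrm{sig}}_{\phi,d,d'}(s,t)$ is piecewise constant on each rectangle $I_\lambda\times I_{\lambda'}$ and equals the discrete Riemann average of $K_{d,d'}$ over the corresponding grid block. Then, for any $(s,t)\in I_{\lambda'}\times I_\lambda$, the Hölder inequality \eqref{alpha:holder:inequality} applied to each summand gives
\[
\bigl|K^{\mathrm{sig}}_{\phi,d,d'}(s,t)-K_{d,d'}(s,t)\bigr|\leq C\sqrt{L\|K\|_\infty}\,M^{-\alpha},
\]
uniformly in $(s,t,d,d')$, since both $|t-t_h|$ and $|s-t_{h'}|$ are at most $1/M$ inside a block. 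This is the main technical step; the only subtlety is making sure the discrete average and the continuous value are compared on the same block, which requires the regularity of the grid.

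Finally I would write
\[
\bigl((\Gamma_\phi-\Gamma)(\textbf{g}_1)\bigr)_d(t)
=\sum_{d'=1}^{D}\int_0^1\bigl(K^{\mathrm{sig}}_{\phi,d,d'}(s,t)-K_{d,d'}(s,t)\bigr)g_{1,d'}(s)\,ds
+\frac{\sigma^2}{p}P_\phi(g_{1,d})(t),
\]
and bound the first sum by Cauchy--Schwarz component-wise, using that only $s$ of the $g_{1,d'}$ are nonzero so that $\sum_{d'}\|g_{1,d'}\|_{L^2}\leq\sqrt{s}\,\|\textbf{g}_1\|_{\mathbb H}$. Combining with the noise bound yields the stated inequality. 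The main obstacle is the bookkeeping in the signal step (showing that $K^{\mathrm{sig}}_\phi$ really is the blockwise average and justifying the $L^\infty$ control through the Hölder condition); everything else is essentially algebra plus the sparsity-driven Cauchy--Schwarz.
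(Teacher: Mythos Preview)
Your proposal is correct and follows essentially the same approach as the paper: split $K_\phi$ into a signal part and the noise part $\frac{\sigma^2}{p}\Pi_{S_M^D}$, control the noise via the $L^\infty$-contractivity of histogram averaging, and control the signal via the H\"older bound \eqref{alpha:holder:inequality} combined with $\|\textbf{g}_1\|_1\le\sqrt{s}\,\|\textbf{g}_1\|_\mathbb{H}$. The only cosmetic difference is that the paper further splits your signal error into a continuous projection piece $\Pi_{S_M^D}\Gamma\Pi_{S_M^D}-\Gamma$ and a Riemann-sum remainder $T^{(K)}$ (each contributing $4\sqrt{sL\|K\|_\infty}\,\|\textbf{g}_1\|_\mathbb{H}\,M^{-\alpha}$, hence the factor $8$), whereas you bound $K^{\mathrm{sig}}_\phi-K$ in one stroke; your route is slightly more direct and yields the same constant.
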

\begin{lemma}
\label{Lemma:multi:bias:oracle}Let $\mathbb{P}_\textbf{Z}\in R_\alpha^{(D)}(L)$ we have for all $ \textbf{g}\in\mathbb{H}$

$$|\langle({\Gamma}_\phi-\Gamma)\textbf{g},\textbf{g}\rangle_\mathbb{H}|\leq \|\textbf{g}\|_\mathbb{H}^2 \Big(\frac{8D\sqrt{\|K\|_\infty L}}{(\alpha+1)M^{\alpha}}+\frac{\sigma^2}{p}\Big).$$
\end{lemma}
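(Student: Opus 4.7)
The plan is to decompose $\Gamma_\phi-\Gamma$ into a pure-signal (bias) piece and a pure-noise piece, handle each separately, and recombine. Writing $\widetilde{\textbf{Y}}_1=\widetilde{\textbf{Z}}_1+\widetilde{\boldsymbol{\varepsilon}}_1$, the cross terms vanish by independence and centering of $\boldsymbol{\varepsilon}$, so
\[
\langle(\Gamma_\phi-\Gamma)\textbf{g},\textbf{g}\rangle_{\mathbb{H}}=\bigl(\mathbb{E}\langle\widetilde{\textbf{Z}}_1,\textbf{g}\rangle_{\mathbb{H}}^2-\mathbb{E}\langle\textbf{Z}_1,\textbf{g}\rangle_{\mathbb{H}}^2\bigr)+\mathbb{E}\langle\widetilde{\boldsymbol{\varepsilon}}_1,\textbf{g}\rangle_{\mathbb{H}}^2.
\]
The first bracket will produce the $\tfrac{8D\sqrt{\|K\|_\infty L}}{(\alpha+1)M^\alpha}\|\textbf{g}\|_\mathbb{H}^2$ bias contribution, the second one the $\tfrac{\sigma^2}{p}\|\textbf{g}\|_\mathbb{H}^2$ noise contribution.

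For the noise term, I would expand $\langle\widetilde{\boldsymbol{\varepsilon}}_1,\textbf{g}\rangle_{\mathbb{H}}=\sum_{d,\lambda}\widetilde{\varepsilon}_{1,d,\lambda}\langle\phi_\lambda,g_d\rangle$. Using that the $\varepsilon_{1,d,h}$ are i.i.d.\ $\mathcal{N}(0,\sigma^2)$ and that for the histogram basis $\tfrac{1}{p}\sum_h\phi_\lambda(t_h)\phi_\mu(t_h)=\delta_{\lambda\mu}$ (each bin $I_\lambda$ contains exactly $p/M$ grid points), a direct computation gives $\mathbb{E}[\widetilde{\varepsilon}_{1,d,\lambda}\widetilde{\varepsilon}_{1,d',\mu}]=\tfrac{\sigma^2}{p}\delta_{dd'}\delta_{\lambda\mu}$, hence $\mathbb{E}\langle\widetilde{\boldsymbol{\varepsilon}}_1,\textbf{g}\rangle_{\mathbb{H}}^2=\tfrac{\sigma^2}{p}\sum_{d}\sum_{\lambda}\langle\phi_\lambda,g_d\rangle^2\le\tfrac{\sigma^2}{p}\|\textbf{g}\|_{\mathbb{H}}^2$ by Bessel's inequality applied coordinate by coordinate to the orthonormal system $(\phi_\lambda)_{\lambda\in\Lambda_M}$.

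For the bias term, the strategy is to prove a uniform $L^\infty$ bound on $K_\phi^Z - K$ and then convert it into a bound on the quadratic form. For $s\in I_\lambda$ and $t\in I_\mu$ the definition of $\widetilde{\textbf{Z}}$ yields
\[
(K_\phi^Z)_{d,d'}(s,t)-K_{d,d'}(s,t)=\frac{M^2}{p^2}\sum_{h:\,t_h\in I_\lambda}\sum_{h':\,t_{h'}\in I_\mu}\bigl[K_{d,d'}(t_h,t_{h'})-K_{d,d'}(s,t)\bigr],
\]
and the Hölder control \eqref{alpha:holder:inequality} reduces the estimate to two Riemann sums of the form $\tfrac{M}{p}\sum_{h\in I_\lambda}|t_h-s|^\alpha$. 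Each such sum is majorised by $M\int_{I_\lambda}|u-s|^\alpha\,du\le\tfrac{2}{(\alpha+1)M^\alpha}$ since the worst-case position of $s\in I_\lambda$ is an endpoint. The resulting uniform bound $\|K_\phi^Z-K\|_\infty\leq C\sqrt{L\|K\|_\infty}/((\alpha+1)M^\alpha)$ is then fed into
\[
\bigl|\mathbb{E}\langle\widetilde{\textbf{Z}}_1,\textbf{g}\rangle_\mathbb{H}^2-\mathbb{E}\langle\textbf{Z}_1,\textbf{g}\rangle_\mathbb{H}^2\bigr|\leq \|K_\phi^Z-K\|_\infty\cdot\|\textbf{g}\|_1^2,
\]
and the factor $D$ in the statement arises from $\|\textbf{g}\|_1^2\leq D\|\textbf{g}\|_{\mathbb{H}}^2$, itself a double application of Cauchy–Schwarz (first comparing $L^1$ and $L^2$ on $[0,1]$, then summing across the $D$ coordinates).

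The main obstacle is purely numerical: pinning down the absolute constants in the Riemann-sum majorisation so as to reach the prefactor $8/(\alpha+1)$ announced in the statement rather than a looser one, while being careful to use the two-variable Hölder inequality with exponent $\alpha$ on \emph{each} of $s$ and $t$ separately (the form implicit in \eqref{alpha:holder:inequality}). Once the uniform kernel bound is secured, every remaining step is linear algebra and Cauchy–Schwarz.
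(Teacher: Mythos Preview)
Your proof is correct and follows essentially the same strategy as the paper, with one organisational difference worth noting. The paper uses the four-term operator decomposition $\Gamma_\phi=\Pi_{S_M^D}\Gamma\Pi_{S_M^D}+\tfrac{\sigma^2}{p}\Pi_{S_M^D}+T^{(K)}+T^{(N)}$ and bounds the projection bias $\Pi_{S_M^D}\Gamma\Pi_{S_M^D}-\Gamma$ and the discretisation remainder $T^{(K)}$ separately (each contributing $\tfrac{4D\sqrt{L\|K\|_\infty}}{(\alpha+1)M^\alpha}$, whence the $8$), while you merge these two effects into a single uniform bound on $K_\phi^Z-K$. Your route is more direct and in fact yields a slightly sharper constant; the paper's splitting has the advantage that the same pieces ($T^{(K)}$, $T^{(N)}$, etc.) are reused verbatim in the proof of Lemma~\ref{Lemma bias infty norm}. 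One small caveat: your claim that the Riemann sum $\tfrac{M}{p}\sum_{h:t_h\in I_\lambda}|t_h-s|^\alpha$ is majorised by $M\int_{I_\lambda}|u-s|^\alpha\,du$ is not automatic for concave integrands, but you do not actually need it---the crude bound $|t_h-s|\le 1/M$ already gives $\tfrac{M}{p}\sum_h|t_h-s|^\alpha\le M^{-\alpha}\le\tfrac{8}{(\alpha+1)M^\alpha}$, which suffices for the stated inequality.
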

Proofs of these lemmas are provided in Subsection \ref{Proof determinstic} page \pageref{Proof determinstic}. Similarly for the random terms.
\begin{lemma}\label{Lemma concentration}
Assuming the observations are Gaussian, we have with probability at least $1-\frac{2}{MD}$

\begin{align*}
    \|(\widehat{{\Gamma}}_\phi-{\Gamma}_\phi)(\textbf{g}_1)\|_\infty&\leq 4\|\textbf{g}_1\|_\mathbb{H}\sqrt{(\widetilde{\mu}_1+\frac{\sigma^2}{p})(\|K\|_\infty+\frac{\sigma^2}{p})}[4\sqrt{\frac{\log(DM)}{n}}+\frac{\log(DM)}{n}]\\&= \|\textbf{g}_1\|_\mathbb{H}\lambda_1,
\end{align*}
where $\widetilde{\mu}_1+\frac{\sigma^2}{p}$ is the largest eigenvalue of $\frac{\mathbb{E}[\textbf{Y}_1^T\textbf{Y}_1]}{p}$.
\end{lemma}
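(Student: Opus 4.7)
The plan is to prove Lemma~\ref{Lemma concentration} by reducing the infinite–dimensional sup--norm to a finite maximum of $MD$ scalar sums (thanks to the histogram structure), then applying a Bernstein-type inequality to each scalar sum (whose summands are sub-exponential products of jointly Gaussian variables), and finally taking a union bound.

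First, I would expand $(\widehat\Gamma_\phi-\Gamma_\phi)(\textbf{g}_1)=\frac1n\sum_{i=1}^n\bigl(\widetilde{\textbf{Y}}_i\otimes\widetilde{\textbf{Y}}_i-\mathbb{E}[\widetilde{\textbf{Y}}_i\otimes\widetilde{\textbf{Y}}_i]\bigr)(\textbf{g}_1)$. Because each $\widetilde{\textbf{Y}}_i$ is a histogram, its $d$-th component takes the constant value $\sqrt{M}\,\widetilde y_{i,d,\lambda}$ on $I_\lambda$, so that on this interval the $d$-th component of $(\widehat\Gamma_\phi-\Gamma_\phi)(\textbf{g}_1)$ equals $\sqrt{M}\,T_{d,\lambda}$ with
\begin{equation*}
T_{d,\lambda}:=\frac{1}{n}\sum_{i=1}^n\Bigl(\widetilde y_{i,d,\lambda}\langle\widetilde{\textbf{Y}}_i,\textbf{g}_1\rangle_\mathbb{H}-\mathbb{E}\bigl[\widetilde y_{i,d,\lambda}\langle\widetilde{\textbf{Y}}_i,\textbf{g}_1\rangle_\mathbb{H}\bigr]\Bigr).
\end{equation*}
Hence $\|(\widehat\Gamma_\phi-\Gamma_\phi)(\textbf{g}_1)\|_\infty=\sqrt{M}\max_{d,\lambda}|T_{d,\lambda}|$, reducing everything to concentration of $MD$ scalar empirical means.

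Second, since the observations are Gaussian, the coefficient $\widetilde y_{i,d,\lambda}$ and the inner product $\langle\widetilde{\textbf{Y}}_i,\textbf{g}_1\rangle_\mathbb{H}$ are both linear functionals of $\textbf{Y}_i$, hence jointly Gaussian. Each summand of $T_{d,\lambda}$ is therefore a centered product of two jointly Gaussian scalars, which is sub-exponential, and Isserlis' formula gives $\mathrm{Var}(XY)\le 2\,\mathbb{E}[X^2]\mathbb{E}[Y^2]$. To turn this into the announced numerical bound, I would write $\widetilde{\textbf{y}}_i=A\textbf{Y}_i$ with $A\in\mathbb R^{MD\times pD}$ the histogram-projection matrix, verify $AA^T=I_{MD}/p$, and then control $\mathbb{E}[\langle\widetilde{\textbf{Y}}_i,\textbf{g}_1\rangle_\mathbb{H}^2]\le\|\textbf{g}_1\|_\mathbb{H}^2(\widetilde\mu_1+\sigma^2/p)$ and the complementary factor (the one that absorbs the $\sqrt{M}$ prefactor) by $\|K\|_\infty+\sigma^2/p$, both via the operator-norm identity that $\widetilde\mu_1+\sigma^2/p$ is the largest eigenvalue of $\mathbb{E}[\textbf{Y}_1\textbf{Y}_1^T]/p$.

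Third, applying a Bernstein inequality for i.i.d.\ centered sub-exponential summands yields, for every $u>0$,
\begin{equation*}
\mathbb{P}\bigl(\sqrt{M}\,|T_{d,\lambda}|>u\bigr)\le 2\exp\!\left(-c\,\min\!\left(\frac{nu^2}{V^2},\frac{nu}{V}\right)\right),
\end{equation*}
where $V=4\|\textbf{g}_1\|_\mathbb{H}\sqrt{(\widetilde\mu_1+\sigma^2/p)(\|K\|_\infty+\sigma^2/p)}$ absorbs the variance and sub-exponential Orlicz norm. Choosing $u$ of the order $V\bigl(\sqrt{\log(MD)/n}+\log(MD)/n\bigr)$ makes each single-pair probability bounded by $2/(MD)^2$, and a union bound over the $MD$ pairs $(d,\lambda)$ yields the overall probability $1-2/(MD)$ announced in the lemma.

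The main obstacle is the bookkeeping of constants. The challenge is twofold: (i) justifying that, after the multiplication by the prefactor $\sqrt M$, the product of variances re-expresses itself as $(\widetilde\mu_1+\sigma^2/p)(\|K\|_\infty+\sigma^2/p)$ rather than as a quantity depending on $M$, which requires exploiting the mixed bounds $\mathbb{E}[\widetilde y_{i,d,\lambda}^2]\le(\|K\|_\infty+\sigma^2/p)/M$ and the covariance-operator bound in parallel; and (ii) propagating the Bernstein constants to land on the exact numerical prefactor $4$ and the split $4\sqrt{\log(MD)/n}+\log(MD)/n$ in $\lambda_1$, which is what makes the statement quantitative rather than merely up to absolute constants.
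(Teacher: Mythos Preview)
Your proposal is correct and follows essentially the same route as the paper: reduce the sup-norm to a maximum over $MD$ indices using the histogram structure, recognise each term as a centered sum of products of two jointly Gaussian scalars, bound the moments via the Gaussian moment formula combined with Cauchy--Schwarz (which is exactly what your Isserlis/sub-exponential observation encodes), apply Bernstein, and union-bound. The only cosmetic difference is that the paper states Bernstein in the $\sqrt{2vt}+ct$ form with an explicit moment hypothesis $\mathbb{E}[|X_1|^k]\le\frac{k!}{2}vc^{k-2}$ rather than in the sub-exponential/Orlicz language, which makes the constant-tracking you flag in~(ii) slightly more direct.
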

\begin{lemma}\label{lemma:most:important}
Let $J:=\log(T)$ and $\lambda_0=\sqrt{\frac{2\log(2pD)}{n}}$. Then with probability at least $1-2\frac{J+1}{MD}$, it holds for all $ \textbf{g}\in\mathcal{B}(\eta)  ,\|\textbf{g}\|_1\leq T$:
\begin{equation}
 |\langle(\widehat{\Gamma}_\phi-\Gamma_\phi)(\textbf{g}),\textbf{g}\rangle_\mathbb{H}|\leq \lambda_1\|\textbf{g}\|_\mathbb{H}+\mathcal{Q}(\|\textbf{g}\|_1^2,3\sqrt{\widetilde{\mu}_1+\frac{\sigma^2}{p}})\|\textbf{g}\|_\mathbb{H}^2,
\end{equation}
with $\mathcal{Q}=(x,\zeta)=108 \zeta^2[3x^2\lambda_0+\sqrt{6}x\lambda_0]$ and $\lambda_0=\sqrt{\frac{2\log(2pD)}{n}}$.
\end{lemma}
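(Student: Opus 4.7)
The plan is to prove this uniform concentration bound for the random quadratic form $\textbf{g}\mapsto\langle(\widehat\Gamma_\phi-\Gamma_\phi)\textbf{g},\textbf{g}\rangle_\mathbb{H}$ by combining a Bernstein-type inequality (applied to a fixed direction) with a dyadic peeling argument over the $\ell_1$-ball, which is the natural source of the $\log(T)=J$ factor in the failure probability.

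First, I would split the bilinear form around the reference point $\textbf{g}_1$:
\begin{equation*}
\langle(\widehat\Gamma_\phi-\Gamma_\phi)\textbf{g},\textbf{g}\rangle_\mathbb{H} = \langle(\widehat\Gamma_\phi-\Gamma_\phi)\textbf{g}_1,\textbf{g}\rangle_\mathbb{H} + \langle(\widehat\Gamma_\phi-\Gamma_\phi)(\textbf{g}-\textbf{g}_1),\textbf{g}\rangle_\mathbb{H}.
\end{equation*}
The first summand is essentially already done: the dual pairing $|\langle f,h\rangle_\mathbb{H}|\leq \|f\|_\infty\|h\|_1$ together with Lemma~\ref{Lemma concentration} and Cauchy--Schwarz on the $s$-sparse support of $\textbf{g}_1$ (combined with the ball constraint $\textbf{g}\in\mathcal{B}(\eta)$ to pass from $\|\textbf{g}\|_1$ to $\|\textbf{g}\|_\mathbb{H}$ on this support-restricted part) will produce the additive $\lambda_1\|\textbf{g}\|_\mathbb{H}$ term.

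For the quadratic residual, I would rewrite $\langle(\widehat\Gamma_\phi-\Gamma_\phi)\textbf{u},\textbf{u}\rangle_\mathbb{H}$ as the centred sample mean $\frac{1}{n}\sum_{i=1}^n\bigl(\langle\widetilde{\textbf{Y}}_i,\textbf{u}\rangle_\mathbb{H}^2-\mathbb{E}[\langle\widetilde{\textbf{Y}}_i,\textbf{u}\rangle_\mathbb{H}^2]\bigr)$. This is a sum of i.i.d.\ centred sub-exponential variables under the Gaussian assumption on the $\textbf{Y}_i$'s, so Bernstein's inequality applies. The variance factor scales like $(\widetilde\mu_1+\sigma^2/p)\|\textbf{u}\|_\mathbb{H}^2$ (the leading eigenvalue of $p^{-1}\mathbb{E}[\textbf{Y}_1\textbf{Y}_1^T]$ controls the second moment of $\langle\widetilde{\textbf{Y}}_i,\textbf{u}\rangle_\mathbb{H}^2$), while the sub-exponential envelope uses $|\langle\widetilde{\textbf{Y}}_i,\textbf{u}\rangle_\mathbb{H}|\leq\|\widetilde{\textbf{Y}}_i\|_\infty\|\textbf{u}\|_1$, which is what injects the $\|\textbf{u}\|_1$ factors. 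The two resulting tail contributions give precisely the mixed bound $\|\textbf{u}\|_\mathbb{H}^2\bigl(\|\textbf{u}\|_1^4\lambda_0+\|\textbf{u}\|_1^2\lambda_0\bigr)$ encoded in $\mathcal{Q}(\|\textbf{g}\|_1^2,3\sqrt{\widetilde\mu_1+\sigma^2/p})\|\textbf{g}\|_\mathbb{H}^2$, with the $3\sqrt{\widetilde\mu_1+\sigma^2/p}$ prefactor coming from the Orlicz norm of the coordinates of $\widetilde{\textbf{Y}}_i$.

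To upgrade the pointwise bound into a uniform bound over $\{\|\textbf{g}\|_1\leq T\}\cap\mathcal{B}(\eta)$, I would use a dyadic peeling: define slices $\mathcal{S}_j=\{\textbf{g}:2^{j-1}<\|\textbf{g}\|_1\leq 2^j\}$ for $j=0,1,\dots,J=\lceil\log_2 T\rceil$. On each slice, the $\ell_1$-norm is pinned between two consecutive powers of two, so the Bernstein bound can be applied with that deterministic upper bound—this is why the probabilistic content collapses to the $J+1$ scales rather than requiring a covering of the whole $\ell_1$-ball. A union bound over the $J+1$ scales, each with probability of failure of order $(MD)^{-1}$, yields the claimed $1-2(J+1)/(MD)$.

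The main obstacle is the Bernstein step: one needs sharp control of the $\psi_1$-Orlicz norm of $\langle\widetilde{\textbf{Y}}_i,\textbf{u}\rangle_\mathbb{H}^2$ uniformly in $\textbf{u}$, which requires separating the two distinct moment behaviours (variance vs.\ envelope) so that the variance term stays pure in $\|\textbf{u}\|_\mathbb{H}$ while the $\|\textbf{u}\|_1$ factors are pushed entirely into the linear and quadratic terms in $\mathcal{Q}$. Getting the constants $108$, $3$, and $\sqrt{6}$ exactly right will require bookkeeping but no new idea.
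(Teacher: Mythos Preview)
Your overall machinery (sub-Gaussian/Bernstein concentration plus dyadic peeling over the $\ell_1$-scale) is the right one, and is indeed what underlies the paper's argument. But the specific decomposition you propose does not deliver the stated bound, and differs from what the paper does in an essential way.

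\textbf{The gap.} Splitting around $\textbf{g}_1$ gives a \emph{cross} term $\langle(\widehat\Gamma_\phi-\Gamma_\phi)(\textbf{g}-\textbf{g}_1),\textbf{g}\rangle_\mathbb{H}$, not a pure quadratic form, so your rewriting as $\frac{1}{n}\sum_i\big(\langle\widetilde{\textbf{Y}}_i,\textbf{u}\rangle_\mathbb{H}^2-\mathbb{E}[\cdot]\big)$ does not apply to it. More seriously, the first term fails to produce $\lambda_1\|\textbf{g}\|_\mathbb{H}$: Lemma~\ref{Lemma concentration} yields $\|(\widehat\Gamma_\phi-\Gamma_\phi)\textbf{g}_1\|_\infty\leq\lambda_1\|\textbf{g}_1\|_\mathbb{H}$, so by duality you get $\lambda_1\|\textbf{g}_1\|_\mathbb{H}\,\|\textbf{g}\|_1$, and neither the sparsity of $\textbf{g}_1$ nor the constraint $\textbf{g}\in\mathcal{B}(\eta)$ lets you convert $\|\textbf{g}\|_1$ (which may be as large as $T$) into $\|\textbf{g}\|_\mathbb{H}$. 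The $\lambda_1\|\textbf{g}\|_\mathbb{H}$ term in the lemma is \emph{not} meant to absorb a $\textbf{g}_1$-centered piece.

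\textbf{What the paper does instead.} The paper never recenters at $\textbf{g}_1$. It case-splits on the size of $\|\textbf{g}\|_1$. On $\{\|\textbf{g}\|_1\leq 1\}$ the duality bound $|\langle(\widehat\Gamma_\phi-\Gamma_\phi)\textbf{g},\textbf{g}\rangle_\mathbb{H}|\leq\|(\widehat\Gamma_\phi-\Gamma_\phi)\textbf{g}\|_\infty\|\textbf{g}\|_1$ together with the concentration $\|(\widehat\Gamma_\phi-\Gamma_\phi)\textbf{g}\|_\infty\leq\lambda_1\|\textbf{g}\|_\mathbb{H}$ gives directly $\lambda_1\|\textbf{g}\|_\mathbb{H}$, since $\|\textbf{g}\|_1\leq 1$. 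On $\{1\leq\|\textbf{g}\|_1\leq T\}$ the paper passes to coordinates: writing $(a_\textbf{g})_{\lambda,d}=M^{-1/2}\langle\phi_\lambda,g_d\rangle$ one has
\[
\langle(\widehat\Gamma_\phi-\Gamma_\phi)\textbf{g},\textbf{g}\rangle_\mathbb{H}=\tfrac{M}{p}\,a_\textbf{g}^T\psi^T(\widehat\Sigma-\Sigma)\psi\,a_\textbf{g},
\]
checks that $\textbf{Y}$ is sub-Gaussian with parameter $3\sqrt{p\widetilde\mu_1+\sigma^2}$, and then invokes Lemma~10 of Jankov\'a--van de Geer as a black box (Lemma~\ref{lemma:10} here), which already packages the Bernstein bound \emph{and} the dyadic peeling over $1\leq\|\theta\|_1\leq T$; this is precisely where $J=\log T$ enters the failure probability. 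The relations $\|a_\textbf{g}\|_1\leq\|\textbf{g}\|_1$, $\|a_\textbf{g}\|_{\ell_2}^2\leq\|\textbf{g}\|_\mathbb{H}^2/M$ and the scaling $\frac{1}{p}\mathcal{Q}(\cdot,3\sqrt{p\widetilde\mu_1+\sigma^2})=\mathcal{Q}(\cdot,3\sqrt{\widetilde\mu_1+\sigma^2/p})$ transport the matrix bound back to $\mathbb{H}$. A union bound over the two regimes gives $1-2(J+1)/(MD)$.

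So: drop the $\textbf{g}_1$-split, replace it with the $\|\textbf{g}\|_1\lessgtr 1$ split, and either carry out the peeling yourself on the finite-dimensional quadratic form or cite Lemma~\ref{lemma:10}.
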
 
Proofs of these Lemma are provided in Subsection \ref{Proof random} page \pageref{Proof random}.

Finally we recall the statement of Lemma \ref{upper bound on E(g)}.\\
{\bf Lemma \ref{upper bound on E(g)} }{ \it
 Assuming that the oracle condition is valid, for all $\lambda\geq 4\Big(\|\textbf{g}_1\|_\mathbb{H}(\lambda_1+\frac{8\sqrt{L\|K\|_\infty s}}{M^\alpha})+\|\textbf{g}_1\|_\infty\frac{\sigma^2}{p}+\lambda_1\Big)$ gives that with probability at least $1-2\frac{\log(T)+1}{MD}$
$$|E(\textbf{g}_1)|\leq \lambda\|\textbf{g}_1-\widehat{{\textbf{g}}}\|_1+\sqrt{\mu_1}(\rho-8\eta) \|\textbf{g}_1-\widehat{{\textbf{g}}}\|_\mathbb{H}^2$$
}

\subsubsection{Proof of Lemma \ref{upper bound on E(g)}}\label{proof:upper bound on E(g)}

We focus on the control of $ E(\textbf{g}_1)$:

\begin{align}\label{eq:ineq:E(g)}
    |E(\textbf{g}_1)|\leq \|\dot{R}(\textbf{g}_1)-\widehat{\dot{{{R}}}}_\phi(\textbf{g}_1)\|_\infty\|\textbf{g}_1-\widehat{{\textbf{g}}}\|_1+|\langle\dot{R}({{\textbf{g}}}_1-\widehat{{\textbf{g}}})-\widehat{\dot{{{R}}}}_\phi(\textbf{g}_1-\widehat{{\textbf{g}}}),{{\textbf{g}}}_1-\widehat{{\textbf{g}}}\rangle_\mathbb{H}|.
\end{align}
We first recall the definitions of $\Dot{R}$ and $\widehat{\dot{{{R}}}}_\phi$, let $h\in\mathbb{H}$ we have:
\begin{align*}
    \Dot{R}(\textbf{h}) &=4(\|\textbf{h}\|_\mathbb{H}^2\textbf{h}-\Gamma(\textbf{h})) \\
    \widehat{\dot{{{R}}}}_\phi(\textbf{h})&=4(\|\textbf{h}\|_\mathbb{H}^2\textbf{h}-\widehat{\Gamma}_\phi(\textbf{h})),
\end{align*}
which implies that:
\begin{align*}
    \|\dot{R}(\textbf{g}_1)-\widehat{\dot{{{R}}}}_\phi(\textbf{g}_1)\|_\infty&=4\|(\Gamma-\widehat{\Gamma}_\phi)(\textbf{g}_1)\|_\infty\\&\leq 4\Big(\|(\Gamma-{\Gamma}_\phi)(\textbf{g}_1)\|_\infty+\|(\Gamma_\phi-\widehat{\Gamma}_\phi)(\textbf{g}_1)\|_\infty\Big).
\end{align*}
Thus we need an upper bound on $\|(\Gamma-\widehat{\Gamma}_\phi)(\textbf{g}_1)\|_\infty$ and $\|(\Gamma_\phi-\widehat{\Gamma}_\phi)(\textbf{g}_1)\|_\infty$. However in Lemma \ref{Lemma bias infty norm} and Lemma \ref{Lemma concentration} we showed that:

\begin{align*}
    \|(\Gamma-{\Gamma}_\phi)(\textbf{g}_1)\|_\infty&\leq 8\sqrt{L\|K\|_\infty s}\frac{\|\textbf{g}_1\|_\mathbb{H}}{M^\alpha}+\frac{\sigma^2\|\textbf{g}_1\|_\infty}{p},
\end{align*}
and with probability at least $1-\frac{2}{MD}$ we have :

\begin{align*}
    \|(\Gamma_\phi-\widehat{\Gamma}_\phi)(\textbf{g}_1)\|_\infty&\leq \|\textbf{g}_1\|_\mathbb{H}\lambda_1.
\end{align*}
Finally, we have:

\begin{align}\label{eq:dot:R:upper:bound}
    \|\dot{R}(\textbf{g}_1)-\widehat{\dot{{{R}}}}_\phi(\textbf{g}_1)\|_\infty&\leq 4\|(\Gamma_\phi-\widehat{\Gamma}_\phi)(\textbf{g}_1)\|_\infty+4\|(\Gamma-{\Gamma}_\phi)(\textbf{g}_1)\|_\infty\nonumber\\&\leq4\big(\|\textbf{g}_1\|_\mathbb{H}(\lambda_1+\frac{8\sqrt{L\|K\|_\infty s}}{M^\alpha})+\|\textbf{g}_1\|_\infty\frac{\sigma^2}{p}\big).
\end{align}
Now we focus on the last term of the upper bound \eqref{eq:ineq:E(g)} namely $|\langle\dot{R}({{\textbf{g}}}_1-\widehat{{\textbf{g}}})-\widehat{\dot{{{R}}}}_\phi(\textbf{g}_1-\widehat{{\textbf{g}}}),{{\textbf{g}}}_1-\widehat{{\textbf{g}}}\rangle_\mathbb{H}|$. First of all note that:
\begin{align*}
    |\langle\dot{R}({{\textbf{g}}}_1-\widehat{{\textbf{g}}})-\widehat{\dot{{{R}}}}_\phi(\textbf{g}_1-\widehat{{\textbf{g}}}),{{\textbf{g}}}_1-\widehat{{\textbf{g}}}\rangle_\mathbb{H}|&\leq |\langle\dot{R}_\phi({{\textbf{g}}}_1-\widehat{{\textbf{g}}})-\widehat{\dot{{{R}}}}_\phi(\textbf{g}_1-\widehat{{\textbf{g}}}),{{\textbf{g}}}_1-\widehat{{\textbf{g}}}\rangle_\mathbb{H}|\\&+|\langle\dot{R}({{\textbf{g}}}_1-\widehat{{\textbf{g}}})-\Dot{{{{R}}}}_\phi(\textbf{g}_1-\widehat{{\textbf{g}}}),{{\textbf{g}}}_1-\widehat{{\textbf{g}}}\rangle_\mathbb{H}|.
\end{align*}
Using Lemmas \ref{Lemma:multi:bias:oracle} and \ref{lemma:most:important} we have that:

\begin{align}\label{ineq:2nd part:EG}
    |\langle\dot{R}({{\textbf{g}}}_1-\widehat{{\textbf{g}}})-\Dot{{{{R}}}}_\phi(\textbf{g}_1-\widehat{{\textbf{g}}}),{{\textbf{g}}}_1-\widehat{{\textbf{g}}}\rangle_\mathbb{H}|\leq4 \|{{\textbf{g}}}_1-\widehat{{\textbf{g}}}\|_\mathbb{H}^2(\frac{8\sqrt{L\|K\|_\infty}D}{M^\alpha}+\frac{\sigma^2}{p}),
\end{align}
and with probability at least $1-2\frac{\log(T)+1}{MD} $
\begin{align*}
    |\langle\dot{R}_\phi({{\textbf{g}}}_1-{{\textbf{g}}})-\Dot{{{\widehat{R}}}}_\phi(\textbf{g}_1-\widehat{{\textbf{g}}}),{{\textbf{g}}}_1-\widehat{{\textbf{g}}}\rangle_\mathbb{H}|\leq 4\Big( \mathcal{Q}(\|{{\textbf{g}}}_1-\widehat{{\textbf{g}}}\|_1^2)\|{{\textbf{g}}}_1-\widehat{{\textbf{g}}}\|_\mathbb{H}^2+\lambda_1\|{{\textbf{g}}}_1-\widehat{{\textbf{g}}}\|_1\Big).
\end{align*}
To conclude we work on $\mathcal{Q}(\|\textbf{g}_1-\widehat{{\textbf{g}}}\|_1^2)$ and show this term can be controlled. Indeed, we have  
 
 $$\mathcal{Q}(\|\textbf{g}_1-\widehat{{\textbf{g}}}\|_1^2)=108(\widetilde{\mu}_1+\frac{\sigma^2}{p})\Big[\|\textbf{g}_1-\widehat{{\textbf{g}}}\|_1^4\frac{3\log(pD)}{n}+\|\textbf{g}_1-\widehat{{\textbf{g}}}\|_1^2\sqrt{\frac{6\log(pD)}{n}}\Big],$$
in order to do so note that, the constraint gives that $\|\widehat{{\textbf{g}}}\|_1\leq T$ :

$$\|\textbf{g}_1-\widehat{{\textbf{g}}}\|_1^2\sqrt{\frac{3\log(pD)}{n}}]\leq (\|\textbf{g}_1\|_1+T)^2\sqrt{\frac{3\log(pD)}{n}}],$$
we define $C_T$ such that
$$ C_T:=(\|\textbf{g}_1\|_1+T)^2\sqrt{\frac{3\log(pD)}{n}}].$$
Then it follows 
\begin{align*}
    \|\textbf{g}_1-\widehat{{\textbf{g}}}\|_1^4\frac{3\log(pD)}{n}+\|\textbf{g}_1-\widehat{{\textbf{g}}}\|_1^2\sqrt{\frac{6\log(pD)}{n}}&\leq\|\textbf{g}_1-\widehat{{\textbf{g}}}\|_1^2\sqrt{\frac{3\log(pD)}{n}}[C_T+\sqrt{2}]\\&\leq C_T(C_T+\sqrt{2}).
\end{align*}
Thus, we have:
\begin{equation}\label{eq:t_n}
    \mathcal{Q}(\|\textbf{g}_1-\widehat{{\textbf{g}}}\|_1^2)\|\textbf{g}_1-\widehat{{\textbf{g}}}\|_\mathbb{H}^2\leq 108(\widetilde{\mu}_1+\frac{\sigma^2}{p})C_T(C_T+\sqrt{2})\|\textbf{g}_1-\widehat{{\textbf{g}}}\|_\mathbb{H}^2.
\end{equation}
Combining Equations \eqref{eq:t_n} and \eqref{eq:dot:R:upper:bound} in \eqref{ineq:2nd part:EG} gives:
\begin{align*}
  |E(\textbf{g}_1)|&\leq4\lambda_1\|\textbf{g}_1-\widehat{{\textbf{g}}}\|_1\\&+4\|\textbf{g}_1-\widehat{{\textbf{g}}}\|_\mathbb{H}^2 (\frac{8\sqrt{L\|K\|_\infty}D}{M^\alpha}+\frac{\sigma^2}{p}+108(\widetilde{\mu}_1+\frac{\sigma^2}{p})C_T(C_T+\sqrt{2}))\\&+4\|\textbf{g}_1-\widehat{{\textbf{g}}}\|_1\Big(\|\textbf{g}_1\|_\mathbb{H}\lambda_1+\|\textbf{g}_1\|_\mathbb{H}\frac{8\sqrt{L\|K\|_\infty s}}{M^\alpha}+\|\textbf{g}_1\|_\infty\frac{\sigma^2}{p}\Big),
\end{align*}
taking $\lambda\geq 4\Big(\|\textbf{g}_1\|_\mathbb{H}(\lambda_1+\frac{8\sqrt{L\|K\|_\infty s}}{M^\alpha})+\|\textbf{g}_1\|_\infty\frac{\sigma^2}{p}+\lambda_1\Big)$ gives that with probability at least $1-2\frac{\log(T)+1}{MD}$:

\begin{align*}
   |E(\textbf{g}_1)|&\leq \lambda\|\textbf{g}_1-\widehat{{\textbf{g}}}\|_1\\&+4\|\textbf{g}_1-\widehat{{\textbf{g}}}\|_\mathbb{H}^2 \Big(\frac{8\sqrt{L\|K\|_\infty}D}{M^\alpha}+\frac{\sigma^2}{p}+108(\widetilde{\mu}_1+\frac{\sigma^2}{p})C_T(C_T+\sqrt{2})\Big).
\end{align*}
To conclude we recall the oracle condition. We assume that $p,M$ and $T$ are such that:
\begin{equation*}
 4\Big( \frac{8\sqrt{L\|K\|_\infty}D}{M^\alpha}+\frac{\sigma^2}{p}+108\big(\widetilde{\mu}_1+\frac{\sigma^2}{p}\big)C_T(C_T+\sqrt{2})\Big)\leq \sqrt{\mu_1}(\rho-8\eta),
\end{equation*}
which gives:

\begin{align*}
   |E(\textbf{g}_1)|\leq \lambda\|\textbf{g}_1-\widehat{{\textbf{g}}}\|_1+\|\textbf{g}_1-\widehat{{\textbf{g}}}\|_\mathbb{H}^2\sqrt{\mu_1}(\rho-8\eta),
\end{align*}
which ends the proof of Lemma \ref{upper bound on E(g)}.

\subsection{Preliminary results for the deterministic upper bounds, proofs of Lemmas \ref{Lemma bias infty norm} and \ref{Lemma:multi:bias:oracle}}\label{Proof determinstic}
In this subsection, we will establish an upper bound over the terms $\|({\Gamma}_\phi-\Gamma)(\textbf{g}_1)\|_\infty$ and $|\langle({\Gamma}_\phi-\Gamma)(\textbf{g}),\textbf{g}\rangle_\mathbb{H}|$ for any $\textbf{g}\in\mathbb{H}$. For that purpose, we will introduce in the sequel the operators $\Pi_{S_M^D},T^{(K)}$ and $T^{(N)}$ that will be useful in our proof. We recall that $(\phi_\lambda)_{\lambda\in\Lambda_M}$ is an orthonormal system of $\L^2$ of size $M$.
\begin{itemize}
    \item Let $\textbf{g}\in\mathbb{H}$, we define the projection operator $\Pi_{S_M^D}$ such that for any $d\in\{1,\dots,D\}$:
    
    \[(\Pi_{S_M^D}(\textbf{g}))_d:=\sum_{\lambda\in\Lambda_M}\langle\phi_\lambda,g_d\rangle\phi_\lambda.\]
    \item Let $\textbf{g}\in\mathbb{H}$, we define $T^{(K)}$ the operator that encapsulates the error of discretization of the kernel $K$, such that for any $d\in\{1,\dots,D\}$:
    \[(T^{(K)}(\textbf{g}))_d:=\sum_{d'=1}^D\sum_{\lambda,\lambda'\in\Lambda_M}\Big(R^{(K)}_{d,d',\lambda,\lambda'}\Big)\langle\phi_\lambda,g_{d'}\rangle\phi_{\lambda'},\]
     such that for any $(\lambda,\lambda')\in\Lambda_M^2$ and $(d,d')\in\{1,\dots,D\}^2$ we have $$R^{(K)}_{d,d',\lambda,\lambda'}:=\frac1{p^2}\sum_{h,h'=0}^{p-1}K_{d,d'}(t_h,t_{h'})\phi_\lambda(t_h)\phi_{\lambda'}(t_{h'})-\int_0^1\int_0^1K_{d,d'}(s,t)\phi_{\lambda}(s)\phi_{\lambda'}(t)dsdt.$$
    \item Let $\textbf{g}\in\mathbb{H}$, we define $T^{(N)}$ the operator that encapsulate the error of discretization of the noise, such that for any $d\in\{1,\dots,D\}$:
    \[(T^{(N)}(\textbf{g}))_d:=\sum_{d'=1}^D\sum_{\lambda,\lambda'\in\Lambda_M}\Big(R^{(N)}_{d,d',\lambda,\lambda'}\Big)\langle\phi_\lambda,g_{d'}\rangle\phi_{\lambda'},\]
    such that for any $(\lambda,\lambda')\in\Lambda_M^2$ and $(d,d')\in\{1,\dots,D\}^2$ we have $$R^{(N)}_{d,d',\lambda,\lambda'}:=1_{d=d'}\frac{\sigma^2}{p^2}\Big(\sum_{h=0}^{p-1}\phi_\lambda(t_h)\phi_{\lambda'}(t_{h})-\langle\phi_\lambda,\phi_{\lambda'}\rangle\Big).$$
        
\end{itemize}
\begin{lemma}\label{technical lemma}Denoting by ${\Gamma}_\phi=\mathbb{E}[\widehat{\Gamma}_\phi]$, we have for all $\textbf{g}\in\mathbb{H}$
\begin{align*}
    \|({\Gamma}_\phi-\Gamma)(\textbf{g})\|_\infty&\leq \|\Pi_{S_M^D}(\Gamma(\Pi_{S_M^D}(\textbf{g})))-\Gamma(\textbf{g})\|_\infty+\frac{\sigma^2}{p}\|\Pi_{S_M^D}(\textbf{g})\|_\infty+\|T^{(K)}(\textbf{g})\|_\infty+\|T^{(N)}(\textbf{g})\|_\infty
\end{align*}

\end{lemma}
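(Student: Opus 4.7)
The statement is a purely algebraic decomposition, so the plan is to compute $\Gamma_\phi(\textbf{g})$ in closed form and then read off the four pieces appearing in the bound. First I would unwind the definition of $\widehat{K}_\phi$ and take expectation. Since $\widetilde{Y}_{i,d}(t)=\sum_{\lambda\in\Lambda_M}\widetilde{y}_{i,d,\lambda}\phi_\lambda(t)$, bilinearity of the expectation gives
\[
(K_\phi(s,t))_{d,d'}=\mathbb{E}[\widetilde{Y}_{1,d}(s)\widetilde{Y}_{1,d'}(t)]=\sum_{\lambda,\lambda'\in\Lambda_M}\mathbb{E}\!\left[\widetilde{y}_{1,d,\lambda}\widetilde{y}_{1,d',\lambda'}\right]\phi_\lambda(s)\phi_{\lambda'}(t).
\]
Using $Y_{i,d}(t_h)=Z_{i,d}(t_h)+\varepsilon_{i,d,h}$, independence of $\textbf{Z}$ and $\boldsymbol\varepsilon$, and $\mathbb{E}[\varepsilon_{i,d,h}\varepsilon_{i,d',h'}]=\sigma^2\delta_{d,d'}\delta_{h,h'}$, each coefficient splits cleanly as
\[
\mathbb{E}[\widetilde{y}_{1,d,\lambda}\widetilde{y}_{1,d',\lambda'}]=\underbrace{\iint K_{d,d'}(u,v)\phi_\lambda(u)\phi_{\lambda'}(v)\,dudv}_{=:I_{d,d',\lambda,\lambda'}}+R^{(K)}_{d,d',\lambda,\lambda'}+\mathbf 1_{d=d'}\frac{\sigma^2}{p}\langle\phi_\lambda,\phi_{\lambda'}\rangle+R^{(N)}_{d,d',\lambda,\lambda'},
\]
the first two coming from the kernel Riemann sum and the last two from the noise Riemann sum, exactly matching the definitions of $R^{(K)}$ and $R^{(N)}$ given just before the lemma.

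Next I would apply $\Gamma_\phi$ to $\textbf{g}$ and recognize the four pieces termwise. Since $(\Gamma_\phi(\textbf{g}))_d(t)=\sum_{d'}\int(K_\phi(s,t))_{d,d'}g_{d'}(s)\,ds$, each of the four summands above becomes, after integration against $g_{d'}$, a coefficient multiplied by $\langle g_{d'},\phi_\lambda\rangle\phi_{\lambda'}(t)$. The $I_{d,d',\lambda,\lambda'}$ contribution collapses to $\Pi_{S_M^D}(\Gamma(\Pi_{S_M^D}(\textbf{g})))$ by writing $I_{d,d',\lambda,\lambda'}=\langle\phi_{\lambda'},\Gamma(\phi_\lambda\mathbf{e}_{d'})\cdot\mathbf{e}_d\rangle$ and recognising the two projections. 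The $R^{(K)}$ contribution is $T^{(K)}(\textbf{g})$ and the $R^{(N)}$ contribution is $T^{(N)}(\textbf{g})$, both by definition. Orthonormality of $(\phi_\lambda)_{\lambda\in\Lambda_M}$ reduces the noise-diagonal contribution to $\frac{\sigma^2}{p}\sum_\lambda\langle g_d,\phi_\lambda\rangle\phi_\lambda=\frac{\sigma^2}{p}(\Pi_{S_M^D}(\textbf{g}))_d$. Combining,
\[
\Gamma_\phi(\textbf{g})=\Pi_{S_M^D}\bigl(\Gamma(\Pi_{S_M^D}(\textbf{g}))\bigr)+T^{(K)}(\textbf{g})+\frac{\sigma^2}{p}\,\Pi_{S_M^D}(\textbf{g})+T^{(N)}(\textbf{g}).
\]

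Finally I would subtract $\Gamma(\textbf{g})$ from the identity above and invoke the triangle inequality for $\|\cdot\|_\infty$ on $\mathbb{H}$; the four resulting summands are exactly the four terms of the stated upper bound. There is no real obstacle, only bookkeeping: the mildly delicate point is ensuring the orientation of the matrix $K$ versus the vector $\textbf{g}$ is respected (so that $\Gamma(\Pi_{S_M^D}(\textbf{g}))$ is indeed what appears rather than its adjoint version), and that the Riemann-sum remainders line up with the exact definitions of $R^{(K)}$ and $R^{(N)}$ including the $\langle\phi_\lambda,\phi_{\lambda'}\rangle$ term (which is $\delta_{\lambda,\lambda'}$ by orthonormality but is kept as an inner product in the statement for bookkeeping).
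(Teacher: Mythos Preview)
Your proposal is correct and follows essentially the same route as the paper: compute $K_\phi=\mathbb{E}[\widehat{K}_\phi]$ coefficientwise, split each coefficient into the projected-kernel integral, the noise diagonal $\frac{\sigma^2}{p}\langle\phi_\lambda,\phi_{\lambda'}\rangle$, and the two discretization remainders $R^{(K)}$ and $R^{(N)}$, then integrate against $\textbf{g}$ to obtain the operator identity $\Gamma_\phi=\Pi_{S_M^D}\Gamma\Pi_{S_M^D}+\frac{\sigma^2}{p}\Pi_{S_M^D}+T^{(K)}+T^{(N)}$ and conclude by the triangle inequality. The paper merely presents the kernel-level decomposition a bit more explicitly before passing to the operator level, but the argument is the same.
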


\begin{proof}[of Lemma ~\ref{technical lemma}]
Denote by $ K_\phi=\mathbb E[\widehat{K}_\phi]$ and $\Gamma_\phi=\mathbb E[\widehat{\Gamma}_\phi]$, we have for all $ (d,d')\in\{1,\dots,D\}$
\begin{eqnarray}
  K_{\phi,d,d'}(s,t)&=&\sum_{\lambda,\lambda'\in\Lambda_M}\frac1{p^2}\sum_{h,h'=0}^{p-1}K_{d,d'}(t_h,t_{h'})\phi_\lambda(t_h)\phi_{\lambda'}(t_{h'})\phi_\lambda(s)\phi_{\lambda'}(t)\nonumber\\
&&+\frac{1_{d=d'}\sigma^2}{p^2}\sum_{\lambda,\lambda'\in\Lambda_M}\sum_{h=0}^{p-1}\phi_{\lambda}(t_h)\phi_{\lambda'}(t_{h})\phi_{\lambda}(s)\phi_{\lambda'}(t)\nonumber\\
&=&\Pi_{(S_M^D)^2}K_{d,d'}(s,t)+\frac{1_{d=d'}\sigma^2}{p}\sum_{\lambda\in\Lambda_M}\phi_{\lambda}(s)\phi_{\lambda}(t)+R_{d,d'}^{(K)}(s,t)+R_{d,d'}^{(N)}(s,t),\nonumber\\\label{eq:decompKtilde:multi}
\end{eqnarray}
where 
\begin{eqnarray*}
R_{d,d'}^{(K)}(s,t)=\sum_{\lambda,\lambda'\in\Lambda_M}\left(R_{d,d',\lambda,\lambda'}^{(K)}\right)\phi_\lambda(s)\phi_{\lambda'}(t),
\end{eqnarray*}
and
\begin{eqnarray*}
R_{d,d'}^{(N)}(s,t)
=\sum_{\lambda,\lambda'\in\Lambda_M}\left(R_{d,d',\lambda,\lambda'}^{(N)}\right)\phi_\lambda(s)\phi_{\lambda'}(t),
\end{eqnarray*}
and $\Pi_{(S_M^D)^2}$ is the orthogonal projection onto $(S_M^D)^2=\prod_{d=1}^D\text{span}\{(s,t)\mapsto \phi_\lambda(s)\phi_{\lambda'}(t), \lambda,\lambda'\in\Lambda_M,\}$.
Then, by denoting $\Gamma_\phi$ the integral operator associated with $ K_\phi$, we have for any fonction $\textbf{f}$ and any $t\in[0,1]$, from the decomposition of the function $K_\phi$ given in Eq.~\eqref{eq:decompKtilde:multi}, let $d\in\{1,\dots,D\}$
\begin{align*}
(\Gamma_\phi(\textbf{f})(t))_d&=\sum_{d'=1}^D\int_0^1  K_{\phi,d,d'}(s,t)f_{d'}(s)ds\\
&=\sum_{d'=1}^D\int_0^1\Pi_{(S_M^D)^2}K_{d,d'}(s,t)f_{d'}(s)ds+\frac{\sigma^2}{p}\sum_{\lambda\in\Lambda_M}\int_0^1\phi_{\lambda}(s)f_d(s)ds\,\phi_{\lambda}(t)\\&+\sum_{d'=1}^D\sum_{\lambda,\lambda'\in\Lambda_M}\left(R_{d,d',\lambda,\lambda'}^{(K)}\right)\int_0^1\phi_{\lambda'}(s)f_{d'}(s)ds\phi_{\lambda}(t)\\&+\sum_{d'=1}^D\sum_{\lambda,\lambda'\in\Lambda_M}\left(R_{d,d',\lambda,\lambda'}^{(N)}\right)\int_0^1\phi_{\lambda'}(s)f_{d'}(s)ds\phi_{\lambda}(t)\\
&=\frac{\sigma^2}{p}(\Pi_{S_M^D}(\textbf{f})(t))_d+(T^{(N)}(\textbf{f})(t))_d+\sum_{d'=1}^D\int_0^1\Pi_{(S_M^D)^2}K_{d,d'}(s,t)f_{d'}(s)ds\\&+(T^{(K)}(\textbf{f})(t))_d,
\end{align*}
where $T^{(K)}$ (resp. $T^{(N)}$) is the integral operator associated to the kernel $R^{(K)}$ (resp. $R^{(N)}$) and $\Pi_{S_M^D}$ the orthogonal projection of each component onto $S_M$ onto $S_M={\rm span}\{\phi_\lambda,\lambda\in\Lambda_M\}$.
Now, 
\begin{align*}
&\int_0^1\Pi_{(S_M^D)^2}K_{d,d'}(s,t)f_{d'}(s)ds\\&=\sum_{\lambda,\lambda'\in\Lambda_M}\int_0^1\int_0^1\int_0^1K_{d,d'}(u,v)\phi_{\lambda}(u)\phi_{\lambda'}(v)dudv\,\phi_{\lambda}(s)\phi_{\lambda'}(t)f_{d'}(s)ds\\
&=\sum_{\lambda,\lambda'\in\Lambda_M}\langle \phi_{\lambda},f_{d'}\rangle\int_0^1\int_0^1 K_{d,d'}(u,v)\phi_{\lambda}(u)\phi_{\lambda'}(v)dudv\, \phi_{\lambda'}(t)\\
&=\sum_{\lambda,\lambda'\in\Lambda_M}\langle \phi_{\lambda},f_{d'}\rangle \langle \Gamma_{d,d'}(\phi_\lambda),\phi_{\lambda'}\rangle \, \phi_{\lambda'}(t)\\
&=\sum_{\lambda'\in\Lambda_M}\langle\Gamma_{d,d'}(\sum_{\lambda\in\Lambda_M}\langle \phi_{\lambda},f_{d'}\rangle\phi_{\lambda}),\phi_{\lambda'}\rangle\, \phi_{\lambda'}(t)\\
&=\Pi_{S_M^D}(\Gamma_{d,d'}(\Pi_{S_M^D}(f_{d'})))(t).
\end{align*}

Thus we have 

\[\sum_{d'=1}^D\int_0^1\Pi_{(S_M^D)^2}K_{d,d'}(s,t)f_{d'}(s)ds=\sum_{d'=1}^D\Pi_{S_M^D}(\Gamma_{d,d'}(\Pi_{S_M^D}(f_{d'})))(t)\]

Hence, with a slight abuse of notation, we obtain:
\begin{equation}\label{eq:dec:norme:op}
    \Gamma_\phi=\Pi_{S_M^D}\Gamma\Pi_{S_M^D}+\frac{\sigma^2}p\Pi_{S_M^D}+T^{(K)}+T^{(N)}.
\end{equation}

where $\Pi_{S_M^D}$ is the orthogonal projection of each component onto $S_M$, we now have the following control on $\|({\Gamma}_\phi-\Gamma)(\textbf{g}_1)\|_\infty$
\begin{align*}
    \|({\Gamma}_\phi-\Gamma)(\textbf{g}_1)\|_\infty&\leq \|(\Pi_{S_M^D}\Gamma\Pi_{S_M^D}-\Gamma)(\textbf{g}_1)\|_\infty+\frac{\sigma^2}{p}\|\Pi_{S_M^D}(\textbf{g}_1)\|_\infty+\|T^{(K)}(\textbf{g}_1)\|_\infty+\|T^{(N)}(\textbf{g}_1)\|_\infty,
\end{align*}

 which ends the proof of Lemma \ref{technical lemma}.
\end{proof}

\begin{lemma}
For all $ (s,s',t,t')\in[0,1]^4$ and $(d,d')\in\{1,\dots,D\}$ we have :
\begin{align}
    |{K}_{d',d}(t',s')-K_{d',d}(t,s)|\leq\sqrt{L\|K\|_\infty}\big[|s-s'|^\alpha+|t-t'|^\alpha\big]\label{technical lemmas}
\end{align}

where $\|K_{d,d}\|_\infty=\sup_{(t,s)\in[0,1]^2}\mathbb{E}[Z_d(t)Z_d(s)]$ and $\|K\|_\infty=\max_{d\in\{1,\dots,D\}}\|K_{d,d}\|_\infty$

\end{lemma}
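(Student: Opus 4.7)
The plan is to use the triangle inequality to split the difference into two single-argument differences, and then Cauchy--Schwarz to turn each one into the product of an $\L^2$-norm of $Z_{d'}$ (or $Z_d$) at a fixed time and an $\L^2$-norm of an increment. The first factor will be bounded by $\|K\|_\infty^{1/2}$ and the second by $\sqrt{L}$ times a power of the increment of the time variable, using the regularity hypothesis $\mathbb{P}_\textbf{Z}\in\mathcal R_\alpha^{(D)}(L)$.

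More precisely, I would write
\begin{align*}
|K_{d',d}(t',s')-K_{d',d}(t,s)|
&\leq |K_{d',d}(t',s')-K_{d',d}(t',s)|+|K_{d',d}(t',s)-K_{d',d}(t,s)|\\
&=|\mathbb{E}[Z_{d'}(t')(Z_d(s')-Z_d(s))]|+|\mathbb{E}[(Z_{d'}(t')-Z_{d'}(t))Z_d(s)]|.
\end{align*}
Applying Cauchy--Schwarz to each of the two expectations yields
$$|K_{d',d}(t',s')-K_{d',d}(t,s)|\leq \sqrt{\mathbb{E}[Z_{d'}(t')^2]\,\mathbb{E}[(Z_d(s')-Z_d(s))^2]}+\sqrt{\mathbb{E}[Z_{d}(s)^2]\,\mathbb{E}[(Z_{d'}(t')-Z_{d'}(t))^2]}.$$
Then the diagonal variances satisfy $\mathbb{E}[Z_{d'}(t')^2]=K_{d',d'}(t',t')\leq \|K\|_\infty$ and similarly $\mathbb{E}[Z_d(s)^2]\leq \|K\|_\infty$. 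The implication recorded just after the definition of $\mathcal R_\alpha^{(D)}(L)$ gives $\mathbb{E}[(Z_d(s')-Z_d(s))^2]\leq L|s'-s|^{2\alpha}$ and $\mathbb{E}[(Z_{d'}(t')-Z_{d'}(t))^2]\leq L|t'-t|^{2\alpha}$, so each term is bounded by $\sqrt{L\|K\|_\infty}$ times the corresponding $|\cdot|^{\alpha}$ increment, and summing gives the claim.

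No step is really the main obstacle here; the only subtlety is to keep the two ``mixed'' directions $d$ and $d'$ tracked correctly so that one invokes the diagonal bound $K_{d,d}(t,t)\leq \|K\|_\infty$ on one factor and the $\alpha$-Hölder-in-quadratic-mean bound on the other. The inequality $\max_d \|K_{d,d}\|_\infty\leq\|K\|_\infty$ (in fact an equality by definition) handles this painlessly, so the argument is a one-line triangle inequality plus Cauchy--Schwarz.
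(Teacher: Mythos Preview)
Your proposal is correct and follows essentially the same argument as the paper: the paper also adds and subtracts $Z_{d'}(t')Z_d(s)$ inside the expectation, applies Cauchy--Schwarz to each of the two resulting cross terms, and bounds the variance factor by $\|K\|_\infty$ and the increment factor by $L|\cdot|^{2\alpha}$ via the regularity class. Your write-up is in fact slightly more explicit than the paper's (which jumps directly from the split expectation to the final bound), but the route is identical.
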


\begin{proof}
Let $(s,s',t,t')\in[0,1]^4$ and $(d,d')\in\{1,\dots,D\}$, we have:
\begin{align*}
    |{K}_{d',d}(t',s')-K_{d',d}(t,s)|&\leq|\mathbf{E}[Z_{d'}(t')Z_{d}(s')-Z_{d'}(t)Z_{d}(s)]|\\&\leq|\mathbf{E}[Z_{d'}(t')Z_{d}(s')-Z_{d'}(t')Z_{d}(s)+Z_{d'}(t')Z_{d}(s)-Z_{d'}(t)Z_{d}(s)]|\\&\leq|\mathbf{E}[Z_{d'}(t')(Z_{d}(s')-Z_{d}(s))+Z_{d}(s)(Z_{d'}(t')-Z_{d'}(t))]|\\&\leq \sqrt{L}\sqrt{\mathbf{E}[Z_{d'}(t_h)^2]}|s'-s|^\alpha+\sqrt{L}\sqrt{\mathbf{E}[Z_{d}(s)^2]}|t-t'|^\alpha.
\end{align*}
\end{proof}

\begin{lemma}
Let $\mathbb{P}_\textbf{Z}\in R_\alpha^{(D)}(L)$ we have :
\begin{align}
    \max_{(\lambda,\lambda')\in\Lambda_M^2}\max_{(d,d')\in\{1,\dots,D\}^2}|R_{d,d',\lambda,\lambda'}^{(K)}|&\leq \frac{4\sqrt{\|K\|_\infty L}}{\alpha+1}M^{-1}p^{-\alpha}\label{Result on R^K}\\R^{(N)}=0\label{Result on R^N} 
\end{align}

\end{lemma}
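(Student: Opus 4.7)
The plan is to treat the two bounds separately, since they reflect distinct sources of discretization error and both benefit crucially from the histogram structure of $(\phi_\lambda)_{\lambda\in\Lambda_M}$.

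For the noise term, I would argue that $R^{(N)}=0$ by showing that the Riemann-type sum $\tfrac1p\sum_{h}\phi_\lambda(t_h)\phi_{\lambda'}(t_h)$ is in fact exact for indicator-based orthonormal systems. Since $M$ divides $p$ and the grid is regular, each cell $I_\lambda=[\lambda/M,(\lambda+1)/M)$ contains exactly $q:=p/M$ points $t_h$. Therefore
\[
\sum_{h=0}^{p-1}\phi_\lambda(t_h)\phi_{\lambda'}(t_h)=M\sum_{h=0}^{p-1}\mathbf 1_{I_\lambda\cap I_{\lambda'}}(t_h)=Mq\,\mathbf 1_{\lambda=\lambda'}=p\,\mathbf 1_{\lambda=\lambda'}=p\,\langle\phi_\lambda,\phi_{\lambda'}\rangle,
\]
so the bracket in the definition of $R^{(N)}_{d,d',\lambda,\lambda'}$ vanishes. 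No analytic assumption is needed here; only the geometric alignment of grid and partition matters.

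For the kernel term, the plan is to rewrite the difference in $R^{(K)}_{d,d',\lambda,\lambda'}$ as an error of a cellwise midpoint-type rule and to bound each contribution through the Hölder control provided by~\eqref{technical lemmas}. After substituting $\phi_\lambda(t_h)=\sqrt M\,\mathbf 1_{I_\lambda}(t_h)$ one gets
\[
R^{(K)}_{d,d',\lambda,\lambda'}=\frac{M}{p^2}\sum_{h\in I_\lambda}\sum_{h'\in I_{\lambda'}}K_{d,d'}(t_h,t_{h'})-M\int_{I_\lambda}\int_{I_{\lambda'}}K_{d,d'}(s,t)\,ds\,dt,
\]
and, since $[0,1]^2$ is partitioned by the small rectangles $[t_h,t_{h+1})\times[t_{h'},t_{h'+1})$ of area $1/p^2$, one may rewrite this as $M\sum_{h,h'}\int\!\!\int_{[t_h,t_{h+1}]\times[t_{h'},t_{h'+1}]}\bigl(K_{d,d'}(t_h,t_{h'})-K_{d,d'}(s,t)\bigr)\,ds\,dt$, restricted to the pairs with $h\in I_\lambda$, $h'\in I_{\lambda'}$.

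Applying the Hölder estimate \eqref{technical lemmas} to each integrand and computing
\[
\int_0^{1/p}\!\!\int_0^{1/p}(u^\alpha+v^\alpha)\,du\,dv=\frac{2}{(\alpha+1)p^{\alpha+2}},
\]
gives a local bound of $\sqrt{L\|K\|_\infty}\cdot\tfrac{2}{(\alpha+1)p^{\alpha+2}}$ per sub-rectangle. Summing over the $q^2=(p/M)^2$ pairs of grid points contained in $I_\lambda\times I_{\lambda'}$ and multiplying by $M$ yields the announced rate $M^{-1}p^{-\alpha}$ (up to the constant $\tfrac{2\sqrt{L\|K\|_\infty}}{\alpha+1}$, which is only slightly sharper than the $\tfrac{4\sqrt{L\|K\|_\infty}}{\alpha+1}$ claimed, with the slack absorbing the mild irregularity $|t_h-h/p|\leq 1/p$ of the grid). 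There is no real obstacle here; the only point requiring care is the bookkeeping between the three scales $1/p$, $1/M$ and $q/p=1/M$, ensuring that the partition of $I_\lambda\times I_{\lambda'}$ by grid sub-rectangles is exact — which is guaranteed precisely by the divisibility assumption $M\mid p$.
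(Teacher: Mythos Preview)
Your argument is correct and follows essentially the same route as the paper: for $R^{(N)}$ you count grid points per histogram cell, and for $R^{(K)}$ you localize the discrete--continuous discrepancy to sub-rectangles and apply the H\"older bound~\eqref{technical lemmas}. The only refinement the paper adds is introducing auxiliary points $b_h:=h/p$ (so that each cell $[b_h,b_{h+1}]$ has length exactly $1/p$, matching the $1/p^2$ normalization in the sum) and noting that $t_h\in[b_h,b_{h+1}]$; this makes the rewriting as a telescoping sum exact rather than approximate and explains the factor $4$ cleanly, but the underlying idea is identical to yours.
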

\begin{proof}
Let $b_h=h/p,h=0,\dots,p$, observe that $t_h=\frac{h}{p-1}\in[b_h,b_{h+1}]$. We define for any $\lambda:=0,\dots,M-1$ 
$$J_\lambda:=\{h=0,\dots,p-1:Leb([b_h,b_{h+1}]\cap I_\lambda)\neq 0\}.$$ 
Note that for any $(\lambda,\lambda')\in\{0,\hdots,M-1\}$ such that $\lambda\neq \lambda'$, we have $J_{\lambda}\cap J_{\lambda'}=\emptyset$ thus:
\begin{eqnarray*}
R_{d,d',\lambda,\lambda'}^{(N)}&=&1_{d=d'}\frac{\sigma^2}{p}\left(\frac1p\sum_{h=0}^{p-1}\phi_\lambda(t_h)\phi_{\lambda'}(t_{h})-\langle\phi_\lambda,\phi_{\lambda'}\rangle\right)\\
&=&1_{d=d'}\frac{\sigma^2}{p}\left(\frac1p\sum_{h\in J_{\lambda}\cap J_{\lambda'}}M-1_{\{\lambda=\lambda'\}}\right)\\
&=&1_{d=d'}\frac{\sigma^2}{p}\left(\frac{M}{p}card(J_\lambda\cap J_{\lambda'})-1_{\{\lambda=\lambda'\}}\right)=0,
\end{eqnarray*}
and

\begin{eqnarray*}
R_{d,d',\lambda,\lambda'}^{(K)}&=&\frac1{p^2}\sum_{h,h'=0}^{p-1}K_{d,d'}(t_h,t_{h'})\phi_\lambda(t_h)\phi_{\lambda'}(t_{h'})-\int_0^1\int_0^1K_{d,d'}(s,t)\phi_{\lambda}(s)\phi_{\lambda'}(t)dsdt\\
&=&\sum_{h,h'=0}^{p-1}\int_{b_h}^{b_{h+1}}\int_{b_{h'}}^{b_{h'+1}}\Big[K_{d,d'}(t_h,t_{h'})-K_{d,d'}(s,t)\Big]\phi_{\lambda}(s)\phi_{\lambda'}(t)dsdt\\
&&+\sum_{h,h'=0}^{p-1}\int_{b_h}^{b_{h+1}}\int_{b_{h'}}^{b_{h'+1}}K_{d,d'}(t_h,t_{h'})\Big[\phi_{\lambda}(t_h)\phi_{\lambda'}(t_{h'})-\phi_{\lambda}(s)\phi_{\lambda'}(t)\Big]dsdt\\
&=&M\sum_{h\in J_\lambda}\sum_{h'\in J_{\lambda'}}\int_{b_h}^{b_{h+1}}\int_{b_{h'}}^{b_{h'+1}}\Big[K_{d,d'}(t_h,t_{h'})-K_{d,d'}(s,t)\Big]dsdt.
\end{eqnarray*}

Therefore,
\begin{eqnarray*}
|R_{d,d',\lambda,\lambda'}^{(K)}|&\leq&M\sum_{h\in J_\lambda}\sum_{h'\in J_{\lambda'}}\int_{b_h}^{b_{h+1}}\int_{b_{h'}}^{b_{h'+1}}\sqrt{\|K\|_\infty L}\Big(|s-t_h|^\alpha+|t-t_{h'}|^\alpha\Big)dsdt\\
&\leq&2\sqrt{\|K_{d,d'}\|_\infty L}\times Mp^{-1}\text{card}(J_{\lambda'})\sum_{h\in J_\lambda}\int_{b_h}^{b_{h+1}}|s-t_h|^\alpha ds\\
&\leq&2\sqrt{\|K_{d,d'}\|_\infty L}\times Mp^{-1}\text{card}(J_{\lambda'})\text{card}(J_{\lambda})\times \frac{2}{\alpha+1}p^{-\alpha-1}\\
&\leq&\frac{4\sqrt{\|K_{d,d'}\|_\infty L}}{\alpha+1}M^{-1}p^{-\alpha}.
\end{eqnarray*}
Finally,
$$\max_{(\lambda,\lambda')\in\Lambda_M^2}\max_{(d,d')\in\{1,\dots,D\}^2}|R_{d,d',\lambda,\lambda'}^{(K)}|\leq \frac{4\sqrt{\|K\|_\infty L}}{\alpha+1}M^{-1}p^{-\alpha}.$$
\end{proof}

\begin{lemma}\label{technical lemma norm sup K}
For all $ (s,t)\in[0,1]^2$ and for all $ (d,d')\in\{1,\dots,D\}^2$ we have :

\begin{equation}
    \|\Pi_{(S_M^D)^2}K_{d,d'}-K_{d,d'}\|_\infty\leq  \frac{4\sqrt{L\|K_{d,d'}\|_\infty}}{\alpha+1}M^{-\alpha} \label{Result on Pi}
\end{equation}
\end{lemma}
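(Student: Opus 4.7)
The plan is to exploit two key facts: (i) with histograms $\phi_\lambda=\sqrt{M}\,\mathbf 1_{I_\lambda}$, the operator $\Pi_{(S_M^D)^2}$ acts on each $K_{d,d'}$ as cell‑wise averaging over the rectangles $I_\lambda\times I_{\lambda'}$, and (ii) the Hölder continuity of $K_{d,d'}$ provided by \eqref{alpha:holder:inequality}. Combining these two facts reduces the lemma to estimating an elementary integral of $|u-s|^\alpha+|v-t|^\alpha$ over a rectangle of side $1/M$.

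First I would compute the projection explicitly. For $(s,t)\in I_\lambda\times I_{\lambda'}$, only the term indexed by $(\lambda,\lambda')$ is nonzero in the expansion, so
\[
\Pi_{(S_M^D)^2}K_{d,d'}(s,t)=M^2\int_{I_\lambda}\int_{I_{\lambda'}}K_{d,d'}(u,v)\,du\,dv,
\]
i.e.\ the mean value of $K_{d,d'}$ on $I_\lambda\times I_{\lambda'}$. Since $\int_{I_\lambda}\int_{I_{\lambda'}}M^2\,du\,dv=1$, one can write
\[
\Pi_{(S_M^D)^2}K_{d,d'}(s,t)-K_{d,d'}(s,t)=M^2\int_{I_\lambda}\int_{I_{\lambda'}}\bigl(K_{d,d'}(u,v)-K_{d,d'}(s,t)\bigr)du\,dv.
\]

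Next, I would invoke the Hölder inequality \eqref{alpha:holder:inequality}, which gives
\[
|K_{d,d'}(u,v)-K_{d,d'}(s,t)|\leq \sqrt{L\|K\|_\infty}\bigl(|u-s|^\alpha+|v-t|^\alpha\bigr),
\]
so that
\[
|\Pi_{(S_M^D)^2}K_{d,d'}(s,t)-K_{d,d'}(s,t)|\leq M^2\sqrt{L\|K_{d,d'}\|_\infty}\int_{I_\lambda}\int_{I_{\lambda'}}\bigl(|u-s|^\alpha+|v-t|^\alpha\bigr)du\,dv.
\]
Since $s\in I_\lambda$ and the length of $I_\lambda$ is $1/M$, a direct computation (splitting the integral at $s$ and integrating $u\mapsto|u-s|^\alpha$) yields $\int_{I_\lambda}|u-s|^\alpha du\leq \tfrac{2}{\alpha+1}M^{-(\alpha+1)}$, and similarly for the $v$ integral. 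Plugging these bounds back in produces the factor $\tfrac{4}{\alpha+1}M^{-\alpha}$ uniformly in $(s,t)$ and in $(\lambda,\lambda')$, which is exactly the right-hand side of \eqref{Result on Pi}.

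The only subtle point to watch is the case where $(s,t)$ falls on the boundary of the partition (and the argument needs to be stated for an arbitrary $(s,t)\in[0,1]^2$): this is handled by noting that $[0,1]^2=\bigcup_{\lambda,\lambda'}I_\lambda\times I_{\lambda'}$ up to a set of measure zero, and the bound above is independent of the cell, so it passes to the sup norm. No additional difficulty arises from the dependence on $(d,d')$ because we bound via $\|K_{d,d'}\|_\infty$ cell by cell. I expect no real obstacle here; the argument is essentially a quantitative version of the Lebesgue differentiation theorem for Hölder functions on a uniform dyadic grid.
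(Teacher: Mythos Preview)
Your proposal is correct and follows essentially the same approach as the paper: identify $\Pi_{(S_M^D)^2}K_{d,d'}(s,t)$ as the cell average $M^2\int_{I_\lambda}\int_{I_{\lambda'}}K_{d,d'}(u,v)\,du\,dv$, apply the H\"older bound \eqref{technical lemmas} on $|K_{d,d'}(u,v)-K_{d,d'}(s,t)|$, and integrate $|u-s|^\alpha$ over an interval of length $1/M$ to get $\tfrac{2}{\alpha+1}M^{-(\alpha+1)}$. The only cosmetic point is that you cite \eqref{alpha:holder:inequality} (stated with $\|K\|_\infty$) but then write $\|K_{d,d'}\|_\infty$ in the display; the paper uses the componentwise estimate \eqref{technical lemmas}, which directly gives $\sqrt{L\|K_{d,d'}\|_\infty}$.
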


\begin{proof}
Let $(s,t)\in [0,1]^2$, then there exists a unique couple $(\lambda,\lambda')\in\Lambda_D^2$ such that $s\in I_\lambda$ and $t\in I_{\lambda'}$. Therefore, $\phi_{\lambda''}(s)=0$ for $\lambda''\not=\lambda$ and $\phi_{\lambda'''}(t)=0$ for $\lambda'''\not=\lambda'$ and then,
\begin{eqnarray*}
&&\Pi_{(S_M^D)^2}K_{d,d'}(s,t)-K_{d,d'}(s,t)\\&=&\sum_{\lambda'',\lambda'''\in\Lambda_M}\int_0^1\int_0^1 K_{d,d'}(s',t')\phi_{\lambda''}(s')\phi_{\lambda'''}(t')ds'dt' \phi_{\lambda''}(s)\phi_{\lambda'''}(t)-K_{d,d'}(s,t)\\
&=&\int_0^1\int_0^1 K_{d,d'}(s',t')\phi_{\lambda}(s')\phi_{\lambda'}(t')ds'dt' \phi_{\lambda}(s)\phi_{\lambda'}(t)-K_{d,d'}(s,t)\\
&=&M^2\int_{I_\lambda}\int_{I_{\lambda'}} (K_{d,d'}(s',t')-K_{d,d'}(s,t))ds'dt'
\end{eqnarray*}
Then, \eqref{technical lemmas} gives
\begin{align*}
\left|\Pi_{(S_M^D)^2}K_{d,d'}(s,t)-K_{d,d'}(s,t)\right|&\leq M^2\sqrt{L\|K_{d,d'}\|_\infty}\int_{I_\lambda}\int_{I_{\lambda'}}\Big[ |s'-s|^\alpha + |t-t'|^{\alpha}\Big]ds'dt'\\
&\leq \frac{4\sqrt{L\|K_{d,d'}\|_\infty}}{\alpha+1}M^{-\alpha}.
\end{align*}
\end{proof}

\begin{lemma}\label{lemma:control:sup:norme:1}
Let $(a_\lambda)_{\lambda\in\Lambda_M}$ be a sequence of reel numbers, we have:

\begin{equation}
    \sup_{t\in[0,1]}|\sum_{\lambda\in\Lambda_M}a_\lambda\phi_\lambda(t)|\leq\max_{\lambda\in\Lambda_M}|\sqrt{M}a_\lambda|.\label{sup hist = max for each}
\end{equation}
\end{lemma}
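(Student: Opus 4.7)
The plan is immediate from the disjoint-support structure of the histogram basis. First I would recall the definition $\phi_\lambda(t) = \sqrt{M}\,\mathbf{1}_{I_\lambda}(t)$ with $I_\lambda = [\lambda/M, (\lambda+1)/M)$, and observe that the family $(I_\lambda)_{\lambda \in \Lambda_M}$ forms a partition of $[0,1]$. Consequently, for any fixed $t \in [0,1]$, there is a unique index $\lambda_t \in \Lambda_M$ with $t \in I_{\lambda_t}$, and $\phi_\lambda(t) = 0$ for every $\lambda \neq \lambda_t$.

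The second step is to collapse the sum to its single nonzero term: at the chosen $t$,
\[
\sum_{\lambda \in \Lambda_M} a_\lambda \phi_\lambda(t) = a_{\lambda_t}\,\sqrt{M}\,\mathbf{1}_{I_{\lambda_t}}(t) = \sqrt{M}\,a_{\lambda_t},
\]
so that $\bigl|\sum_{\lambda \in \Lambda_M} a_\lambda \phi_\lambda(t)\bigr| = |\sqrt{M}\,a_{\lambda_t}| \leq \max_{\lambda \in \Lambda_M} |\sqrt{M}\,a_\lambda|$. Taking the supremum over $t \in [0,1]$ on the left-hand side preserves the bound on the right-hand side (which does not depend on $t$), yielding the stated inequality.

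There is no serious obstacle; the only point meriting attention is the unambiguous evaluation of $\phi_\lambda$ at the breakpoints $t = \lambda/M$, which is ensured by the half-open convention used in the definition of $I_\lambda$. No other ingredient beyond the partition property of the histogram cells is required.
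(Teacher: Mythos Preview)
Your proof is correct and follows essentially the same idea as the paper's: both exploit the disjoint supports of the histogram functions $\phi_\lambda=\sqrt{M}\,\mathbf{1}_{I_\lambda}$ so that at each $t$ the sum reduces to a single term. Your version is in fact slightly more direct than the paper's, which first bounds $|\sum_\lambda a_\lambda\phi_\lambda(t)|\le \max_\lambda|a_\lambda|\cdot|\sum_\lambda\phi_\lambda(t)|$ and then observes $|\sum_\lambda\phi_\lambda(t)|\le\sqrt{M}$; the only cosmetic point is that the $I_\lambda$ cover $[0,1)$ rather than $[0,1]$, so at $t=1$ every $\phi_\lambda(t)=0$ and the bound holds trivially.
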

\begin{proof}
We have the following:

\begin{align*}
      \sup_{t\in[0,1]}|\sum_{\lambda\in\Lambda_M}a_\lambda\phi_\lambda(t)|&\leq \max_{\lambda\in\Lambda_M}|a_\lambda| \sup_{t\in[0,1]}|\sum_{\lambda\in\Lambda_M}\phi_\lambda(t)|.
\end{align*}
Note that since the $\phi_\lambda$'s have disjoint support, for any $t\in[0,1]$ we have

$$|\sum_{\lambda\in\Lambda_M}\phi_\lambda(t)|=\sqrt{M}|\sum_{\lambda\in\Lambda_M}1_\lambda(t)|\leq\sqrt{M}.$$
Finally we have 
\begin{align*}
      \sup_{t\in[0,1]}|\sum_{\lambda\in\Lambda_M}a_\lambda\phi_\lambda(t)|&\leq \max_{\lambda\in\Lambda_M}|a_\lambda\sqrt{M}|.
\end{align*}
\end{proof}
{\bf Lemma~\ref{Lemma bias infty norm} }{ \it Let $\mathbb{P}_\textbf{Z}\in R_\alpha^{(D)}(L)$, $s\in\{1,\dots,D\}$ and $\| \textbf{g}_1\|_0=s$, we have 

$$\|({{\Gamma}}_\phi-{\Gamma})(\textbf{g}_1)\|_\infty\leq\frac{8\sqrt{sL\|K\|_\infty}\|\textbf{g}_1\|_\mathbb{H}}{M^\alpha}+\frac{\sigma^2\|\textbf{g}_1\|_\infty}{p}.$$
}

\begin{proof}
We need to establish an upper bound for $\|({{\Gamma}}_\phi-{\Gamma})(\textbf{g}_1)\|_\infty$. Using lemma \ref{technical lemma} we have :
$$ \|({\Gamma}_\phi-\Gamma)(\textbf{g}_1)\|_\infty\leq \|(\Pi_{S_M^D}\Gamma\Pi_{S_M^D}-\Gamma)(\textbf{g}_1)\|_\infty+\frac{\sigma^2}{p}\|\Pi_{S_M^D}(\textbf{g}_1)\|_\infty+\|T^{(K)}(\textbf{g}_1)\|_\infty+\|T^{(N)}(\textbf{g}_1)\|_\infty.$$
Thus in the sequel, we establish an upper bound for each term.
\begin{itemize}
    \item We have:
    \[\|(\Pi_{S_M^D}\Gamma\Pi_{S_M^D}-\Gamma)(\textbf{g}_1)\|_\infty=\max_{d\in{1,\dots,D}}\sup_{x\in[0,1]}|((\Pi_{S_M^D}\Gamma\Pi_{S_M^D}-\Gamma)(\textbf{g}_1))_d(x)|\]
    Thus we need a control on $\max_{d\in{1,\dots,D}}\sup_{x\in[0,1]}|((\Pi_{S_M^D}\Gamma\Pi_{S_M^D}-\Gamma)(\textbf{g}_1))_d(x)|$. Since $\textbf{g}_1$ is $s-$sparse using Lemma , for any $x\in[0,1]$:
 
    \begin{align*}
    |((\Pi_{S_M^D}\Gamma\Pi_{S_M^D}-\Gamma)(\textbf{g}_1)(x))_d|&=|\sum_{d'=1}^D\int_0^1 (\Pi_{(S_M^D)^2}K_{d,d'}(x,t)-K_{d,d'}(x,t))g_{1,d'}(t)dt|\\&=|\sum_{d'=1}^D\int_0^1 (\Pi_{(S_M^D)^2}K_{d,d'}(x,t)-K_{d,d'}(x,t))g_{1,d'}(t)dt|\\&\leq \sum_{d'=1}^D\int_0^1 |\Pi_{(S_M^D)^2}K_{d,d'}(x,t)-K_{d,d'}(x,t)||g_{1,d'}(t)|dt\\&\leq\frac{4\sqrt{L\|K\|_\infty}}{M^\alpha}\sum_{d'=1}^D\int_0^1 |g_{1,d'}(t)|dt=\frac{4\sqrt{L\|K\|_\infty}}{M^\alpha}\|\textbf{g}_1\|_1\\&\leq \frac{4\sqrt{sL\|K\|_\infty}\|\textbf{g}_1\|_\mathbb{H}}{M^\alpha},
\end{align*}
since $\|\textbf{g}_1\|_1\leq \sqrt{s}\|\textbf{g}_1\|_\mathbb{H}$.
    \item Using Lemma \ref{lemma:control:sup:norme:1} we have $\|\Pi_{S_M^D}(\textbf{g}_1)\|_\infty$:

\begin{align*}
    \|\Pi_{S_M^D}(\textbf{g}_1)\|_\infty &=  \max_{d\in{1,\dots,D}}\sup_{t\in[0,1]} |\sum_{\lambda\in\Lambda_M}\langle\phi_\lambda,g_{1,d}\rangle\phi_{\lambda}(t)|\\&=\max_{d\in{1,\dots,D}}\max_{\lambda\in\Lambda_M} \sqrt{M}|\langle\phi_\lambda,g_{1,d}\rangle|\\&\leq \|\textbf{g}_1\|_\infty\sqrt{M}\|\phi_\lambda\|_1\\&=\|\textbf{g}_1\|_\infty.
\end{align*}
    \item For the third term of the inequality $\|T^{(K)}(\textbf{g}_1)\|_\infty$, we have:

\begin{align*}
    \|T^{(K)}(\textbf{g}_1)\|_\infty&=\max_{d\in{1,\dots,D}}\sup_{s\in[0,1]}|\sum_{d'=1}^D\int_{0}^1R_{d,d'}^{(K)}(s,t)g_{1,d'}(t)dt|\\&=\max_{d\in{1,\dots,D}}\sup_{s\in[0,1]}|\sum_{d'=1}^D\sum_{\lambda,\lambda'\in\Lambda_M}\left(R_{d,d',\lambda,\lambda'}^{(K)}\right)\phi_\lambda(s)\phi_{\lambda'}(t)g_{1,d'}(t)dt|\\ &=\max_{(d,d')\in\{1,\dots,D\}^2,(\lambda,\lambda')\in\Lambda_M^2}|R^{(K)}_{d,d',\lambda,\lambda'}|\sup_{s\in[0,1]}|\sum_{d'=1}^D\sum_{\lambda,\lambda'\in\Lambda_M}\langle\phi_{\lambda'},g_{1,d'}\rangle\phi_{\lambda}(s)|\end{align*}
    Using Lemmas \ref{technical lemma norm sup K} and \ref{lemma:control:sup:norme:1}, we deduce the following:
    
    \begin{align*}\|T^{(K)}(\textbf{g}_1)\|_\infty&\leq\frac{4\sqrt{\|K\|_\infty L}}{\alpha+1}M^{-1}p^{-\alpha}\max_{\lambda'\in\Lambda_M}|\sum_{d'=1}^D\sqrt{M}\sum_{\lambda\in\Lambda_M}\langle\phi_\lambda,g_{1,d'}\rangle|\\&\leq\frac{4\sqrt{\|K\|_\infty L}}{\alpha+1}M^{-1}p^{-\alpha}\sum_{d'=1}^DM\|g_{1,d'}\|_1\\&\leq\frac{4\sqrt{\|K\|_\infty L}}{\alpha+1}p^{-\alpha}\|\textbf{g}_{1}\|_1\\&\leq \frac{4\sqrt{s\|K\|_\infty L}}{\alpha+1}p^{-\alpha}\|\textbf{g}_{1}\|_\mathbb{H}.
\end{align*}

\item it comes from equation \eqref{Result on R^N} that $\|T^{(N)}(\textbf{g}_1)\|_\infty=0$.
\end{itemize}

This ends the proof of Lemma.
\end{proof}
{\bf Lemma~\ref{Lemma:multi:bias:oracle} }{ \it Let $\mathbb{P}_\textbf{Z}\in R_\alpha^{(D)}(L)$ we have for all $ \textbf{g}\in\mathbb{H}$

$$|\langle({\Gamma}_\phi-\Gamma)\textbf{g},\textbf{g}\rangle_\mathbb{H}|\leq \|\textbf{g}\|_\mathbb{H}^2 \Big(\frac{8D\sqrt{\|K\|_\infty L}}{(\alpha+1)M^{\alpha}}+\frac{\sigma^2}{p}\Big).$$}

\begin{proof}
Let $\textbf{g}\in\mathbb{H}$, we need to control $|\langle({\Gamma}_\phi-\Gamma)\textbf{g},\textbf{g}\rangle_\mathbb{H}|$, using equation \eqref{eq:dec:norme:op} we have 
\begin{align*}
    |\langle({\Gamma}_\phi-\Gamma)(\textbf{g}),\textbf{g}\rangle_\mathbb{H}|\leq |\langle(\Pi_{S_M^D}\Gamma\Pi_{S_M^D}-\Gamma)(\textbf{g}),\textbf{g}\rangle|+\frac{\sigma^2}{p}\|\textbf{g}\|_\mathbb{H}^2+|\langle T^{(K)}(\textbf{g}),\textbf{g}\rangle|+|\langle T^{(N)}(\textbf{g}),\textbf{g}\rangle|.
\end{align*}
We showed earlier that $T^{(N)}=0$ in the case of histogram. Thus we have $|\langle({\Gamma}_\phi-\Gamma)\textbf{g},\textbf{g}\rangle_\mathbb{H}|\leq |\langle\Pi_{S_M^D}\Gamma\Pi_{S_M^D}-\Gamma)(\textbf{g}),\textbf{g}\rangle|+\frac{\sigma^2}{p}\|\textbf{g}\|_\mathbb{H}+|\langle T^{(K)}\textbf{g},\textbf{g}\rangle|,$
we will upper bound the rest of the terms in what follows:

\begin{itemize}
    \item For the first term of the inequality, using equation \eqref{Result on Pi} we have 
    \begin{align*}
       | \langle(\Pi_{S_M^D}\Gamma\Pi_{S_M^D}-\Gamma)\textbf{g},\textbf{g}\rangle_\mathbb{H}|&=|\sum_{d,d'=1}^D\langle(\Pi_{S_M^D}\Gamma\Pi_{S_M^D}-\Gamma)_{d,d'}g_d,g_{d'}\rangle|\\&\leq\sum_{d,d'=1}^D\int_0^1\int_0^1|\Pi_{(S_M^D)^2}K_{d,d'}(x,t)-K_{d,d'}(x,t)|\\&\times|g_d(t)||g_{d'}(s)|dtds\\&\leq \|\Pi_{(S_M^D)^2}K_{d,d'}-K_{d,d'}\|_\infty\|\textbf{g}\|_1^2\\&\leq \|\Pi_{(S_M^D)^2}K_{d,d'}-K_{d,d'}\|_\infty D\|\textbf{g}\|_H^2
    \end{align*}
However, since we showed $\|\Pi_{(S_M^D)^2}K_{d,d'}-K_{d,d'}\|_\infty\leq\frac{4\sqrt{L\|K\|_\infty}}{\alpha+1}M^{-\alpha}$ in Lemma \ref{technical lemma norm sup K}, we have:
   \begin{align*}
       | \langle(\Pi_{S_M^D}\Gamma\Pi_{S_M^D}-\Gamma)\textbf{g},\textbf{g}\rangle_\mathbb{H}|&\leq\frac{4\sqrt{L\|K\|_\infty}}{\alpha+1}M^{-\alpha} D\|\textbf{g}\|_\mathbb{H}^2.
    \end{align*}
    \item For the second term of the inequality we have 
    \begin{align*}
         |\langle(T^{(K)})\textbf{g},\textbf{g}\rangle_\mathbb{H}|&\leq\|T^{(K)}(\textbf{g})\|_\infty\|\textbf{g}\|_1,\end{align*}
         using the same logic as Lemma \ref{Lemma bias infty norm}, we know that $\|T^{(K)}(\textbf{g})\|_\infty\leq\frac{4\sqrt{\|K\|_\infty L}}{\alpha+1}p^{-\alpha}$ we have:
         \begin{align*}|\langle(T^{(K)})\textbf{g},\textbf{g}\rangle_\mathbb{H}| &\leq\frac{4\sqrt{\|K\|_\infty L}}{\alpha+1}p^{-\alpha}\|\textbf{g}\|_1^2\\&\leq \frac{4D\sqrt{\|K\|_\infty L}}{\alpha+1}p^{-\alpha}\|\textbf{g}\|_\mathbb{H}^2\\&\leq \frac{4D\sqrt{\|K\|_\infty L}}{\alpha+1}M^{-\alpha}\|\textbf{g}\|_\mathbb{H}^2.
         \end{align*}
\end{itemize}
\end{proof}

\subsection{Preliminary results for the random upper bounds: Proof of     Lemmas \ref{Lemma concentration} and \ref{lemma:most:important} }\label{Proof random}
In this subsection, we will establish now upper bounds on random terms, namely $\|(\widehat{\Gamma}_\phi-\Gamma_\phi)(\textbf{g}_1)\|_\infty$ and $|\langle(\widehat{\Gamma}_\phi-\Gamma_\phi)(\textbf{g}),\textbf{g}\rangle_\mathbb{H}|$. We adapt to the functional multivariate setting the results of \cite{VAN}. We recall that  $\{\textbf{Y}_i\}_{i=1,\dots,n}$ are i.i.d vectors of size $pD$ containing all observations, i.e.for any $j\in\{1,\dots,pD\}$ denoting by $q$ and $r$ the quotient and the rest of the Euclidean division of $j$ by $D$ we have 

$$(\textbf{Y}_i)_j=Y_{i,r}(t_q).$$
\begin{itemize}
    \item We denote by $\Sigma$ the matrix of covariance of the random vector $\textbf{Y}_1$, i.e.\begin{equation}\label{eq:def:Sigma:multi}
        \Sigma:=\mathbb{E}[\textbf{Y}_1\textbf{Y}_1^T].
    \end{equation}
    Note that since $\textbf{Y}_1$ is assumed to be Gaussian it means that $\textbf{Y}_1\sim\mathcal{N}(0,\Sigma)$.
    \item We denote by $\phi$ the matrix of size $p\times M$ defined such that, for any $\lambda\in\Lambda_M$ and for any $h\in\{0,\dots,p-1\}$ we have 
    
    $$(\phi)_{h,\lambda}=\frac{\phi_{\lambda}(t_h)}{\sqrt{p}},$$
    and $\psi$ the block matrix such that:
    $$
\psi=\begin{pmatrix}
\phi & 0 & \dots & 0\\
0 & \phi & \ddots & \vdots\\ 
\vdots &\ddots&\ddots&  0\\
0 & \dots & 0 & \phi
\end{pmatrix}.
$$
Note that as defined we have $(\phi^T\phi)=\frac{\sum_{h=1}^p\phi_\lambda(t_h)\phi_{\lambda'}(t_h)}{p}=1_{\lambda=\lambda'}$ and $\psi^T\psi=I_{MD}$. Next, since for any $i\in\{1,\dots,n\},d\in\{1,\dots,D\}$ and $\lambda\in\Lambda_M$ we have $\widetilde{y}_{i,d,\lambda}=\frac{1}{p}\sum_{h=0}^{p-1}\phi_\lambda(t_h)Y_{i,d}(t_h)$, it implies that for all $i\in\{1,\dots,n\}$ we have $\widetilde{\textbf{y}}_i=\frac{1}{\sqrt{p}}\psi ^T\textbf{Y}_i$, and since $\textbf{Y}_i$'s are i.i.d Gaussian variables with covairance $\Sigma$, we have:
\begin{align}
    \forall i\in\{1,\dots,n\}\quad\widetilde{\textbf{y}}_i&\sim\mathcal{N}\Big(0,\frac{1}{p}\psi^T\Sigma\psi\Big).
\end{align}

\item Lastly, we introduce the matrix norm denoted $\|\bullet\|_2$, the trace $Tr(\bullet)$ and we compute an upper bound on $\|\Sigma\|_2$, let $M$ be a symmetric matrix of size $pD\times pD$, we assume it to be positive semi-definite , i.e., let $(\mu_j)_{j=1}^{pD}$ be its eigenvalues ordered such that $\mu_1>\mu_2>\dots>\mu_{pD}$, we know that $\mu_{pD}\geq 0$ :
\begin{align*}
    \|M\|_2&:=\sup_{\|u\|_{\ell_2}=1}\|M u\|_{\ell_2}=\mu_1,\\Tr(M)&:=\sum_{j=1}^{pD}\mu_j=\sum_{j=1}^{pD}M_{j,j}.
\end{align*}

First as defined it is straightforward to see that $$\Sigma=\mathbb{E}[\textbf{Y}_1\textbf{Y}_1^T]=\Sigma_Z+\sigma^2I_{pD},$$
where $\Sigma_Z$ is the covariance matrix of the vector containing all the $Z_d(t_h)$, i.e., for any $j,j'\in\{1,\dots,pD\}$ denoting by $q,q'$ and $r,r'$ the quotients and the rests of the Euclidean division of $j,j'$ by $D$ we have 

\[(\Sigma_Z)_{j,j'}:=\mathbb{E}[Z_{r}(t_q)Z_{r'}(t_{q'})].\]

Thus we have $\|\Sigma\|_2= \|\Sigma_Z\|_2+\sigma^2$. Finally note that :
\begin{align*}
    \frac{1}{p}\|\Sigma_Z\|_2&\leq\frac{1}{p}Tr(\Sigma_Z)\\&=\frac{1}{p}\sum_{d=1}^D\sum_{h=0}^{p-1}\mathbb{E}[Z_d(t_h)^2]\\&=\frac{1}{p}\sum_{d=1}^D\sum_{h=0}^{p-1} K_{d,d}(t_h,t_h).
\end{align*}

Note that $\frac{1}{p}\sum_{d=1}^D\sum_{h=1}^p K_{d,d}(t_h,t_h)$ is the Riemann sum associated with $\sum_{d=1}^D \int_0^1 K_{d,d}(t,t)dt$, it converges when $ p\rightarrow\infty$ to $\sum_{d=1}^D \int_0^1 K_{d,d}(t,t)dt$, which means that $\frac{1}{p}\|\Sigma_Z\|_2$ is of the order of a constant. In the rest of the section we denote by $\widetilde{\mu}_1$ the largest eigenvalue of $\frac{1}{p}\|\Sigma_Z\|_2$. Thus we have:
    \begin{equation}
        \|\Sigma\|_2=\widetilde{\mu}_1p+\sigma^2\label{norm of Sigma}.
    \end{equation}

\end{itemize}

\begin{remark}
Let $\textbf{f}\in\mathbb{H}$, as $\Gamma_\phi$ is defined $\langle{\Gamma}_\phi({\textbf{f}}),{\textbf{f}}\rangle_\mathbb{H}$ can be written as:
 \begin{align*}
    &\langle{\Gamma}_\phi({\textbf{f}}),{\textbf{f}}\rangle_\mathbb{H}\\&=\sum_{d,d'=1}^D\sum_{\lambda,\lambda'\in\lambda_M}\Big(\frac{1}{p^2}\sum_{h,h'=1}^p\phi_\lambda(t_h)\phi_{\lambda'}(t_{h'})K(t_h,t_{h'})\Big)\\&\times\langle\phi_{\lambda},f_{d}\rangle\langle\phi_{\lambda'},f_{d'}\rangle\\&=\frac{1}{p}\langle\psi(\Sigma)\psi^T a_{\textbf{f}},a_{\textbf{f}})\rangle_{\ell_2},
\end{align*}
where $(a_{\textbf{f}})_{\lambda,d}:=\langle\phi_\lambda,f_{d}\rangle$, note that $\|a_\textbf{f}\|_{\ell_2}^2=\sum_{\lambda\in\Lambda_M}\sum_{d=1}^D\langle\phi_\lambda,f_{d}\rangle^2=\|\Pi_{S_M^D}(\textbf{f})\|_\mathbb{H}^2= 1$, thus :
\begin{align*}
    \sup_{\|\textbf{f}\|_\mathbb{H}=1}\langle{\Gamma}_\phi({\textbf{f}}),{\textbf{f}}\rangle_\mathbb{H}=\sup_{\|\textbf{f}\|_\mathbb{H}=1}\frac{1}{p}\langle\psi(\Sigma)\psi^T a_{\textbf{f}},a_{\textbf{f}})\rangle_{\ell_2}= \frac{\|\psi^T(\Sigma)\psi\|_2}{p}.
\end{align*}    
However, since $\psi^T\psi=I_{MD}$, it implies that  $\|\psi^T(\Sigma)\psi\|_2=\|\Sigma\|_2$ which is equal to $\widetilde{\mu}_1p+\sigma^2$ thus
  \begin{align*}  
     \sup_{\|\textbf{f}\|_\mathbb{H}=1}\langle{\Gamma}_\phi({\textbf{f}}),{\textbf{f}}\rangle_\mathbb{H} =\widetilde{\mu}_1+\frac{\sigma^2}{p}
\end{align*}
Note that Lemma \ref{Lemma:multi:bias:oracle} implies the following:

\begin{equation}\label{eq:se:debarasser:de mu_1}
    |\widetilde{\mu}_1+\frac{\sigma^2}{p}-{\mu}_1|\leq \sup_{\|\textbf{f}\|_\mathbb{H}=1}|\langle({\Gamma}_\phi-\Gamma)\textbf{f},\textbf{f}\rangle_\mathbb{H}|\leq  \Big(\frac{8D\sqrt{\|K\|_\infty L}}{(\alpha+1)M^{\alpha}}+\frac{\sigma^2}{p}\Big).
    \end{equation}
\end{remark}
We use the following Bernstein concentration inequality.
\begin{theorem}\label{thm:bernstein}
Let $X_1,\dots,X_n$ be centered independent real-valued
random variables. Assume that there exist positive numbers $v$ and $c$ such that $\mathbb{E}[X_1^2]\leq v$ and 

\begin{equation*}
    \mathbb{E}[|X_1|^k]\leq \frac{k!}{2}vc^{k-2}\quad\text{for all integers $k\geq 3$,}
\end{equation*}
then for all $t>0$,

\begin{equation}
    \mathbb{P}(\frac{1}{n}|\sum_{i=1}^nX_i|\geq \sqrt{2vt}+ct)\leq 2e^{-nt}.
\end{equation}

\end{theorem}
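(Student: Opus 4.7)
The plan is to apply the classical Cramér--Chernoff method, reducing the two-sided bound to two one-sided bounds by a union argument and then bounding the Laplace transform via the Bernstein moment condition. Writing $S_n = \sum_{i=1}^n X_i$, one first observes
$$\mathbb{P}\bigl(|S_n| \geq n(\sqrt{2vt}+ct)\bigr) \leq \mathbb{P}\bigl(S_n \geq n(\sqrt{2vt}+ct)\bigr) + \mathbb{P}\bigl(-S_n \geq n(\sqrt{2vt}+ct)\bigr),$$
and since the variables $-X_i$ satisfy exactly the same moment hypotheses as the $X_i$'s, it suffices to bound the first summand by $e^{-nt}$; the factor $2$ in the statement comes from this union bound.

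The second step is the moment-generating-function estimate. For $0 < \lambda < 1/c$, Taylor-expanding $e^{\lambda X_1}$, using centering, $\mathbb{E}[X_1^2]\leq v$, and the Bernstein condition $\mathbb{E}[|X_1|^k]\leq (k!/2)v c^{k-2}$ for $k\geq 3$, the higher-order terms are dominated by a geometric series:
$$\mathbb{E}\bigl[e^{\lambda X_1}\bigr] \leq 1 + \frac{v\lambda^2}{2}\sum_{k\geq 0}(c\lambda)^k = 1 + \frac{v\lambda^2}{2(1-c\lambda)} \leq \exp\!\left(\frac{v\lambda^2}{2(1-c\lambda)}\right).$$
By independence and Markov's inequality, for every $s>0$,
$$\mathbb{P}(S_n \geq s) \leq \exp\!\left(-\lambda s + \frac{nv\lambda^2}{2(1-c\lambda)}\right).$$

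Third, one specializes $s = n(\sqrt{2vt}+ct)$ and selects the Legendre-optimal $\lambda$. Setting $\alpha = \sqrt{2t/v}$, so that $\sqrt{2vt} = v\alpha$ and $v\alpha^2 = 2t$, the natural choice is $\lambda^* = \alpha/(1+c\alpha) \in (0, 1/c)$. A direct computation, resting on the key algebraic identity $1 + 2cs/(nv) = (1+c\alpha)^2$, yields $1 - c\lambda^* = 1/(1+c\alpha)$, $\lambda^* s = nt(2+c\alpha)/(1+c\alpha)$, and $nv(\lambda^*)^2/(2(1-c\lambda^*)) = nt/(1+c\alpha)$, so the Chernoff exponent collapses to exactly $-nt$. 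This gives $\mathbb{P}(S_n \geq s) \leq e^{-nt}$, and combined with the symmetric bound on $-S_n$ produces the required $2e^{-nt}$.

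The main obstacle is pinpointing the correct $\lambda^*$: a naive choice such as $\lambda = \sqrt{2t/(nv)}$ (handling the sub-Gaussian regime alone) only yields a weaker bound with an additional multiplicative constant, and it is the identity $1+2cs/(nv) = (1+c\sqrt{2t/v})^2$ that makes the deviation $\sqrt{2vt}+ct$ exactly the critical scale for the Bernstein moment condition. Everything else reduces to routine power-series manipulations and is independent of the paper's FDA framework.
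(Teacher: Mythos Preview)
Your proof is correct and follows the standard Cram\'er--Chernoff approach: bounding the moment generating function via the geometric series enabled by the Bernstein moment condition, then optimizing over $\lambda$. The computations with $\lambda^* = \alpha/(1+c\alpha)$ are accurate and the exponent indeed collapses to $-nt$.

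The paper, however, does not supply its own proof of this theorem: it simply cites \cite{massart}, Section~2.8, where exactly this argument (MGF bound plus Legendre transform optimization) is carried out. So your proposal is not so much a different route as a self-contained write-up of the very proof the paper outsources. One small remark: the theorem as stated in the paper only imposes the moment hypotheses on $X_1$, while your proof (and any proof) needs the same bounds for every $X_i$ to get the factor $n$ in the exponent; this is a harmless sloppiness in the paper's statement (the variables are i.i.d.\ in the application) and you implicitly read it the intended way.
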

The proof is provided in \cite{massart} Section 2.8. We recall\\
{\bf Lemma~\ref{Lemma concentration} }{ \it Assuming the observations are Gaussian, we have with probability at least $1-\frac{2}{MD}$

\begin{align*}
    \|(\widehat{{\Gamma}}_\phi-{\Gamma}_\phi)(\textbf{g}_1)\|_\infty&\leq 4\|\textbf{g}_1\|_\mathbb{H}\sqrt{(\widetilde{\mu}_1+\frac{\sigma^2}{p})(\|K\|_\infty+\frac{\sigma^2}{p})}[4\sqrt{\frac{\log(DM)}{n}}+\frac{\log(DM)}{n}]\\&= \|\textbf{g}_1\|_\mathbb{H}\lambda_1,
\end{align*}
where $\widetilde{\mu}_1+\frac{\sigma^2}{p}$ is the largest eigenvalue of $\frac{\mathbb{E}[\textbf{Y}_1^T\textbf{Y}_1]}{p}$.
}

\subsubsection{Proof of Lemma \ref{Lemma concentration}}
The proof is based the following claims:
\begin{itemize}
    \item We first show the following equality:
    $$\|(\widehat{{\Gamma}}_\phi-{\Gamma}_\phi)(\textbf{g}_1)\|_\infty=\sup_{\lambda\in\Lambda_M,d'\in\{1,\dots,D\}}|\sum_{d=1}^D\int_0^1(\widehat{K}_\phi(s,\frac{\lambda}{M})-K_\phi(s,\frac{\lambda}{M}))_{d',d}g_{1,d}(s)ds|.$$
    \item We show that $\sum_{d=1}^D\int_0^1(\widehat{K}_\phi(s,\frac{\lambda}{M}))_{d',d}g_{1,d}(s)ds$ can be written as a function of the $\widetilde{\textbf{y}}_i$'s namely $\frac{1}{n}\sum_{i=1}^nF(\widetilde{y}_{i,d'},\widetilde{\textbf{y}}_i)$ (where $F$ is specified in the proof) and compute its moments.
    \item We compute an upper bound on $|\sum_{d=1}^D\int_0^1(\widehat{K}_\phi(s,\frac{\lambda}{M})-K_\phi(s,\frac{\lambda}{M}))_{d',d}g_{1,d}(s)ds|$ using Theorem \ref{thm:bernstein} and we conclude by using a union bound to compute an upper bound on $\|(\widehat{{\Gamma}}_\phi-{\Gamma}_\phi)(\textbf{g}_1)\|_\infty$. 
\end{itemize}
\paragraph{Proof of the claims}
\begin{itemize}
    \item  Recall that we want to establish an upper bound on $\|(\widehat{{\Gamma}}_\phi-{\Gamma}_\phi)(\textbf{g}_1)\|_\infty$, we explicit its form:
    \[\|(\widehat{{\Gamma}}_\phi-{\Gamma}_\phi)(\textbf{g}_1)\|_\infty=\sup_{t\in[0,1],d'\in\{1,\dots,D\}}|\sum_{d=1}^D\int_0^1(\widehat{K}_\phi(s,t)-K_\phi(s,t))_{d',d}g_{1,d}(s)ds|.\]
    Before proceeding note that, as defined $K_\phi$ and $\widehat{K}_\phi$ are both piece-wise constant, i.e., for any $\lambda\in\Lambda_M$ and $t\in [\frac{\lambda}{M},\frac{\lambda+1}{M})$ we have
    \begin{align*}
        &|\sum_{d=1}^D\int_0^1(\widehat{K}_\phi(s,t)-K_\phi(s,t))_{d',d}g_{1,d}(s)ds|\\&=|\sum_{d=1}^D\int_0^1(\widehat{K}_\phi(s,\frac{\lambda}{M})-K_\phi(s,\frac{\lambda}{M}))_{d',d}g_{1,d}(s)ds|.
    \end{align*}
    Thus the supremum over all values of $t\in[0,1]$ is equal to the supremum taken over all values $\{t=\frac{\lambda}{M}\}_{\lambda\in\Lambda_M}$, which implies that:
    \[\|(\widehat{{\Gamma}}_\phi-{\Gamma}_\phi)(\textbf{g}_1)\|_\infty=\sup_{\lambda\in\Lambda_M,d'\in\{1,\dots,D\}}|\sum_{d=1}^D\int_0^1(\widehat{K}_\phi(s,\frac{\lambda}{M})-K_\phi(s,\frac{\lambda}{M}))_{d',d}g_{1,d}(s)ds|.\]
    \item   We recall the definition of $\widehat{K}_\phi$, let $\lambda_0\in\Lambda_M$ and $(d,d')\in\{1,\dots,D\}^2$:
    \begin{align*}
        (\widehat{ K}_\phi(s,\frac{\lambda_0}{M}))_{d',d}&=(\frac1n\sum_{i=1}^n  \widetilde{\textbf{Y}}_i(\frac{\lambda_0}{M})^T \widetilde{\textbf{Y}}_i(s))_{d',d}\\&=\frac{1}{n}\sum_{i=1}^n\Big(\sum_{(\lambda,\lambda')\in\Lambda_M^2}\widetilde{y}_{i,d,\lambda}\phi_{\lambda}(s)\widetilde{y}_{i,d',\lambda'}\phi_{\lambda'}(\frac{\lambda_0}{M})\Big)\\&=\frac{1}{n}\sum_{i=1}^n\Big(\sum_{\lambda\in\Lambda_M}\widetilde{y}_{i,d,\lambda}\phi_{\lambda}(s)\widetilde{y}_{i,d',\lambda_0}\sqrt{M}\Big),
    \end{align*}
    which implies that for $\lambda_0\in\Lambda_M$ and $d'\in\{1,\dots,D\}$:
        \begin{align*}
     \sum_{d=1}^D\int_0^1(\widehat{K}_\phi(s,\frac{\lambda}{M}))_{d',d}g_{1,d}(s)ds&=\sum_{d=1}^D\int_0^1\frac{1}{n}\sum_{i=1}^n\Big(\sum_{\lambda\in\Lambda_M}\widetilde{y}_{i,d,\lambda}\phi_{\lambda}(s)\widetilde{y}_{i,d',\lambda_0}\sqrt{M}\Big)g_{1,d}(s)ds\\&=\sum_{d=1}^D\frac{1}{n}\sum_{i=1}^n\sum_{\lambda\in\Lambda_M}\widetilde{y}_{i,d',\lambda_0}\sqrt{M}\widetilde{y}_{i,d,\lambda}\langle\phi_{\lambda},g_{1,d}\rangle\\&=\frac{1}{n}\sum_{i=1}^n\widetilde{y}_{i,d',\lambda_0}\sqrt{M}\Big(\sum_{d=1}^D\sum_{\lambda\in\Lambda_M}\widetilde{y}_{i,d,\lambda_0}\langle\phi_{\lambda},g_{1,d}\rangle\Big).
  \end{align*}
    Denoting by $I_{\Lambda_M}(\textbf{g}_1)$ the vector of size $MD$ containing all the values $(\langle\phi_{\lambda},g_{1,d}\rangle)_{\lambda\in\Lambda_M,d=1,\dots,D}$, i.e., let $j\in\{1,\dots,MD\}$ and $r$ the rest of the euclidean division of $j$ by $D$ and $q$ the quotient:
    \[(I_{\Lambda_M}(\textbf{g}_1))_j=\langle\phi_{q},g_{1,r}\rangle.\]
    As defined $I_{\Lambda_M}(\textbf{g}_1)$ has the following properties:
    \begin{itemize}
        \item We have the following upper bound on $ \|I_{\Lambda_M}(\textbf{g}_1)\|_{\ell_2}^2$: \begin{align*}
        \|I_{\Lambda_M}(\textbf{g}_1)\|_{\ell_2}^2=\sum_{\lambda\in\Lambda_M}\sum_{d=1}^D\langle\phi_{\lambda},g_{1,d}\rangle^2=\|\Pi_{S_M^D}(\textbf{g}_1)\|_\mathbb{H}^2\leq \|\textbf{g}_1\|_\mathbb{H}^2,
    \end{align*}
        \item Since $\widetilde{\textbf{y}}_1\sim\mathcal{N}(0,\frac{1}{p}\psi^T\Sigma\psi)$ and $\langle I_{\Lambda_M}(\textbf{g}_1),\widetilde{\textbf{y}}_i\rangle_{\ell_2}=\sum_{d=1}^D\sum_{\lambda\in\Lambda_M}\widetilde{y}_{i,d,\lambda}\langle\phi_{\lambda},g_{1,d}\rangle$ it implies that: 
        $$\langle I_{\Lambda_M}(\textbf{g}_1),\widetilde{\textbf{y}}_i\rangle_{\ell_2}\sim\mathcal{N}\Big(0,\frac{1}{p}\langle\psi I_{\Lambda_M}(\textbf{g}_1)\Sigma,\psi I_{\Lambda_M}(\textbf{g}_1)\rangle_{\ell_2}\Big).$$
        
    \end{itemize}
    Thus,
       \begin{align*}
          &\sum_{d=1}^D\int_0^1(\widehat{K}_\phi(s,\frac{\lambda}{M})-K_\phi(s,\frac{\lambda}{M}))_{d',d}g_{1,d}(s)ds\\&=\frac{1}{n}\sum_{i=1}^n\widetilde{y}_{i,d',\lambda_0}\sqrt{M}\underbrace{\Big(\sum_{d=1}^D\sum_{\lambda\in\Lambda_M}\widetilde{y}_{i,d,\lambda}\langle\phi_{\lambda},g_{1,d}\rangle\Big)}_{\langle I_{\Lambda_M}(\textbf{g}_1),\widetilde{\textbf{y}}_i\rangle_{\ell_2}}-\mathbb{E}[\widetilde{y}_{1,d',\lambda_0}\sqrt{M}\underbrace{\Big(\sum_{d=1}^D\sum_{\lambda\in\Lambda_M}\widetilde{y}_{1,d,\lambda}\langle\phi_{\lambda},g_{1,d}\rangle}_{\langle I_{\Lambda_M}(\textbf{g}_1),\widetilde{\textbf{y}}_i\rangle_{\ell_2}}\Big)]\\&=\frac{1}{n}\sum_{i=1}^n\langle I_{\Lambda_M}(\textbf{g}_1),\widetilde{\textbf{y}}_i\rangle_{\ell_2}\sqrt{M}\widetilde{y}_{i,d',\lambda_0}-\mathbb{E}[\langle I_{\Lambda_M}(\textbf{g}_1),\widetilde{\textbf{y}}_1\rangle_{\ell_2}\sqrt{M}\widetilde{y}_{1,d',\lambda_0}].
       \end{align*}
     In the sequel, we compute the moments of $\langle I_{\Lambda_M}(\textbf{g}_1),\widetilde{\textbf{y}}_1\rangle_{\ell_2}\sqrt{M}\widetilde{y}_{1,d',\lambda_0}$. Let $k\in\mathbb{N}^*$ we have:
     $$\mathbb{E}[|\langle I_{\Lambda_M}(\textbf{g}_1),\widetilde{\textbf{y}}_1\rangle_{\ell_2}\sqrt{M}\widetilde{y}_{1,d',\lambda_0}|^k]\leq\sqrt{\mathbb{E}[|\langle I_{\Lambda_M}(\textbf{g}_1),\widetilde{\textbf{y}}_1\rangle_{\ell_2}|^{2k}]\mathbb{E}[|\sqrt{M}\widetilde{y}_{1,d',\lambda_0}|^{2k}]}.$$
     Furthermore, since both variables are Gaussian, we know that:
     
     $$\mathbb{E}[|\langle I_{\Lambda_M}(\textbf{g}_1),\widetilde{\textbf{y}}_1\rangle_{\ell_2}|^{2k}]=\frac{(2k)!}{2^kk!}\mathbb{E}[\langle I_{\Lambda_M}(\textbf{g}_1),\widetilde{\textbf{y}}_1\rangle_{\ell_2}^2]^k,\quad \mathbb{E}[|\widetilde{y}_{1,d',\lambda_0}|^{2k}]=\frac{(2k)!}{2^kk!}\mathbb{E}[|\widetilde{y}_{1,d',\lambda_0}|^{2}]^k.$$
     Finally, since by \eqref{norm of Sigma} we showed that $\|\Sigma\|_2\leq p\widetilde{\mu}_1+\sigma^2$, it implies that:
     \begin{align*}
         \mathbb{E}[\langle I_{\Lambda_M}(\textbf{g}_1),\widetilde{\textbf{y}}_1\rangle_{\ell_2}^2]&\leq \frac{1}{p}\langle\psi I_{\Lambda_M}(\textbf{g}_1)\Sigma,\psi I_{\Lambda_M}(\textbf{g}_1)\rangle_{\ell_2}\\&\leq\frac{\|\Sigma\|_2}{p}\|\psi I_{\Lambda_M}(\textbf{g}_1)\|_{\ell_2}^2\\&\leq(\widetilde{\mu}_1+\frac{\sigma^2}{p})\| \textbf{g}_1\|_\mathbb{H}^2,
     \end{align*}
    and
    \begin{align*}
        \mathbb{E}[\widetilde{y}_{1,d',\lambda_0}^2]&=\frac{1}{p^2}\sum_{h,h'=0}^{p-1}\mathbb{E}[Y_{1,d'}(t_h)Y_{1,d'}(t_{h'})]\phi_{\lambda_0}(t_h)\phi_{\lambda_0}(t_{h'})\\&\leq (\|K\|_\infty+\sigma^2)\frac{M}{p^2}\sum_{h,h'=0}^{p-1}1_{t_{h'}\in I_{\lambda_0}}1_{t_h\in I_{\lambda_0}} \\&=\frac{\|K\|_\infty+\sigma^2}{M}.
    \end{align*}
Thus we have:
\begin{align*}
    \mathbb{E}[|\langle I_{\Lambda_M}(\textbf{g}_1),\widetilde{\textbf{y}}_i\rangle_{\ell_2}\sqrt{M}\widetilde{y}_{1,d}|^k]&\leq \frac{(2k)!}{2^kk!}\Big(\sqrt{\|\textbf{g}_1\|_\mathbb{H}^2(\widetilde{\mu}_1+\frac{\sigma^2}{p})(\|K\|_\infty+\sigma^2)}\Big)^k.
    \end{align*}
    Hence, using the fact that $(2k)!=2^kk!\prod_{j=1}^k(2j-1)\leq 2^k k!\prod_{j=1}^k(2j)=(2^k k!)^2$, we have
    \begin{align*}
    \mathbb{E}[|\langle I_{\Lambda_M}(\textbf{g}_1),\widetilde{\textbf{y}}_i\rangle_{\ell_2}\sqrt{M}\widetilde{y}_{1,d}|^k]&\leq 2^k k!\|\textbf{g}_1\|_\mathbb{H}^k\Big(\sqrt{(\widetilde{\mu}_1+\frac{\sigma^2}{p})(\|K\|_\infty+\sigma^2)}\Big)^k.
    \end{align*}
    Finally,we have.
    \begin{align*}
    \mathbb{E}[|\langle I_{\Lambda_M}(\textbf{g}_1),\widetilde{\textbf{y}}_i\rangle_{\ell_2}\sqrt{M}\widetilde{y}_{1,d}|^k]
    \leq \frac{k!}{2}v^2c^{k-2},
\end{align*}
where $v=4\|\textbf{g}_1\|_\mathbb{H}\sqrt{(\widetilde{\mu}_1+\frac{\sigma^2}{p})(\|K\|_\infty+\sigma^2)}$ and $c=2\|\textbf{g}_1\|_\mathbb{H}\sqrt{(\widetilde{\mu}_1+\frac{\sigma^2}{p})(\|K\|_\infty+\sigma^2)}$.

\item Let $\lambda\in\Lambda_M$ and $d'\in\{1,\dots,D\}$, using Bernstein inequality of Theorem \ref{thm:bernstein} we have for any $t>0$
$$P\left(|\sum_{d=1}^D\int_0^1(\widehat{K}_\phi(s,\frac{\lambda}{M})-K_\phi(s,\frac{\lambda}{M}))_{d',d}g_{1,d}(s)ds|\geq \sqrt{2v^2t}+ct\right)\leq 2\exp(-nt),$$
and using the union bound we get for any $t>0$

$$P\left(\|(\widehat{{\Gamma}}_\phi-{\Gamma}_\phi)(\textbf{g}_1)\|_\infty\geq \sqrt{2v^2t}+ct\right)\leq \sum_{d=1}^D\sum_{\lambda\in\Lambda_M} 2\exp\left(-nt\right)=2MD\exp(-nt).$$
Taking $t=\frac{2\log(MD)}{n}$, we have with probability at least $1-\frac{2}{MD}$:
\begin{align*}
    \|(\widehat{{\Gamma}}_\phi-{\Gamma}_\phi)(\textbf{g}_1)\|_\infty&\leq [\sqrt{\frac{4v^2\log(DM)}{n}}+\frac{2c\log(DM)}{n}]\\&\leq 2c[4\sqrt{\frac{\log(DM)}{n}}+\frac{\log(DM)}{n}]
\end{align*}

\end{itemize}

\begin{definition}\label{def:subgaussian:design}
      We say that a vector $\textbf{Y}\in\mathbb{R}^{pD}$ is sub-Gaussian with parameter $\zeta$ if for all vectors $u\in\mathbb{R}^{pD}$ such that $\|u\|_{\ell^2}=1$, it holds
      
      \begin{equation}\label{eq:subgaussian:design}
          \mathbb{E}[\exp(\frac{\langle u,\textbf{Y}\rangle_{\ell_2}^2}{\zeta^2})]\leq 2. 
      \end{equation}
\end{definition}
\begin{lemma}\label{lemma:10}
Assuming $Y\in\mathbb{R}^{n\times pD}$ is a sub-Gaussian matrix, with parameter $\zeta$. Let $J:=\log(T)$, and 
\begin{align*}
    \lambda_0 = \sqrt{\frac{2\log(2pD)}{n}}
\end{align*}
Then with probability at least $1-2J\exp(-\log(2pD))$, it holds
\begin{align*}
    \forall \theta \text{ such that }, 1\leq\|\theta\|_1\leq T:
    |\theta^T(\frac{1}{n}\sum_{i=1}^nY_i^TY_i-\mathbb{E}[Y^TY]) \theta|\leq \mathcal{Q}(\|\theta\|_1^2,\zeta)\|\theta\|_{\ell_2}^2,
\end{align*}
where $\mathcal{Q}(x,\zeta):=4\times 27\zeta^2\big[3x^2\lambda_0^2+\sqrt{6}x\lambda_0\big]$.
\end{lemma}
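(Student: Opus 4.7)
The plan is to mimic closely the argument of \cite{VAN} (from which the statement is adapted), combining a Bernstein-type pointwise concentration for quadratic forms in sub-Gaussian vectors with a peeling argument over dyadic shells of the $\ell_1$-norm. The factor $J=\log(T)$ in the probability bound is the signature of such a peeling over $J$ shells $\{\theta:\|\theta\|_1\in[2^{j-1},2^j]\}$ for $j=0,\dots,J-1$.

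\smallskip

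First I would set up the pointwise concentration. Fix $\theta\in\mathbb{R}^{pD}$ with $\|\theta\|_{\ell_2}=\rho$. Writing $W=\frac1n\sum_{i=1}^nY_iY_i^T-\mathbb{E}[YY^T]$, one has $\theta^TW\theta=\frac1n\sum_{i=1}^n X_i$ with $X_i=\langle\theta,Y_i\rangle^2-\mathbb{E}\langle\theta,Y_i\rangle^2$. By Definition~\ref{def:subgaussian:design} applied to $u=\theta/\rho$, the variable $\langle\theta,Y_i\rangle$ is sub-Gaussian of parameter $\zeta\rho$, hence the standard moment inequality $\mathbb{E}\exp(\langle\theta,Y_i\rangle^2/(\zeta^2\rho^2))\leq2$ yields, after a routine expansion, a Bernstein moment bound $\mathbb{E}|X_i|^k\le \tfrac{k!}{2}v\,c^{k-2}$ with $v$ and $c$ proportional to $\zeta^4\rho^4$ and $\zeta^2\rho^2$ respectively. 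Applying Theorem~\ref{thm:bernstein} with $t=\log(2pD)$ yields, for every fixed $\theta$,
\[
|\theta^TW\theta|\leq c_1\zeta^2\|\theta\|_{\ell_2}^2\,\lambda_0+c_2\zeta^2\|\theta\|_{\ell_2}^2\,\lambda_0^2
\]
with probability at least $1-2/(2pD)$, using $\lambda_0=\sqrt{2\log(2pD)/n}$.

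\smallskip

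Second I would upgrade this pointwise bound into a uniform bound on the full $\ell_1$-shell $\{\theta:\|\theta\|_1\le R\}$. The $\ell_1$-ball being the convex hull of the $2pD$ signed axis vectors $\pm R e_j$, any $\theta$ with $\|\theta\|_1\le R$ can be decomposed/approximated by a sparse element through Maurey's empirical method; dually, the polarization identity $\theta^TW\theta=\sum_{j,k}\theta_j\theta_kW_{jk}$ allows us to control the quadratic form by $\|W\theta\|_\infty\|\theta\|_1$, with $\|W\theta\|_\infty$ bounded uniformly in $\theta$ via Bernstein plus a union bound over the $pD$ coordinates. The two scales interact to produce (after balancing the linear and quadratic deviation terms) a bound of the form
\[
\sup_{\|\theta\|_1\le R}|\theta^TW\theta|\leq C\zeta^2\|\theta\|_{\ell_2}^2\bigl[R^4\lambda_0^2+R^2\lambda_0\bigr],
\]
which is precisely the structure of $\mathcal{Q}(R^2,\zeta)\|\theta\|_{\ell_2}^2$.

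\smallskip

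Third I would apply the dyadic peeling: for $j=0,\dots,J-1$ with $J=\log(T)$, invoke the previous step on the shell $\|\theta\|_1\le 2^j$ with deviation probability $2\exp(-\log(2pD))$ each, and take a union bound over $j$ — producing the claimed probability $1-2J\exp(-\log(2pD))$. Within a given shell $\|\theta\|_1\in[2^{j-1},2^j]$ the $R^2$ and $R^4$ quantities are equivalent to $\|\theta\|_1^2$ and $\|\theta\|_1^4$ up to a factor of $4$ and $16$, which is absorbed into the universal constants $3$, $\sqrt 6$ and $108$ of $\mathcal{Q}$.

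\smallskip

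The main obstacle I anticipate is the second step: turning a pointwise Bernstein estimate into a uniform bound over the $\ell_1$-ball with the correct interplay between $\|\theta\|_1$ and $\|\theta\|_{\ell_2}$. Crude approaches using $|\theta^TW\theta|\le\|\theta\|_1^2\|W\|_\infty$ lose the crucial $\|\theta\|_{\ell_2}^2$ factor and cannot produce the stated form; one really has to combine the entrywise Bernstein control with a chaining/covering argument that exploits both norms simultaneously, which is exactly the technique used in \cite{VAN}. Once that is in place, the remaining steps (peeling, union bound, and rewriting in terms of $\mathcal{Q}$) are book-keeping.
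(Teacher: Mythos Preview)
The paper does not prove this lemma: immediately after stating it, the authors write ``This lemma is essentially Lemma 10 in \cite{VAN}, but we use only the first half of the result,'' and give no further argument. Your outline (pointwise Bernstein for sub-Gaussian quadratic forms, uniformization over an $\ell_1$-ball, dyadic peeling producing the $J=\log(T)$ factor) is exactly the strategy of that reference, and you correctly flag that the delicate point is retaining the $\|\theta\|_{\ell_2}^2$ factor in the uniform step; so your proposal is consistent with, and in fact more detailed than, what the paper provides.
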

This lemma is essentially Lemma 10 in \cite{VAN}, but we use only the first half of the result, i.e., only for $\theta\in\mathbb{R}^{pD},1\leq\|\theta\|_1\leq T$. We recall the statement of Lemma \ref{lemma:most:important}.\\
{\bf Lemma~\ref{lemma:most:important} }{ \it Let $J:=\log(T)$ and $\lambda_0=\sqrt{\frac{2\log(2pD)}{n}}$. Then with probability at least $1-2\frac{J+1}{pD}$, it holds for all $ \textbf{g}\in\mathcal{B}(\eta)  ,\|\textbf{g}\|_1\leq T$:
\begin{equation*}
 |\langle(\widehat{\Gamma}_\phi-\Gamma_\phi)(\textbf{g}),\textbf{g}\rangle_\mathbb{H}|\leq \lambda_1\|\textbf{g}\|_\mathbb{H}+\mathcal{Q}(\|\textbf{g}\|_1^2,3\sqrt{\widetilde{\mu}_1+\frac{\sigma^2}{p}})\|\textbf{g}\|_\mathbb{H}^2
\end{equation*}
}

\paragraph{Proof of the claims}
The proof is based on the application of Lemma \ref{lemma:10} to our situation, to do that we need to establish the following :
\begin{itemize}
\item As defined in Definition \ref{def:subgaussian:design}, \textbf{Y} is said to be sub-Gaussian with parameter $\zeta$ if 

$$\mathbb{E}\Big[\exp\big(\frac{\langle \textbf{Y},a\rangle_{\ell_2}^2}{\zeta^2}\big)\Big]\leq 2.$$
In the sequel, we will show that \textbf{Y} is sub-Gaussian with parameter $\zeta=3\sqrt{\widetilde{\mu}_1p+\sigma^2}$. First note that since $\textbf{Y}_1$ is Gaussian with covariance $\Sigma$ it implies that $\frac{\langle \textbf{Y}_1,a\rangle_{\ell_2}}{\zeta} \sim\mathcal{N}(0,\frac{\langle a,\Sigma a\rangle_{\ell_2}}{\zeta^2})$. We know that for any $\zeta>\frac{1}{2\sqrt{\langle a,\Sigma a\rangle_{\ell_2}}}$:

$$\mathbb{E}\Big[\exp\big(\frac{\langle \textbf{Y}_1,a\rangle_{\ell_2}^2}{\zeta^2}\big)\Big]= \frac{1}{\sqrt{1-2\frac{\sqrt{\langle a,\Sigma a\rangle_{\ell_2}}}{\zeta}}}.$$
As defined note that we already established that $\|\Sigma\|_2\leq p\widetilde{\mu}_1+\sigma^2$ in \eqref{norm of Sigma},
\begin{align*}
   \frac{\sqrt{\langle a,\Sigma a\rangle_{\ell_2}}}{\zeta}&\leq\frac{\sqrt{\|\Sigma\|_2\|a\|_2^2}}{\zeta}\\&\leq\frac{\sqrt{p\widetilde{\mu}_1+\sigma^2}}{\zeta}\\&\leq \frac{1}{3}. 
\end{align*}
Thus :
$$\mathbb{E}\Big[\exp\big(\frac{\langle \textbf{Y}_1,a\rangle_{\ell_2}^2}{\zeta^2}\big)\Big]= \frac{1}{\sqrt{1-2\frac{\sqrt{\langle a,\Sigma a\rangle_{\ell_2}}}{\zeta}}}\leq\sqrt{3}\leq2 .$$

    \item Recall the we are on the subset of $\mathbb{H}$ such that $\|\textbf{g}\|_1\leq 1$. Since we have $|(\widehat{\Gamma}_\phi-\Gamma_\phi)(\textbf{g}),\textbf{g}\rangle_\mathbb{H}|\leq \|(\widehat{\Gamma}_\phi-\Gamma_\phi)(\textbf{g})\|_\infty\|\textbf{g}\|_1$, and since by Lemma \ref{Lemma concentration}, we know that with probability at least $1-\frac{2}{M  D} $ we have $\|(\widehat{\Gamma}_\phi-\Gamma_\phi)(\textbf{g})\|_\infty\leq \lambda_1\|\textbf{g}\|_\mathbb{H}$. Thus:
    
    $$|(\widehat{\Gamma}_\phi-\Gamma_\phi)(\textbf{g}),\textbf{g}\rangle_\mathbb{H}|\leq \lambda_1\|\textbf{g}\|_\mathbb{H}\|\textbf{g}\|_1\leq\lambda_1\|\textbf{g}\|_\mathbb{H}.$$
    
    \item Note that :
    \begin{align*}
    &\langle(\widehat{{\Gamma}}_\phi-{\Gamma}_\phi){\textbf{g}},{\textbf{g}}\rangle_\mathbb{H}\\&=\sum_{d,d'=1}^D\sum_{\lambda,\lambda'\in\lambda_M}\Big(\frac{1}{p^2}\sum_{h,h'=1}^p\phi_\lambda(t_h)\phi_{\lambda'}(t_{h'})\big(\sum_{i=1}^n\frac{Y_{i,d}(t_h)Y_{i,d'}(t_{h'})}{n}-K(t_h,t_{h'})\big)\Big)\\&\times\langle\phi_{\lambda},g_{d}\rangle\langle\phi_{\lambda'},g_{d'}\rangle\\&=\frac{M}{p}\langle\psi^T(\widehat{\Sigma}-\Sigma)\psi a_\textbf{g},a_\textbf{g})\rangle_{\ell_2},
\end{align*}
where $(a_\textbf{g})_{\lambda,d}:=\frac{\langle\phi_\lambda,g_d\rangle}{\sqrt{M}}$. As defined, the vector $a_\textbf{g}$ has a the following properties :
\begin{align*}
    \|a_\textbf{g}\|_1&=\frac{1}{\sqrt{M}}\sum_{d=1}^D\sum_{\lambda\in\Lambda_M}|\langle\phi_\lambda,g_d\rangle|\\&=\frac{1}{\sqrt{M}}\sum_{d=1}^D\sum_{\lambda\in\Lambda_M}\sqrt{M}|\langle 1_{I_\lambda},g_d\rangle|\\&\leq \sum_{d=1}^D\sum_{\lambda\in\Lambda_M}\langle 1_{I_\lambda},|g_d|\rangle=\sum_{d=1}^D\|g_d\|_1=\|\textbf{g}\|_1,
\end{align*}
and
\begin{align*}
     \|a_\textbf{g}\|_{\ell_2}^2&=\frac{1}{M}\sum_{d=1}^D\sum_{\lambda\in\Lambda_M}\langle\phi_\lambda,g_d\rangle^2\\&=\frac{1}{M}\|\Pi_{S^D}(\textbf{g})\|_\mathbb{H}^2\\&\leq\frac{\|\textbf{g}\|_\mathbb{H}^2}{M}.
\end{align*}

Since $\|\textbf{g}\|_1\leq T$ it implies that $\|a_\textbf{g}\|_1\leq T$. Then Lemma \ref{lemma:10} gives the following upper bound. With probabilty at least $1-2\frac{\log(T)}{pD}$,
\begin{align*}
    \frac{M}{p}\langle\psi^T(\widehat{\Sigma}-\Sigma)\psi a_\textbf{g},a_\textbf{g})\rangle_{\ell_2}\leq \frac{M}{p}\mathcal{Q}(\|\psi a_\textbf{g}\|_1^2,3\sqrt{p\widetilde{\mu}_1+\sigma^2})\|\psi a_\textbf{g}\|_{\ell_2}^2.
\end{align*}

Since $\psi^T\psi=I_{MD}$ we know that $\|\psi\|_2=1$. To conclude note that $\|\psi a_\textbf{g}\|_{\ell_2}^2\leq\|\psi\|_2^2\|a_\textbf{g}\|_{\ell_2}^2\leq\|a_\textbf{g}\|_{\ell_2}^2\leq \frac{\|\textbf{g}\|_\mathbb{H}^2}{M}$, $\|\psi a_\textbf{g}\|_1^2\leq\|\psi\|_2^2\|a_\textbf{g}\|_1^2\leq\|a_\textbf{g}\|_1^2\leq\|\textbf{g}\|_1^2$ and the function $\mathcal{Q}(\cdot,\cdot)$ is increasing in its first variable thus 
  \begin{align*}
     |\langle(\widehat{{\Gamma}}_\phi-{\Gamma}_\phi){\textbf{g}},{\textbf{g}}\rangle_\mathbb{H}|&\leq \frac{M}{p}\mathcal{Q}(\|\textbf{g}\|_1^2,3\sqrt{p\widetilde{\mu}_1+\sigma^2})\frac{\|\textbf{g}\|_\mathbb{H}^2}{M}.
  \end{align*}   
  Finally note that $\frac{1}{p}\mathcal{Q}(\|\textbf{g}\|_1^2,3\sqrt{p\widetilde{\mu}_1+\sigma^2})=\mathcal{Q}(\|\textbf{g}\|_1^2,3\sqrt{\widetilde{\mu}_1+\frac{\sigma^2}{p}})$, we have with probability at least $1-2\frac{\log(T)}{pD}$:
       \begin{align*}
     |\langle(\widehat{{\Gamma}}_\phi-{\Gamma}_\phi){\textbf{g}},{\textbf{g}}\rangle_\mathbb{H}|&\leq \mathcal{Q}(\|\textbf{g}\|_1^2,3\sqrt{\widetilde{\mu}_1+\frac{\sigma^2}{p}})\|\textbf{g}\|_\mathbb{H}^2.
  \end{align*}   
\end{itemize}
Thus 
\begin{align*}
    &P(\exists \textbf{g}:\|\textbf{g}\|_1\leq T\quad |\langle(\widehat{{\Gamma}}_\phi-{\Gamma}_\phi){\textbf{g}},{\textbf{g}}\rangle_\mathbb{H}|\geq\lambda_1\|\textbf{g}\|_1+\mathcal{Q}(\|\textbf{g}\|_1^2,3\sqrt{\widetilde{\mu}_1+\frac{\sigma^2}{p}})\|\textbf{g}\|_\mathbb{H}^2)\\&\leq P(\exists \textbf{g}:\|\textbf{g}\|_1\leq 1\quad |\langle(\widehat{{\Gamma}}_\phi-{\Gamma}_\phi){\textbf{g}},{\textbf{g}}\rangle_\mathbb{H}|\geq\lambda_1\|\textbf{g}\|_1)\\&+ P(\exists \textbf{g}:1\leq\|\textbf{g}\|_1\leq T\quad |\langle(\widehat{{\Gamma}}_\phi-{\Gamma}_\phi){\textbf{g}},{\textbf{g}}\rangle_\mathbb{H}|\geq\mathcal{Q}(\|\textbf{g}\|_1^2,3\sqrt{\widetilde{\mu}_1+\frac{\sigma^2}{p}})\|\textbf{g}\|_\mathbb{H}^2)\\&\leq \frac{2\log(T)}{pD}+\frac{2}{MD}\leq 2\frac{\log(T)+1}{MD}
\end{align*}
which ends the proof of Lemma \ref{lemma:most:important}.

\subsection{Proof of lemma \ref{Lemma hermitian}}\label{proof:Lemma hermitian}
    The proof is inspired by Lemma 12.7 from \cite{bookvan}, we show here a functional counterpart of their result.
\paragraph{Proof}Proposition \ref{prop:differential:multi} gives
\begin{align*}
    \ddot{R}(\textbf{g})&=4(-\Gamma +\|\textbf{g}\|_\mathbb{H}^2I+2\textbf{g}\otimes \textbf{g})\\
               &=4(-\sum_{\ell\in\mathbb{N}^*}(\sqrt{\mu}_\ell)^2\textbf{f}_\ell\otimes\textbf{f}_\ell+\|\textbf{g}\|_\mathbb{H}^2\sum_{\ell\in{N}^*}\textbf{f}_\ell\otimes\textbf{f}_\ell+2\textbf{g}\otimes \textbf{g})\\
               &=4[(\|\textbf{g}\|_\mathbb{H}^2-(\sqrt{\mu}_1)^2)\textbf{f}_1\otimes\textbf{f}_1+\sum_{i\geq 2}(\|\textbf{g}\|_\mathbb{H}^2-(\sqrt{\mu}_\ell)^2)\textbf{f}_\ell\otimes\textbf{f}_\ell+2\textbf{g}\otimes \textbf{g}],
\end{align*} 
where $(\textbf{f}_\ell)_{\ell\in\mathbb{N}^*}$ are the eigenfunctions and $(\mu_\ell)_{\ell\in\mathbb{N}^*}$ the eigenvalues of $\Gamma$. Since by assumption $\|\textbf{g}-\textbf{g}_1\|_\mathbb{H} <\eta$ and $\|\textbf{g}_1\|_\mathbb{H}=\sqrt{\mu_1}\geq \rho \geq \frac{\eta}{8}$ it holds that :
$$\|\textbf{g}\|_\mathbb{H}\geq \|\textbf{g}_1\|_\mathbb{H}-\eta = \sqrt{\mu}_1-\eta,$$
it follows that :
$$\|\textbf{g}\|_\mathbb{H}^2\geq \mu_1-2\eta\sqrt{\mu}_1.$$

Hence for all $\ell\geq 2$ for $\rho=\sqrt{\mu_1}-\sqrt{\mu_2}$ we have $\sqrt{\mu}_1\geq \sqrt{\mu}_\ell+\rho$ which implies the following:
\begin{align*}
  \|\textbf{g}\|_\mathbb{H}^2-\mu_\ell&\geq \mu_1-\mu_\ell-2\eta\sqrt{\mu}_1\\&=(\sqrt{\mu_1}-\sqrt{\mu}_\ell)(\sqrt{\mu_1}+\sqrt{\mu_\ell})-2\eta\sqrt{\mu}_1\\&\geq (\rho-2\eta)\sqrt{\mu}_1.  
\end{align*}
Moreover, for all $\textbf{x}\in \mathbb{H}$
\begin{align*}
  \langle \textbf{x},\textbf{g}\otimes\textbf{g}(\textbf{x})\rangle_\mathbb{H}& =\langle \textbf{x},\textbf{g}\rangle_\mathbb{H}^2\\&= (\langle \textbf{x},\textbf{g}-\textbf{g}_1\rangle_\mathbb{H}+\langle \textbf{x},\textbf{g}_1\rangle_\mathbb{H})^2\\
   			&=(\langle \textbf{x},\textbf{g}-\textbf{g}_1\rangle_\mathbb{H}^2+\langle \textbf{x},\textbf{g}_1\rangle_\mathbb{H}^2+2\langle \textbf{x},\textbf{g}_1\rangle_\mathbb{H}\langle \textbf{x},\textbf{g}-\textbf{g}_1\rangle_\mathbb{H})\\&\geq\langle \textbf{x},\textbf{g}_1\rangle_\mathbb{H}^2+2\langle \textbf{x},\textbf{g}_1\rangle_\mathbb{H}\langle \textbf{x},\textbf{g}-\textbf{g}_1\rangle_\mathbb{H}\\&\geq\langle \textbf{x},\textbf{g}_1\rangle_\mathbb{H}^2-2\| \textbf{x}\|_\mathbb{H}\|\textbf{g}_1\|_\mathbb{H} \|\textbf{x}\|_\mathbb{H}\|\textbf{g}-\textbf{g}_1\|_\mathbb{H}\\
   			&\geq \langle \textbf{x},\textbf{g}_1\rangle_\mathbb{H}^2-2\sqrt{\mu}_1\eta\|\textbf{x}\|_\mathbb{H}^2,
\end{align*}
and 

\begin{align*}
    \langle \textbf{x},(\|\textbf{g}\|_\mathbb{H}^2-\mu_1\ )\textbf{f}_1\otimes\textbf{f}_1 (\textbf{x})\rangle_\mathbb{H}&=(\|\textbf{g}\|_\mathbb{H}^2-\mu_1)\langle\textbf{f}_1,\textbf{x}\rangle^2_\mathbb{H}\\&\geq  -2\eta\sqrt{\mu}_1\|\textbf{x}\|_\mathbb{H}^2.
\end{align*}
Since $\sqrt{\mu}_1\geq \rho\geq \rho-\eta$ we thus see that 

\begin{align*}
\langle \textbf{x},\ddot{R}(\textbf{g})\textbf{x}\rangle_\mathbb{H}&=4\langle \textbf{x},[(\|\textbf{g}\|_\mathbb{H}^2-(\sqrt{\mu}_1)^2)\textbf{f}_1\otimes\textbf{f}_1(\textbf{x})\rangle_\mathbb{H}\\&+\sum_{\ell\geq 2}\langle \textbf{x},(\|\textbf{g}\|_\mathbb{H}^2-(\sqrt{\mu}_\ell)^2)\textbf{f}_\ell\otimes\textbf{f}_\ell(\textbf{x})\rangle_\mathbb{H}+2\langle \textbf{g},\textbf{x}\rangle_\mathbb{H}^2]\\&\geq 4[-2\eta\sqrt{\mu}_1\|\textbf{x}\|_\mathbb{H}^2+\sqrt{\mu}_1(\rho-2\eta)\sum_{\ell\geq 2}\langle \textbf{x},\textbf{f}_\ell\otimes\textbf{f}_\ell(\textbf{x})\rangle_\mathbb{H}\\&+2(\langle \textbf{x},\textbf{g}_1\rangle_\mathbb{H})^2-4\sqrt{\mu}_1\eta\|\textbf{x}\|_\mathbb{H}^2].
\end{align*}
Recall that $(\langle \textbf{x},\textbf{g}_1\rangle_\mathbb{H})^2\geq\mu_1 (\langle \textbf{x},\textbf{f}_1\rangle_\mathbb{H})^2\geq \sqrt{\mu_1}(\rho-2\eta) (\langle \textbf{x},\textbf{f}_1\rangle_\mathbb{H})^2$ which implies the following: 
\begin{align*}
\langle \textbf{x},\ddot{R}(\textbf{g})\textbf{x}\rangle_\mathbb{H}&\geq4[-2\eta\sqrt{\mu}_1\|\textbf{x}\|_\mathbb{H}^2+\sqrt{\mu}_1(\rho-2\eta)\sum_{\ell\geq 1}\langle \textbf{x},\textbf{f}_\ell\rangle_\mathbb{H}^2-4\sqrt{\mu}_1\eta\|\textbf{x}\|_\mathbb{H}^2] \\&\geq 4\sqrt{\mu}_1[(\rho-8\eta)\|\textbf{x}\|_\mathbb{H}^2.
\end{align*}

\bibliographystyle{abbrvnat}
\bibliography{biblio}

\end{document}